\newtheorem{theorem}{Theorem}
\newtheorem{lemma}[theorem]{Lemma}
\newtheorem{definition}[theorem]{Definition}
\newcommand{\bea}{\begin{eqnarray}}
\newcommand{\eea}{\end{eqnarray}}
\newcommand{\fidel}{\mathscr{F}}
\newcommand{\tracedist}{\mathscr{T}}
\newcommand{\capU}{\mathcal{U}}
\newcommand{\bigo}[1]{\mathcal{O}\left( #1 \right)}
\newcommand{\trotterchan}[2]{{S_{#1}{\parens{ #2 }}}}
\newcommand{\qdchan}[1]{\mathcal{Q}{( #1 )}}
\newcommand{\tschan}[2]{\mathcal{T}^{#1}{( #2 )}}
\newcommand{\achan}[1]{\mathcal{T}_{A}^{2k}{( #1 )}}
\newcommand{\bchan}[1]{\mathcal{Q}_{B}{( #1 )}}
\newcommand{\hilbSpace}{\mathscr{H}}
\newcommand{\evolchan}[1]{\mathcal{U}{ ( #1 ) }}
\newcommand{\diamondnorm}[1]{\left| \left| #1 \right| \right|_\diamond}
\newcommand{\liouv}{\mathcal{L}}
\newcommand{\prodform}{S_{2k}}
\newcommand{\indone}[1]{\norm{ #1 }_{1\rightarrow 1}}
\newcommand{\indonedef}[1]{\max_{\rho : \norm{\rho}=1} \norm{ #1 }_1}
\newcommand{\Braket}[3]{\langle #1|#2|#3\rangle}
\newcommand{\parens}[1]{\left( #1 \right)}
\newcommand{\brackets}[1]{\left[ #1 \right]}
\newcommand{\set}[1]{\left\{ #1 \right\}}
\newcommand{\ceil}[1]{\left\lceil #1 \right\rceil}
\newcommand{\expect}[1]{\mathbb{E}\brackets{#1}}
\DeclareMathOperator\supp{supp}
\begin{document}

\title{Composite QDrift-Product Formulas for Quantum and Classical Simulations in Real and Imaginary Time} 

\author{Matthew Pocrnic}
\affiliation{Department of Physics, 60 Saint George St., University of Toronto, Toronto, Ontario,  M5S 1A7, Canada}

\author{Matthew Hagan}
\affiliation{Department of Physics, 60 Saint George St., University of Toronto, Toronto, Ontario,  M5S 1A7, Canada}
\author{Juan Carrasquilla}
\affiliation{Department of Physics, 60 Saint George St., University of Toronto, Toronto, Ontario,  M5S 1A7, Canada}
\affiliation{Vector Institute, MaRS Centre, Toronto, Ontario, M5G 1M1, Canada}
\affiliation{Department of Physics and Astronomy, University of Waterloo, Ontario, N2L 3G1, Canada}

\author{Dvira Segal}
\affiliation{Department of Chemistry and Centre for Quantum Information and Quantum Control,
University of Toronto, 80 Saint George St., Toronto, Ontario, M5S 3H6, Canada}
\affiliation{Department of Physics, 60 Saint George St., University of Toronto, Toronto, Ontario,  M5S 1A7, Canada}
\author{Nathan Wiebe}
\affiliation{Department of Computer Science, University of Toronto, Toronto ON, Canada}
\affiliation{Pacific Northwest National Laboratory, Richland Wa, USA}
\affiliation{Canadian Institute for Advanced Research, Toronto ON, Canada}

\date{June 27, 2023}


\begin{abstract}
Recent work has shown that it can be advantageous to implement a composite channel that partitions the Hamiltonian $H$ for a given simulation problem into subsets $A$ and $B$ such that $H=A+B$, where the terms in $A$ are simulated with a Trotter-Suzuki channel and the $B$ terms are randomly sampled via the QDrift algorithm. Here we show that this approach holds in imaginary time, making it a candidate classical algorithm for quantum Monte-Carlo calculations. We upper-bound the induced Schatten-$1 \to 1$ norm on both imaginary-time QDrift and Composite channels. Another recent result demonstrated that simulations of Hamiltonians containing geometrically-local interactions for systems defined on a finite lattices can be improved by decomposing $H$ into subsets that contain only terms supported on that subset of the lattice using a Lieb-Robinson argument. Here, we provide a quantum algorithm by unifying this result with the composite approach into ``local composite channels" and we upper bound the diamond distance. We provide exact numerical simulations of algorithmic cost by counting the number of gates of the form $e^{-iH_j t}$ and $e^{-H_j \beta}$ to meet a certain error tolerance $\epsilon$. We show constant factor advantages for a variety of interesting Hamiltonians, the maximum of which is a $\approx 20$ fold speedup that occurs for a simulation of Jellium.
    
\end{abstract}
\maketitle

\section{Introduction}
Quantum simulations of quantum systems,  suggested by Feynman in 1982 \cite{feynman2018simulating}, provides what is perhaps the most natural application for quantum computers with expected exponential advantages over the best classical algorithms \cite{lloyd1996universal}. The quantum simulation problem can be summarized as follows: Given a Hamiltonian $H$, a time $t$, and an error tolerance $\epsilon$, we wish to implement, on a quantum computer, an approximate function $f$ to the \textit{time evolution operator} $e^{-iHt}$ such that $\norm{e^{-iHt} - f(iHt)}_\kappa \leq \epsilon$ for some distance norm $\kappa$. The significance of this operator comes from quantum mechanics where $e^{-iHt}$ is the solution to the Schr\"{o}dinger Equation with a time-independent Hamiltonian: $i \partial_t \ket{\Psi} = H\ket{\Psi}$ in units where $\hbar=1$. Motivation for quantum simulation is drawn from the fact that it is a necessary subroutine for phase estimation, which allows one to learn the eigenvalues of $H$ \cite{kitaev1995quantum, nielsen2002quantum}. This of particular interest in quantum chemistry and materials \cite{reiher2017elucidating, bauer2020quantum, babbush2018low} where Hamiltonians are large and analytic solutions to the eigenvalue problem are not known. In addition, quantum simulation also allows for the study of dynamical quantum observables \cite{jordan2012quantum}. Classically, implementing the time evolution operator is a difficult problem due to how rapidly the matrices grow with respect to the size of the system of interest. As well, quantum effects such as interference are hard to simulate classically. This problem has received extensive attention over the last few decades, and many different solutions have been proposed. These solutions can in some sense be broken down into two families; those built on product formulas including Trotter-Suzuki \cite{lloyd1996universal, berry2007efficient, childs2021theory} formulas and QDrift \cite{campbell2019random, wan2022randomized, ouyang2020compilation, berry2020time} and those implementing more intricate Hamiltonian transformations such as Linear Combinations of Unitaries \cite{childs2012hamiltonian, berry2015simulating}, and Qubitization \cite{low2019hamiltonian, gilyen2019quantum, martyn2021grand}. Put simply, if we have a Hamiltonian that can be expressed as a sum of terms $H=\sum_j H_j$ with a time evolution operator $e^{-it \sum_j H_j}$, a product formula is an approximation to $e^{-iHt} \approx \prod_j e^{-iH_j t}$ which is not exact in general due to non-zero commutators in the Taylor series. In the absence of fault-tolerant quantum computers, powerful methods have also been discovered to study the properties of $H$ on classical computers, particularly in the field of Quantum Monte-Carlo \cite{suzuki1993quantum, foulkes2001quantum}. Here, the propagator of interest is often the time evolution operator which has been Wick transformed to imaginary time $e^{-\beta H}: it \rightarrow \beta$. This is implemented both as a projection method to study the ground state of $H$, and to study the partition function $Z = \Tr e^{-\beta H}$ from statistical physics where $\beta$ is the inverse temperature of the system. Trotter-Suzuki formulas found some of their first applications in Monte Carlo calculations prior to becoming a popular ancilla free quantum simulation algorithm. In this paper, we take the reverse approach and consider how Composite Trotter-QDrift simulation strategies \cite{hagan2022composite} from quantum computing carry over to imaginary time. We also extend these composite strategies to account for locality of $H$ based on Lieb-Robinson bounds \cite{haah2021quantum} in terms of their application for quantum simulation (in real time). \\

Within this manuscript, we provide three contributions. First, we show that QDrift and Composite QDrift-Trotter algorithms \cite{hagan2022composite} can be extended to imaginary time with the same asymptotic scaling as in real time. The basis of these algorithms is a partitioning of the Hamiltonian such that $H=A+B$, where we simulate the subset $A$ with Trotter-Suzuki formulas and $B$ with QDrift, to take advantage of the strengths of each. This promotes them to a candidate classical algorithm for quantum Monte-Carlo calculations. In doing so, we provide new bounds on QDrift and Trotter-Suzuki formulas in imaginary time. Next, we show that Composite algorithms can be applied to geometrically local lattices, by first decomposing the lattice into local subsets \cite{haah2021quantum}, and then partitioning the Hamiltonian terms supported on said subsets into localized QDrift and Trotter simulations. In comparison to standard Composite simulations, the Local Composite simulations recover the form of the original error bounds with an additional error term that is exponentially small. Finally, we provide a library to numerically evaluate the performance of Composite channels in both real and imaginary time, as well as the performance of the Local Composite quantum algorithm in real time. Here, we also numerically introduce new optimization schemes for conducting this partitioning based on heuristics and intuition that comes from the error bounds of each respective algorithm. Rather than computing bounds, we evaluate the exact exponential gate cost (the number of exponentials $e^{-iH_j t'}$ with time slice $t'$ that appear in the product formula) at simulation time $t$ for implementing an approximation to the time evolution operator. Equivalent counts are performed in imaginary time for $e^{-\beta H}$ at inverse temperature $\beta$. This is done for a series of interesting Hamiltonians which are drawn from quantum chemistry and spin systems, the former of which are generated using OpenFermion \cite{mcclean2020openfermion}. We find significant constant factor advantages using our composite approaches, with the maximum being a factor of 18.8 for a 6-site Jellium Hamiltonian. We make the observation that composite methods seem to work better for Hamiltonians containing a large number of terms, and for those with sharply peaked spectral norm distributions. \\

The paper is organized as follows: We begin with detailed preliminary material in Section \ref{sec:prelim} that introduces the algorithms in question as well as some details regarding how they are analyzed analytically and numerically. Here we also provide some background on the Lieb-Robinson bound and define geometrically-local Hamiltonians for which they apply. In Section \ref{sec:analysis}, we show how to extend composite channels to imaginary time and we provide bounds on the error for said channels. In doing so, we show that the QDrift algorithm holds in imaginary time and that the same $\epsilon$ scaling can be achieved, suggesting it can also be used effectively for classical computation. We also provide a quantum algorithm that allows us to include locality as means for partitioning in the ``local composite" channel, and show that doing so introduces exponentially-small errors. In Section \ref{sec:numerical_sim}, we outline the details of our numerical simulations and describe the general features of our library. We then provide plots of the gate cost, or the number of exponential operators needed to meet a chosen error tolerance. We conclude in Sec. \ref{sec:summary} with a discussion and consider open questions. 

\section{Preliminaries}
\label{sec:prelim}

In this section, we introduce the general physical setup, notation and simulation approaches that will be used throughout the paper. We then review and summarize the algorithmic cost of 
the Trotter-Suzuki and the QDrift product formulas of real time time evolution, as well as their hybrid approach, which was introduced in Ref. \citenum{hagan2022composite}.

\subsection{General Notation and Assumptions}

Before introducing the simulation algorithms, it is important to outline the notation that will be used throughout the paper, as well as the assumptions that remain relevant for calculations and proofs. The following assumptions will remain valid unless otherwise stated. 
We assume that Hamiltonian operators $H$ are time-independent, bounded, and that they live in a finite $2^n$ dimensional Hilbert space $\hilbSpace$. 
We further assume that these operators can be written as a finite sum of terms, $H = \sum_{i=1}^L H_i$ with each $H_i \in \mathbb{C}^{{2^n \times 2^n}}$. 
Later, in Section \ref{sec:local_composite},
we will discuss algorithms 
for geometrically-local Hamiltonians in which case each term in this sum will only be supported on a finite subset of $\hilbSpace$; we save those details for later. 
Note that we can normalize the terms in the above sum by factoring out their respective spectral norms (Schatten infinity norm on matrices), 
$h_i = ||H_i||$ such that $H = \sum_i^L h_i H_i$. We also impose $h_i \geq 0$  by pulling any phase factors into $H_i$; this can always be done without loss of generality. The Schatten p-norms will be a useful concept throughout, and they are defined as $\norm{H}_p = \parens{\sum_i s_i^p}^{\frac{1}{p}}$, where $s_i$ are the singular values of the matrix $H$. From this definition, we observe that the Schatten infinity norm coincides with the spectral norm. Other norms will be discussed throughout the paper. Wherever a norm $\norm{\cdot}$ is lacking a subscript, it will be assumed to be the Schatten infinity norm. All other norms will be subscripted appropriately. In what follows, Hamiltonian operators are taken to be time independent, and they generate dynamics through the time evolution operator, $U = e^{-iHt}$.  \\

Throughout, we make use of the density matrix formalism to describe mixed quantum states, with quantum channels being the object that evolve states, rather than the evolution operator $e^{-iHt}$. By quantum channels we mean a mapping $\mathcal{A}(\rho) = \sum_i E_i \rho E_i^\dagger$ that is completely positive and preserves the trace, $\Tr(\rho)$. For this to be true, the operators $E_i$ must satisfy the condition $\sum_i E_i E_i^\dagger = \openone$. These conditions are required so that information regarding the channel operation is preserved and not lost to an external environment. For example, consider performing a measurement on the output density matrix $\mathcal{A}(\rho)$. If we find that $\Tr(\mathcal{A}(\rho)) \leq 1$, we conclude that the mapping does not provide a complete description of the processes involved, since there must be some other process(es) and therefore measurement outcomes that occur with non-zero probability \cite{nielsen2002quantum}. For easy distinction, we reserve calligraphic fonts for channels $\mathcal{A}$ while operators $A$ are written in the standard way. Given a density matrix $\rho(t_0)$ that describes the state at some initial time $t_0$, the time evolution channel is then defined in the following way
\begin{equation}
    \evolchan{\rho(t_0), t}:= e^{-iH  (t-t_0)} \rho(t_0) e^{iH (t-t_0)} \rightarrow \rho(t).
\end{equation}
  
When not specifically defining evolutionary channels, we will often use $t_0 = 0$ and write $\rho(t=0) = \rho$ to be the initial state. If time evolution is broken into $m$ time intervals  $\{\Delta t_1, \Delta t_{2}, ..., \Delta t_m\}$, 

the state vector evolution looks like $e^{-iH\Delta t_m} ... e^{-iH\Delta t_{2}} e^{-iH\Delta t_1}  \ket{\psi}$. However, in the channel description we do not take products of channels, rather we apply a channel multiple times in a channel composition which we denote with $\circ$. For example, $\evolchan{t_{i+1}} \circ \evolchan{t_i} (\rho)= e^{-iHt_{i+1}} e^{-iHt_i} \rho e^{iHt_i} e^{iHt_{i+1}}$.

Note that in this expression, $t_i$ corresponds to the time interval during which the channel acts. 

A significant portion of this paper is dedicated to the investigation of imaginary-time channels $\evolchan{\rho, \beta} = \frac{1}{\mathscr{N}} e^{-\beta H} \rho e^{-\beta H}$ and the algorithms that closely approximate them. Upon first glance, the notion of imaginary time seems to violate the trace preserving property of a quantum channel, so we introduce $\mathscr{N}$ as a normalization factor. To avoid confusion, we use $\beta$ throughout these discussions instead of $t$, motivated by statistical mechanics where $\beta$ is considered an inverse temperature. In these cases, we are not considering a dynamical evolution of the system in the physical sense, but are rather interested in calculating properties of the Hamiltonian $H$, such as its partition function $Z = \Tr e^{-\beta H}$ or its ground state. This is because time evolution of a quantum system is isomorphic to the cooling of a statistical ensemble via a Wick rotation. We wish to make clear that all discussions pertaining to algorithms in imaginary time are not being proposed for {\it digital quantum simulation}, but as {\it classical algorithms} for use in calculations like Quantum Monte-Carlo. For a study of imaginary time evolution on a quantum computer, see \cite{motta2020determining}

\subsection{Algorithmic Cost Model} \label{sec:cost_model}

Given that we are dealing with channels containing only product formulas, a natural cost model is to count the number of exponential gates of the form $e^{-iH_jt'}$, required to achieve some error $\epsilon$ in a simulation. 
Here, $H_i$ is a specific term chosen from a sum of Hamiltonian terms and $t'$ is an appropriately-chosen time slice of the total simulation time $t$; the following sections on the algorithms of interest explain how to choose these time slices to meet a desired simulation accuracy $\epsilon$. 
Throughout, the aforementioned cost will be denoted $C$ and often subscripted with the algorithm of interest. This cost model is convenient analytically as the algorithms we explore contain an exact amount of iterations or samples of terms that can be easily equated to a count of exponential gates. 
This situation is different from common forms of cost in the literature such as a query complexity, which requires an assumed access to an oracle that necessarily has an implementation overhead. In this case, the overhead would come within the implementation of each $e^{-iH_jt'}$, which depends on the form of the Hamiltonian terms. In the case of common Pauli and Fermionic Hamiltonians, circuits are well known and relatively simple to analyze \cite{whitfield2011simulation}. 
For example, it is common in Heisenberg-like models to have interactions that are a products of Pauli operators such as $\sigma^\nu_i \sigma^\nu_{i+1} \sigma^\nu_{i+2}$. The time evolution of these operators $e^{-it ( \sigma^\nu_i \sigma^\nu_{i+1} \sigma^\nu_{i+2})}$ can be simply implemented using rotation gates $R_\nu (\theta) = exp\parens{-i\frac{\theta}{2} \sigma_\nu}$. 
In fact, using the following circuit we can implement the time evolution for any product of Pauli operators using just $R_z(\theta)$:


\begin{figure}[h!]
\centering

\begin{tikzpicture}
\begin{yquant}
qubit {$\ket{j_{\idx}}$} i[3];
cnot i[1] | i[0];
cnot i[2] | i[1];

box {$R_z(\theta)$} i[2];

cnot i[2] | i[1];
cnot i[1] | i[0];

\end{yquant}
\end{tikzpicture}
\caption{A simple quantum circuit showing an implementation of an exponentiated produce of Pauli spin operators.}
\end{figure}
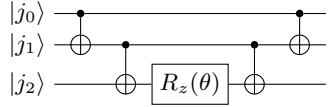

This circuit effectively implements $\exp(-i \frac{\theta}{2} \sigma^z_i \sigma^z_{i+1} \sigma^z_{i+2})$, and we can build a product of any length by continuing this pattern. Switching this to a product of $\sigma_x$ or $\sigma_y$ is straightforward given that $R_x = \hat{H}R_z \hat{H}$ and $R_y = T^2 R_z (T^2)^\dagger$. We can then adjust the simulation time $t$ through our choice of angle $\theta$. Fermionic Hamiltonians follow in a similar way, given that we can map them to a Pauli Hamiltonian using the Jordan-Wigner transformation, which will be outlined in Section \ref{sec:hydrogen}. 
The exampled circuit also reinforces the need for product formula approximations since while we can simply implement the above mentioned products of operators, implementing $e^{-it(\sigma^x+\sigma^z)}$ is highly non-trivial, and we approximate it as $e^{-it\sigma^x} e^{-it\sigma^z}$ as we detail in  the following sections. In our cost model, this approximation would be counted as 2 gates. Numerically, counting exponential gates is also a convenient cost model since it allows for straightforward counting of gates as they are applied.

\subsection{Trotter-Suzuki Formulas} \label{sec:trotter_intro}
Given an operator that can be written as a finite sum of terms $A = \sum_i^L A_i$, the simplest approximation for the corresponding operator exponential is then $e^{\sum_i^L A_i} \approx \prod_i^L e^{A_i}$. While this holds as an equality for exponential functions for variables, it does not in general hold for operator exponentials since individual terms of $A$ do not necessarily commute. This can be easily seen by considering the Taylor series of a simple exponential approximation $e^{(A + B)x} \approx e^{Ax} e^{Bx}$ to second order in $x$:

\begin{align}
    e^{(A+B)x} - e^{Ax} e^{Bx} &= \brackets{\openone + Ax+Bx + \frac{(A+B)^2 x^2}{2!} +...} - \brackets{(\openone + Ax + \frac{A^2x^2}{2!} + ...)(\openone + Bx+ \frac{B^2x^2}{2^!} + ...)} \\
    & = \frac{(AB + BA)x^2}{2} - ABx^2 + ...\\
    &= \frac{[B, A]x^2}{2} + ...
\end{align}
An error clearly arises due to a non-vanishing commutator, meaning the approximation is exact for commuting algebras. However, the nature of this error opens up the possibility for improving this approximation. If one can symmetrize this formula, the second order commutator will vanish. In reference to the previous example, a perhaps obvious symmetrization is $e^{\frac{Ax}{2}}e^{B}e^{\frac{Ax}{2}}$, which annuls the error term $\frac{[B, A]x^2}{2}$, leaving only error terms in $\bigo{x^3}$. More generally, for an operator $A = \sum_i A_i$, the 2nd order Trotter-Suzuki decomposition is given by
\begin{equation}\label{eq:trot2}
    \trotterchan{2}{Ax} = \prod_i^L e^{\frac{A_i x}{2}} \prod^i_L e^{\frac{A_i x}{2}}
\end{equation}
where the flipping of indices in the two products is indicative of reversing the ordering of products. One can readily imagine that it is then possible to continue this strategy and build higher order formulas and cancel out arbitrarily high commutator errors in the product formula approximation. Indeed, we can build these highly non-trivial decompositions in the following way  \cite{suzuki1990fractal, Hatano_2005}:

\begin{definition}[Trotter-Suzuki Product Formula] Given a linear operator $A$ acting on a finite dimensional Hilbert space that can be represented as a finite sum of linear operators $A = \sum_i^L A_i$, a Trotter-Suzuki product formula approximation  $\trotterchan{2k}{Ax}$ for $A$ of order $2k$ ($k>1$) can be constructed in the following recursive fashion:
\begin{equation}
    \trotterchan{2k}{Ax}:= (\trotterchan{2k-2}{Axs_{2k}})^2 \trotterchan{2k-2}{(1-4s_{2k})Ax} (\trotterchan{2k-2}{Axs_{2k}})^2 ,
\end{equation}
where $s_{2k} = \frac{1}{4 - 4^{\frac{2k-1}{2}}}$ .
\end{definition}
Hence, in order to obtain a desired product formula of order $2k$, we start with the second order formula $\trotterchan{2}{Ax}$ given in Equation (\ref{eq:trot2}) and recursively build the higher order formulas. To reiterate, the advantage of doing so is that for a product formula of order $2k$, the error of the approximation is $\in \bigo{t^{2k+1}}$ with $t$ here as the time interval \cite{Hatano_2005}. 

When the general operator $e^{Ax}$ is replaced by the Hamiltonian time evolution operator $e^{-iHt}$, where $H = \sum_i^L H_i$, it becomes obvious how product formulas can be used to evolve quantum states $\ket{\psi}$. However, 
we also wish to generalize this to evolve probabilistic mixtures of states, which are represented by density matrices $\rho = \sum_i p_i \ketbra{\psi_i}{\psi_i}$.  
A Trotter-Suzuki evolution formula for a channel can be written as follows:
\begin{definition}[Trotter-Suzuki Channel] Given a Hamiltonian $H$, a density matrix $\rho$, times $t$ and $t_0$ ($t> t_0 \geq 0$), and order 2k, then a Trotter-Suzuki channel $\tschan{2k}{\rho(t_0), \Delta t}$ performs the operation $\tschan{2k}{\rho(t_0), \Delta t} \rightarrow \rho(t)$ and can be defined as:
\begin{equation}
    \tschan{2k}{\rho(t_0), t} := \trotterchan{2k}{-iH \Delta t} \rho \trotterchan{2k}{-iH \Delta t}^\dagger
\end{equation}
Where $\trotterchan{2k}{-iHt}$ represent the product formulae from Equation (\ref{eq:trot2}) and $\Delta t = t-t_0$.
\end{definition}

Since the error of this approximation accumulates considerably for long iteration times, we introduce the iteration parameter $r$, and apply the channel $r$ times with time intervals $\nicefrac{t}{r}$. As a result, longer time simulations become more expensive in the number of iterations $r$ required to manage the error that accumulates to do the simulation time $t$. The composition of $r$ Trotter-Suzuki channels is straightforward, and applying $r$ channels each for time $\nicefrac{t}{r}$ can be written as
\begin{equation}
    \tschan{2k}{\rho, t}^{\circ r} = \prod^r_{j=1}\trotterchan{2k}{-iH \frac{t}{r}} \rho \prod^r_{j=1}\trotterchan{2k}{-iH \frac{t}{r}}^\dagger .
\end{equation}
The cost of this algorithm for first order decomposition is simply $rL$ where $L$ is the number of Hamiltonian terms. For order $2k$, the cost is $\Upsilon Lr$ where $\Upsilon = 2 \times 5^{k-1}$. 

Looking at cost formulas for different orders $2k$, we desire information about $r$, specifically, the minimum value of $r$ required to meet some error tolerance $\epsilon$ where $\norm{\tschan{2k}{\rho, \nicefrac{t}{r}}^{\circ r}-\evolchan{\rho, t}}_\kappa \leq \epsilon$ for some norm $\kappa$. In Ref. \citenum{hagan2022composite},  the diamond norm is selected, which is the completely bounded trace norm between channels tensored with the environment. The diamond norm is defined as follows: $\diamondnorm{\capU(t) - \mathcal{V}(t)} :=  \sup_{\rho: \norm{\rho}\leq 1}\norm{\parens{\capU(\rho, t) - \mathcal{V}(\rho, t)}\otimes \openone}_1$. This is a good analytical norm to choose, given that it maintains the interpretation of distinguishability between states from the trace distance (see Section \ref{sec:tracedist}). However, it is a bound on channels (maximized over possible input states $\rho$) and thus yields the interpretation of distinguishability between quantum channels. This norm is also sub-additive and sub-multiplicative, which makes it convenient to work with. The procedure in Ref. \citenum{hagan2022composite} upper bounds this norm, sets the value equal to $\epsilon$, and solves the expression for $r$ to arrive at the following first and $2k$th order costs
\begin{align}
    C_{TS}^1(H, t, \epsilon, 1) &= L r \leq L \ceil{\frac{t^2}{2 \epsilon} \sum_{i, j} h_i h_j \norm{[H_i, H_j]}_{\infty}}\label{eq:TrotterCost}\\
    C_{TS}^{2k}(H, t, \epsilon, 2k) &= \Upsilon L r \leq \Upsilon L \ceil{\frac{(\Upsilon t)^{1+1/2k}}{\epsilon^{1/2k}} \parens{\frac{4 \alpha_{comm}(H, 2k)}{2k+1}^{1/2k}}},\label{eq:TrotterCost2k}
\end{align}
where $C$, to reiterate, is the number of operator exponentials that need be applied rather than an exact gate count where gates are drawn from some universal gate set. Here, $$\alpha_{comm}(H, 2k) := \sum_{\gamma_i \in \set{1,\ldots L}} \parens{\prod h_{\gamma_i}} \norm{[H_{\gamma_{2k+1}}, [H_{\gamma_{2k}},\ldots[H_{\gamma_2}, H_{\gamma_1}]\ldots]}_{\infty}.$$ This is a nested commutator error that arises from $2k$ symmetrizations in the product formula. The main takeaway from this formula is the structure of the Hamiltonian that the simulation cost depends on. Trotter-Suzuki channels clearly depend on the number of terms in the Hamiltonian $L$ and on the commutator structure of the Hamiltonian. The simulation cost does not directly depend on the {\it magnitude} of the terms $\norm{H_i}$, only indirectly through their commutators. This is important to keep in mind when examining the next algorithm and its cost dependence.

\subsection{QDrift Random Compiler} \label{sec:qdrift_intro}
The QDrift algorithm was discovered by Campbell \cite{campbell2019random}. It generates random product formulas allowing for the evolution of a system to {\it stochastically} drift (through the Hilbert space) towards the true evolution with high probability. Using the introduced quantum channel notation, the algorithm works as follows: 

\begin{definition}[QDrift Channel] \label{def:QD} Given a Hamiltonian $H = \sum_{i=1}^L h_i H_i$, density matrix $\rho$, times $t$ and $t_0$ ($t > t_0 \geq 0$), and let $p_i = \frac{h_i}{\lambda}$ with $\lambda = \sum_{i=1}^L h_i$ be a probability distribution from which Hamiltonian terms are drawn, then a QDrift channel $\qdchan{\rho(t_0), \Delta t} \rightarrow  \rho(t)$ can be defined as:
\begin{equation}
    \qdchan{\rho(t_0), t} := \sum_{i=1}^L p_i e^{-iH_i \lambda \Delta t} \rho e^{iH_i \lambda \Delta t} ,
\end{equation}
where $\Delta t = t - t_0$. 
\end{definition}

This channel mixes unitaries with a single sample
from the distribution $p_i$. Similarly to the Trotter-Suzuki approach, the accuracy of this approach improves with the number of samples $N$ that go into our ``random product formula". To write the channel for multiple samples, it is useful to think of sampling a vector $\mathbf{j} = \{j_1, j_2, ... j_N\}$ of length $N$, i.i.d from the distribution $P_\mathbf{j} = \lambda^{-N}\prod_{k=1}^N h_{j_k}$. This corresponds to implementing the following unitary $\mathbf{V_j} = \prod^N_{k=1} e^{-iH_{j_k} \tau}$ with $\tau = \nicefrac{t \lambda}{N}$. With  this in mind, we can neatly write down the expression for the channel with arbitrary samples $N$ as
\begin{equation}
    \qdchan{\rho, t}^{\circ N} = \sum_{\mathbf{j}} P_\mathbf{j} \mathbf{V_j} \rho \mathbf{V_j}^\dagger,
\end{equation}
 where the sum is performed over all possible vectors $\mathbf{j}$ of length $N$. In terms of preparing these gates with a quantum circuit, this algorithm can be thought of as a linear combination of unitaries under classical control. 
 Similarly to the Trotter-Suzuki formula, we are also interested in the algorithmic 
 cost of QDrift. A diamond distance upper bound is provided in Ref. \citenum{campbell2019random}, which is then used via the same procedure to provide a cost function in Ref. \citenum{hagan2022composite}, by using $N = 4 \lambda^2 t^2 /\epsilon$ and restricting $\epsilon \in (0, \lambda t \ln(2)/2)$,
 \begin{equation} \label{eq:QDcost}
     C_{QD}(H, t, \epsilon) = \frac{4 \lambda^2 t^2}{\epsilon}.
 \end{equation}
The important take home message here is the difference in the structure of the 
algorithmic cost between the Trotter-Suzuki formula [Eq. \eqref{eq:TrotterCost}] and QDrift [Eq. \eqref{eq:QDcost}]. In QDrift, the cost does not depend on the number of Hamiltonian terms $L$, or the commutator structure. In contrast, it depends on the size of the terms in the sum of the spectral norms $\lambda$. It is important to keep this in mind when constructing composite channels.

\subsection{Composite Simulation Formulas}
\label{subsec:comp}

Equipped with the Trotter-Suzuki and QDrift channel formulas, we wish to hybridize these approaches to form a {\it composite} time evolution channel. 
The eventual goal is to do so in such a way that we can minimize the ``side effects" of each algorithm while taking advantage of the strengths of each approach. 
Following Ref. \cite{hagan2022composite}, we define the composite channel as follows. We consider the Hamiltonian $H = A+B$, where $A = \sum_{j=1}^{L_A} A_j$ and $B = \sum_{j=1}^{L_B} B_j$. Other terms of interest, such as the number of summands in a set $L$ are written with a subscript $A$ to indicate that they belong to a set of Hamiltonian terms that will be simulated by a Trotter-Suzuki channel, and the same goes for the $B$ terms, for a QDrift channel. The composite channel takes the following form
\begin{equation}\label{eq:comp12k}
    \mathcal{X}^{2k}(\rho, t) = \bchan{t} \circ \achan{t} (\rho),
\end{equation}
where the channel subscripts indicate the subset of the Hamiltonian that it is simulating. Note that there is substantial freedom in the construction of this channel. The first obvious freedom is the partitioning into the sets $A$ and $B$. Section \ref{sec:partition_scheme} is devoted to this task. The next choice lies in the Trotter-Suzuki order $2k$. Further, we can construct an outer loop as in Ref. \citenum{hagan2022composite} which uses a Trotter-Suzuki style symmetrization of the channels themselves. For example, a composite channel with outer order $2k = 2$ looks like: 
\begin{equation} 
    \mathcal{X}^{2k}(\rho, t) = \bchan{t/2} \circ \achan{t} \circ \bchan{t/2} (\rho).
\end{equation}
Such an ``outer-loop" with a respective ``outer-order" was introduced as a strategy to make more detailed analysis of simulation costs. The outer loop can be thought of as a product formula for composite channels, symmetrizing the channels and allocating the respective time slice, while the inner Trotter order does so on the operator exponentials as before. However, when introducing the outer-order in  Ref. \citenum{hagan2022composite} 
it was enforced to match the same order as the inner factorization. In this paper, one of our goals is to numerically evaluate the exact costs of these composite algorithms, and this gives us the freedom to independently set
the inner and outer orders. However, an actual application of the outer loop is situational as it requires specific knowledge about the commutator structure for it to be of use. We further discuss this in later sections where we will resort to the notation $\mathcal{X}^{2k, 2l}$ for \textit{ $2k$ inner, $2l$ outer} orders respectively. If only a single superscript is given, it is understood to be the inner order with the outer order fixed at 1.\\

We now restate the first order cost result for composite channels from Ref. \citenum{hagan2022composite}, where the $A$ terms are placed in a Trotter channel and the $B$ terms are placed in QDrift: 
\begin{align}
    C_{Comp}(A, B, t, \epsilon) = (L_A + N_B) r = (L_A + N_B) \ceil{\frac{t^2}{\epsilon} \parens{\frac{1}{2}\sum_{i,j} a_i a_j \norm{[A_i, A_j]}_{\infty} +  \frac{1}{2} \sum_{i,j} a_i b_j\norm{ [A_i, B_j]}_{\infty} + \frac{4 \lambda_B^2}{N_B}}}. \label{eq:first_order_comp_cost}
\end{align}
We only restate the first order Trotter-Suzuki result due to the fact that this is the expression from which we will draw most of intuition to build our heuristics for partitioning. 
As well, the second order cost formula is of a similar structure with extra $\alpha_{comm}$ and $\Upsilon$ terms, but it is based on channels with matching inner-order and outer-order, which our numerical work does not follow in general. Inspecting Equation (\ref{eq:first_order_comp_cost}), it clearly inherits the structure from Trotter in its dependence on the commutators and number of terms $L_A$ in the set $A$ (first term), and on QDrift in the sum of the spectral norms $\lambda_B$ and number of samples $N_B$ of the terms in the set $B$ (third term). The second term adds to the cost due to the hybridization of the Trotter and QDrift channels. 
Clearly, in order to gain as much advantage as possible using this algorithm over the non-composite approaches, we desire that the Hamiltonian of interest has a structure that we can exploit via a good choice of partitioning. 
An interesting structure might be of an $H$ that contains terms of largely varying magnitude, such that there might be a sharp contrast between the small and large terms, rather than their spectral norms following say a Gaussian distribution. Having small magnitude terms that largely outnumber the large-magnitude terms is also likely desirable. 
This understanding naturally leads into the next topic of how to choose the partitioning. This task can be motivated by the structure of $H$ and its commutators, if known a-priori. For example, inspecting Eq. (\ref{eq:first_order_comp_cost}), if we have a small number of large commuting terms and numerous small non-commuting terms, it seems natural to place the large terms into the Trotter channel, but sample the smaller magnitude terms with QDrift. The question of choosing effective partitioning will be further addressed in Section \ref{sec:partition_scheme}.

\subsection{Local Lattice Hamiltonians} 

Given the observation of commutator-dependent error in previous sections, an interesting system to study are geometrically-local Hamiltonians. These systems are defined on a lattice, and they have only local interactions, such as nearest-neighbour terms. It turns out, via a Lieb-Robinson bound (Section \ref{sec:LR}), that commutators of interaction terms fall off exponentially with the separation distance in the lattice. This motivates the search for a 
local product decomposition to simulate the time evolution of local systems. In fact, computational advantages have been demonstrated for local Hamiltonians in Ref. \citenum{haah2021quantum}. We follow this work in defining geometrically-local Hamiltonians as
\begin{definition}
    Given a D-dimensional lattice $\Gamma \in \mathbb{Z}^D$, a local Hamiltonian can be written in the following way:
    \begin{equation}
        H = \sum_{X \subseteq \Gamma} H_{X},
    \end{equation}
Where $H_X$ is only supported on the subset $X$ that contains only immediately adjacent lattice points, and it acts as $\openone$ on $\mathbb{Z}^D \in \Gamma \backslash X$. 
Equivalently, $H_X = 0$ if $diam(X) > 1$, where $diam(X)=\max_{x,x'\in X} dist(x,x')$ and $dist$ is the graph distance between the lattice indices. Each term in the Hamiltonian is also normalized such that $\norm{H_X} \leq 1$. 
\end{definition}
 In our analysis below, we will explicitly state when we are referring to a local Hamiltonian. These Hamiltonians can be used to describe a wide variety of physical systems, perhaps most famously is the quantum Heisenberg model, which will be investigated numerically in Section \ref{sec:numerical_sim}. 
 If locality is not explicitly mentioned then the Hamiltonian is of the aforementioned more general structure. 

\subsection{Lieb-Robinson Bound}
\label{sec:LR}

A Lieb-Robinson bound is essentially a bound on the speed at which information propagates through a quantum system that has interactions governed by a local Hamiltonian. This result is particularly interesting due to the fact that it holds for non-relativistic quantum systems, meaning that in no part of the system is the finite speed of light $c$ enforced. Instead, the locality of the system's interactions, as well as the geometry of the lattice leads to the emergence of a Lieb-Robinson velocity $v_{LR}$, which bounds the speed of causality. The central idea behind this bound is that the commutator between the time evolution of an operator $A_X$ and another operator $B_Y$, such that the operators are supported on disjoint sets $X \cap Y = \varnothing$, have an exponentially small commutator. More formally, a general Lieb-Robinson bound is often given in the following form \cite{nachtergaele2006lieb, nachtergaele2011much}: 
\begin{equation}
    ||[e^{iH_\Gamma t} A e^{-iH_\Gamma t}, B]|| \leq 2||A|| ||B|| F e^{-\mu (dist(X,Y) - v_{LR} |t|)}, 
\end{equation}
where $\mu$ is a constant that depends on the lattice and $F$ depends on the cardinality of the sets $X$ and $Y$. 
It is also important to note how the distance between sets is defined, and it appears in the exponent above. The distance between the sets $X$ and $Y$ is $dist(X,Y) = \min_{x\in X, y\in Y} dist(x,y)$, where $dist(x,y)$ is the distance between elements in the set; in a lattice this measure is just the absolute value of the difference of their indices. From a simulation viewpoint, and specifically regarding our Trotter-Suzuki product formulas, we know that the error in these approximations depend on commutators of terms. Therefore, carefully-chosen product decompositions may have exponentially-small error. 
This understanding  motivated the algorithm in Ref. \citenum{haah2021quantum}, and in Sec. \ref{sec:local_composite} we combine this algorithm with the composite framework. 

\section{Imaginary Time Channels Algorithmic Analysis} 

\label{sec:analysis}

Within this section, we show that the composite algorithms proposed by Hagan and Wiebe in Ref. \cite{hagan2022composite}, for the purposes of digital real-time quantum simulation, can be applied onto
the {\it imaginary time} case with similar error bounds and asymptotics. As mentioned previously, the algorithms in this section are formulated with the intention of calculating properties of the Hamiltonian on {\it classical} computers. These algorithms can therefore be considered as quantum-inspired classical algorithms, a topic unrelated to the problem of quantum imaginary time evolution \cite{motta2020determining}. 
Within this section, we introduce the imaginary-time QDrift, Trotter-Suzuki, and the composite channels, and we bound their distance norm with respect to the ideal unitary evolution.

We first introduce the notion of an imaginary time evolution channel, given in the definition below, and demonstrate how it can be viewed as a ground state preparation, or ``cooling" procedure.
\begin{definition}[Imaginary Time Evolution Channel] \label{def:ideal_imaginary_time_channel}
    Given a Hamiltonian $H$ and input state $\rho$, we define the imaginary time evolved state as the action of the following map
    \begin{equation}
        \frac{\evolchan{\rho, \beta}}{\Tr \evolchan{\rho, \beta}} = \frac{e^{-\beta H} \rho e^{-\beta H}}{\Tr e^{-\beta H} \rho e^{-\beta H}}.
    \end{equation}
    
\end{definition}
This can be seen as an imaginary time evolution channel most straightforwardly when considering quantum state vectors as opposed to density matrix. The time-independent Sch{\"o}dinger equation tells us $\ket{\psi(t)} = e^{-i H t} \ket{\psi(0)}$. If we take this expression and perform a Wick rotation that sends $t \to -i \beta$ we get that $\ket{\psi(\beta)} = e^{- \beta H} \ket{\psi(0)}$. If we consider the action of this matrix $e^{- \beta H}$ on the density matrix of the state $\ket{\psi(0}$ we get $e^{-\beta H} \ketbra{\psi(0)}{\psi(0)} e^{-\beta H}$ due to the fact that $e^{-\beta H}$ is a Hermitian operator. Relaxing the input from a pure state to a density matrix and normalizing by the trace yields the channel provided in Definition \ref{def:ideal_imaginary_time_channel}.

Often when imaginary time is discussed in quantum mechanics the state vector representation is used, in which case the imaginary time exponential operator has the following property
\begin{equation}
    \lim_{\beta \to \infty} \frac{e^{-\beta H} \ket{\psi_i}}{\norm{e^{-\beta H} \ket{\psi_i}}_2} = \ket{\psi}_{GS}.
\end{equation}
This clearly only holds whenever our initial state $\ket{\psi}_i$ has nonzero overlap with the ground state $\ket{\psi}_{GS}$, which is expressed as $\braket{\psi_i}{\psi_{GS}}\neq 0$. In the density matrix picture, we have the following equivalent property:
\begin{equation}
    \lim_{\beta \to \infty} \frac{e^{-\beta H} \rho_i  e^{-\beta H}}{\Tr e^{-\beta H} \rho_i  e^{-\beta H}} = \rho_{GS},
\end{equation}
such that $\supp (\rho_i) \cap \supp (\rho_{GS}) \neq \varnothing$. This property can be seen by expanding $\rho$ into an arbitrary mixed state in the energy eigenbasis. Now that the maps of interest have been made clear, we can consider algorithms that closely approximate them and analyze their performance analytically by bounding their error. An important detail is the choice of distance norm to quantify this error. In comparison to real time composite algorithms, the aforementioned diamond norm is not a good choice for imaginary time, given that this norm is about distinguishing quantum operations on a system and ancillary space, whereas here we are simply performing classical calculations. For this reason, we bound the following induced Schatten $p \rightarrow q$ norm on a map $\Phi$: $\norm{\Phi }_{p\rightarrow q} = \max_{\norm{\rho}_q =1} \norm{\Phi(\rho)}_p$. Here, $\Phi$ will be the difference between the ideal map and that which is generated by the composite algorithm, and we will investigate the case of the induced trace norm where $p=q=1$. Despite the fact that the maps being investigated in this section are non-linear, due to the trace operation in the denominator, the outputs of the maps are Schatten-class linear operators, and thus the induced Schatten norm is well defined. Further, by analyzing $1 \to 1$ norms for two density matrices, such as $\max_{\norm{\rho}_1  =1} \norm{\rho - \sigma}_1 \leq \epsilon$, we can guarantee from properties of the trace distance that the measurement statistics for $\rho$ and $\sigma$ will deviate by at most $\epsilon$. 

\subsection{Imaginary Time QDrift}
Introduced in Section \ref{sec:qdrift_intro}, QDrift was introduced as a quantum simulation algorithm that, in real-time simulations, has a unique property in that its error does not depend on the number Hamiltonian summands, nor on the commutators between them, only on the total size of these operators. To the best of our knowledge, this algorithm has not been applied to imaginary time calculations. We recover all the characteristic properties of the QDrift channel in imaginary time with renormalizing trace operations in the following Theorem. 

\begin{restatable}[Imaginary Time QDrift]{thm}{imagQD}\label{thm:imagQD}
Given a Hamiltonian $H = \sum_j h_j H_j$ with $\lambda = \sum_j h_j$,  imaginary time $\beta$, number of samples $N$, and a density matrix initial state $\rho$, then the induced Schatten $1\rightarrow 1$ norm of the difference between the imaginary time QDrift channel and exact imaginary time evolutionary channel has the following bound:
\begin{equation}
    \indone{\frac{\mathcal{U}(\rho, \beta)}{\Tr \mathcal{U}(\rho, \beta)} - \parens{\frac{\qdchan{\rho, \nicefrac{\beta}{N}}}{\Tr \qdchan{\rho, \nicefrac{\beta}{N}}}}^{\circ N}} \leq \frac{4 \beta^2 \lambda^2}{N} \frac{e^{\nicefrac{2 \lambda \beta }{N}} }{2 - e^{\nicefrac{2 \beta \lambda}{N}}} + 2 \frac{\frac{4\beta^4 \lambda^4}{N^3} e^{\nicefrac{6\lambda \beta}{N}}}{1-\frac{2\beta^2 \lambda^2}{N^2} e^{\nicefrac{2\lambda \beta}{N}}} +  2\frac{2\beta^2 \lambda^2}{N} e^{\nicefrac{4\lambda \beta}{N}},
\end{equation}
given that $|e^{\nicefrac{2 \beta \lambda}{N}} - 1| < 1 $ which is satisfied if $N > \frac{2 \beta \lambda}{\ln 2}$. Further, If the constraint $\frac{\lambda}{N} \leq 0.01$ is satisfied, then the bound simplifies to the following:
\begin{equation}
    \indone{\frac{\mathcal{U}(\rho, \beta)}{\Tr \mathcal{U}(\rho, \beta)} - \parens{\frac{\qdchan{\rho, \nicefrac{\beta}{N}}}{\Tr \qdchan{\rho, \nicefrac{\beta}{N}}}}^{\circ N}} \leq \frac{C\beta^2 \lambda^2}{N},
\end{equation}
where the constant $C\approx 29.71747$.
\end{restatable}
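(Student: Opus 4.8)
The plan is to reduce both normalized $N$-fold compositions to iterated \emph{single-step} channels and then telescope. First I would observe that the ideal full-time normalized channel factorizes as a composition of normalized single-step channels: since the unnormalized ideal map $\rho\mapsto e^{-H\beta/N}\rho\, e^{-H\beta/N}$ is linear and each intermediate normalization is a positive scalar, these scalars cancel between numerator and the trace in the denominator, so $\frac{\evolchan{\rho,\beta}}{\Tr\evolchan{\rho,\beta}}$ equals the $N$-fold composition of the normalized single-step ideal channel (and the identical idempotence holds for QDrift because its unnormalized version is a linear completely positive map). The problem thus becomes bounding $\norm{\tilde{\mathcal{U}}_{\mathrm{step}}^{\circ N}-\tilde{\mathcal{Q}}_{\mathrm{step}}^{\circ N}}_{1\to1}$, which by telescoping splits into a single-step error $\norm{\tilde{\mathcal{U}}_{\mathrm{step}}-\tilde{\mathcal{Q}}_{\mathrm{step}}}_{1\to1}$ and the propagation of that error through the remaining compositions.

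Second, I would bound the single-step error. Writing $\tau:=\beta/N$, I would Taylor-expand each imaginary-time exponential in the unnormalized channels $P:=\evolchan{\rho,\nicefrac{\beta}{N}}$ and $Q:=\qdchan{\rho,\nicefrac{\beta}{N}}$ in powers of $\tau$. Using $\lambda\sum_i p_i H_i=H$, the zeroth- and first-order terms of the QDrift mixture match those of $e^{-H\tau}(\cdot)e^{-H\tau}$ exactly, so the leading discrepancy is $\bigo{(\lambda\tau)^2}$, coming from the mismatch between $\lambda^2\sum_i p_i H_i^2$ and $H^2$. Bounding the full Taylor remainder together with $\norm{H_i}\le1$ and $\norm{e^{-\lambda\tau H_i}}\le e^{\lambda\tau}$ gives an unnormalized per-step estimate $\norm{P-Q}_1\lesssim 4\lambda^2\tau^2\,e^{\bigo{\lambda\tau}}$, the imaginary-time analogue of Campbell's real-time bound. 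I would then pass to the normalized single step through the inequality $\norm{\tfrac{P}{\Tr P}-\tfrac{Q}{\Tr Q}}_1\le\tfrac{\norm{P-Q}_1}{\Tr P}+\tfrac{|\Tr P-\Tr Q|}{\Tr P}$ (valid since $Q\ge0$), whose two pieces—the operator error and the trace mismatch $|\Tr P-\Tr Q|\le\norm{P-Q}_1$—are the seeds of the separate terms in the final bound.

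Third, I would accumulate the per-step errors across the $N$ steps. The delicate feature is that the per-step maps are not trace-preserving, so one must work with the normalized maps to avoid propagation factors growing like $e^{2\lambda\beta}$. Setting up a recursion for the normalized error, the total $1\to1$ error is bounded by $N$ per-step contributions, each dressed by a propagation factor, and these factors form a geometric series with ratio $e^{2\beta\lambda/N}-1$. The stated hypothesis $|e^{2\beta\lambda/N}-1|<1$ (equivalently $N>2\beta\lambda/\ln2$) is exactly the convergence condition, and summing gives $\sum_{j\ge0}(e^{2\beta\lambda/N}-1)^j=\frac{1}{2-e^{2\beta\lambda/N}}$. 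The leading accumulated contribution is then $N$ copies of the $4\lambda^2\beta^2/N^2$ per-step error multiplied by the bounded correction $\frac{e^{2\beta\lambda/N}}{2-e^{2\beta\lambda/N}}$, reproducing the first term; carrying the trace-mismatch piece and the higher-order Taylor remainder through the same summation yields the remaining $\frac{\beta^4\lambda^4}{N^3}$ and $\frac{\beta^2\lambda^2}{N}$ corrections with their respective exponential prefactors.

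I expect the \textbf{main obstacle} to be the non-linearity introduced by the trace normalization: unlike the real-time unitary case, where channels are trace-preserving and the diamond norm is directly sub-multiplicative, the imaginary-time evolution is not trace-preserving, so I must simultaneously control the accumulated operator error and keep the normalizing trace bounded away from zero. This is precisely what forces the constraint $e^{2\beta\lambda/N}<2$ and produces the geometric denominators; verifying the recursion is well-posed (that the normalized error does not blow up) is the technical crux. Finally, for the simplified statement I would impose the small-parameter constraint $\lambda/N\le0.01$ so that the per-step quantity $\lambda\beta/N$ is small, whereupon every exponential factor $e^{\bigo{\lambda\beta/N}}$ and every geometric denominator is bounded by an explicit absolute constant; collecting these across the three terms while retaining the dominant $\frac{\beta^2\lambda^2}{N}$ scaling gives the clean bound $\frac{C\beta^2\lambda^2}{N}$, with $C\approx29.71747$ obtained by evaluating those factors at the boundary of the allowed regime.
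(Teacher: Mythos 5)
Your overall architecture is the same as the paper's: bound the normalized single-step error, multiply by $N$ through the composition, then linearize under $\lambda/N \leq 0.01$. Your step 1 (intermediate normalizations cancel, so both sides are $N$-fold compositions of normalized single-step maps) and step 2 are sound; in particular your inequality $\norm{\tfrac{P}{\Tr P}-\tfrac{Q}{\Tr Q}}_1\le\tfrac{\norm{P-Q}_1}{\Tr P}+\tfrac{\abs{\Tr P-\Tr Q}}{\Tr P}$ is a valid (and arguably cleaner) substitute for the paper's split into a common-denominator term $a$ and a denominator-mismatch term $b$ in the Liouvillian formalism, and your Taylor-matching argument for $\norm{P-Q}_1 \lesssim 4\lambda^2\tau^2 e^{\bigo{\lambda\tau}}$ is exactly the paper's use of Campbell's bound on $\indone{e^{-\nicefrac{\beta\liouv}{N}}-\sum_i \frac{h_i}{\lambda}e^{-\nicefrac{\beta\lambda\liouv_i}{N}}}$.

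The genuine gap is in step 3, in where you locate the geometric series. In the paper, the factor $\frac{1}{2-e^{\nicefrac{2\beta\lambda}{N}}}$ and the hypothesis $\abs{e^{\nicefrac{2\beta\lambda}{N}}-1}<1$ have nothing to do with cross-step propagation: they arise \emph{inside the single-step bound}, from expanding the reciprocal of the ideal channel's normalizing trace, $\Tr\evolchan{\rho,\nicefrac{\beta}{N}} = 1+\Tr\parens{\sum_{n\geq 1}\frac{(-\beta\liouv)^n}{N^n n!}}$, as a geometric series $\sum_k \parens{-\Tr(\cdots)}^k$ whose ratio is bounded by $e^{\nicefrac{2\beta\lambda}{N}}-1$ (the trace can dip below $1$ in imaginary time, so this reciprocal cannot simply be dropped). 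Cross-step accumulation in the paper is then just the bare factor $N$, asserted via the composition-subadditivity property. If you instead try to produce $\frac{1}{2-e^{\nicefrac{2\beta\lambda}{N}}}$ from propagation, the mechanism fails: propagation constants of composed maps \emph{multiply}, so the error from step $k$ gets dressed by $L^{N-1-k}$ for some per-step Lipschitz constant $L$, and since normalized non-trace-preserving maps are not trace-norm contractions (filtering can increase distinguishability), the best one can hope for is $L = 1+\bigo{\nicefrac{\beta\lambda}{N}}$, compounding to $e^{\bigo{\beta\lambda}}$ — these factors do not sum as $\sum_j \parens{e^{\nicefrac{2\beta\lambda}{N}}-1}^j$. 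Symptomatically, your step 2 keeps $\nicefrac{1}{\Tr P}$ unexpanded; if you bound it directly by $e^{\nicefrac{2\lambda\beta}{N}}$ the stated hypothesis never enters and you obtain a (slightly tighter) bound of a different form than the theorem's, which shows the geometric series must live in step 2, not step 3, to reproduce the stated result. Separately, note that the $\times N$ composition step for these non-linear normalized maps is a soft spot you correctly flag as the crux but do not resolve — though the paper itself only asserts it by appeal to its linear-channel lemma, so this is a shared weakness rather than one specific to your argument.
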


For a proof of this theorem see Appendix \ref{sec:imag_proof}. The proof consists of repeated applications of the triangle inequality and sub-multiplicative property of the diamond norm, as well as manipulations with the geometric series and a tail bound on a Taylor series. The geometric series arises when one uses a Taylor expansion on the terms in the denominator and takes the trace of the zeroth order term, which is just $\Tr \rho = 1$. A constant theme throughout the analytics in this paper are multiple exponential factors that arise due to the techniques used to bound trace terms. Multiple techniques are used to accomplish this throughout, and the exponential factors are unavoidable in imaginary time, given that the exponential operator is no longer unitary and has an exponentially decaying norm. For clearer interpretation of the bounds we use linearization techniques to either show that they disappear under some constraints when proving an inequality, or showing they become $\bigo{1}$ in some limit for an asymptotic bound. The important takeaway from this theorem is that the QDrift error scales essentially the same was in imaginary time as it does in real time, thus making it a candidate algorithm to be applied in Quantum-Monte Carlo calculations. As well, this promise on the ratio $\frac{\lambda}{N}$ is reasonable to make as this algorithm excels in the regime of small terms. 

\subsection{Imaginary Time Trotter-Suzuki} \label{sec:imag_TS}
Different from QDrift, work has already been done to provide bounds on non-unitary product formulas \cite{childs2021theory}. However, these bounds are given in terms of operator norms rather than channel distances, as well as they are given without renormalizing trace operations. However, these existing results are integral in proving the theorem below.

\begin{restatable}[Imaginary-Time Trotter-Suzuki Channels] {thm}{imagTS}\label{thm:imagTS}
    Given a Hamiltonian $H = \sum_j h_j H_j$ with $\lambda = \sum_j h_j$,  imaginary time $\beta$, and a density matrix initial state $\rho$, then the induced Schatten $1\rightarrow 1$ norm of the difference between the imaginary time Trotter-Suzuki channel $\tschan{2k}{\rho, \beta}$ and the exact imaginary time evolutionary $\evolchan{\rho, \beta}$ channel has the following bound:

    \begin{equation}
        \indone{\frac{\evolchan{\rho, \beta}}{\Tr(\evolchan{\rho, \beta})} - \parens{\frac{\tschan{2k}{\rho, \nicefrac{\beta}{r}}}{\Tr(\tschan{2k}{\rho, \nicefrac{\beta}{r}})}}^{\circ r}} \leq 2\Upsilon^{2k+1} \frac{\alpha_{comm}(H,2k)}{(2k+1)!} \frac{\beta^{2k+1}}{r^{2k}} e^{\nicefrac{4\Upsilon \beta \lambda}{r}  }\parens{e^{\nicefrac{4\beta \lambda}{r} } e^{\nicefrac{2\beta \norm{H}}{r} } + e^{\nicefrac{2\beta \norm{H}}{r} }},
    \end{equation}
    which yields the following asymptotic bound:
    \begin{equation}\indone{\frac{\evolchan{\rho, \beta}}{\Tr(\evolchan{\rho, \beta})} - \parens{\frac{\tschan{2k}{\rho, \nicefrac{\beta}{r}}}{\Tr(\tschan{2k}{\rho, \nicefrac{\beta}{r}})}}^{\circ r}} \in \bigo{ \Upsilon^{2k+1}\frac{\alpha_{comm}(H,2k)}{(2k+1)!}\frac{\beta^{2k+1}}{r^{2k}} }.
    \end{equation} 
    Here $\Upsilon$ is the number of stages of the product formula, $2k$ is the order of the product formula, and $\lambda$ is the sum of the spectral norms of the Hamiltonian summands. Here $\bigo{\cdot}$ is understood in the infinite limit of its arguments.
\end{restatable}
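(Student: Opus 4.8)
The plan is to reduce the distance between the two \emph{normalized} (hence nonlinear) maps to a statement about the \emph{unnormalized} operators $e^{-\beta H}$ and $V \coloneqq \trotterchan{2k}{-H\beta/r}$, for which the non-unitary product-formula estimates of Ref.~\citenum{childs2021theory} apply directly. First I would record that composing the one-step normalized channels equals normalizing the composed unnormalized channels, since the intermediate normalization constants cancel; this lets me write the approximant as $V^{r}\rho(V^{\dagger})^{r}/\Tr(V^{r}\rho(V^{\dagger})^{r})$ and the ideal map as $e^{-\beta H}\rho e^{-\beta H}/\Tr(e^{-\beta H}\rho e^{-\beta H})$. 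For two positive semidefinite operators $A,B$ I would then use the elementary identity
\begin{equation}
\frac{A}{\Tr A}-\frac{B}{\Tr B}=\frac{A-B}{\Tr A}+\frac{B\,(\Tr B-\Tr A)}{\Tr A\,\Tr B},
\end{equation}
together with $\abs{\Tr A-\Tr B}\le\norm{A-B}_{1}$ and $\norm{B}_{1}=\Tr B$ (valid since both outputs are PSD), to control $\norm{\,\cdot\,}_{1}$ by a sum of a numerator term and a trace-mismatch term. This is exactly what produces the two-summand structure $\parens{e^{4\beta\lambda/r}e^{2\beta\norm{H}/r}+e^{2\beta\norm{H}/r}}$ in the statement, and it cleanly isolates two tasks: lower-bounding the normalizing trace and upper-bounding the unnormalized error.

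For the denominators I would lower bound the single-step trace via the smallest eigenvalue of $e^{-2\beta H/r}$, giving $\Tr(e^{-2\beta H/r}\sigma)\ge e^{-2\beta\norm{H}/r}$ and hence the per-step factor $e^{2\beta\norm{H}/r}$. This is where the Taylor/geometric-series technique flagged for Theorem~\ref{thm:imagQD} enters: expanding $\Tr(e^{-2\beta H/r}\sigma)=\Tr\sigma-\tfrac{2\beta}{r}\Tr(H\sigma)+\cdots$, the zeroth-order term is $\Tr\sigma=1$ and the remaining tail is a geometric series that is summable precisely under the stated regime $\abs{e^{2\beta\lambda/r}-1}<1$, i.e. $r>2\beta\lambda/\ln 2$. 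Keeping the normalization at the single-step level (rather than once globally) is what forces every exponential argument to carry a $1/r$, as in the theorem.

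For the numerator I would invoke the operator-norm bound of Ref.~\citenum{childs2021theory}, which for a non-unitary step gives, up to the constants already absorbed in $\alpha_{comm}$, a bound of the form $\norm{\trotterchan{2k}{-H\beta/r}-e^{-H\beta/r}}_{\infty}\lesssim \Upsilon^{2k+1}\tfrac{\alpha_{comm}(H,2k)}{(2k+1)!}\parens{\tfrac{\beta}{r}}^{2k+1}e^{\Upsilon\beta\lambda/r}$, the prefactor $e^{\Upsilon\beta\lambda/r}\ge\norm{V}_{\infty}$ being the signature of non-unitarity that is absent in real time. I would then telescope the $r$-fold compositions,
\begin{equation}
V^{r}\rho(V^{\dagger})^{r}-e^{-\beta H}\rho e^{-\beta H}=\sum_{j=0}^{r-1}\mathcal{E}_{\mathrm{ex}}^{\circ(r-1-j)}\circ(\mathcal{E}_{\mathrm{ex}}-\mathcal{E}_{\mathrm{Tr}})\circ\mathcal{E}_{\mathrm{Tr}}^{\circ j}(\rho),
\end{equation}
with the \emph{linear} one-step maps $\mathcal{E}_{\mathrm{ex}}(\cdot)=e^{-\beta H/r}(\cdot)e^{-\beta H/r}$ and $\mathcal{E}_{\mathrm{Tr}}(\cdot)=V(\cdot)V^{\dagger}$, bound each factor with $\norm{A-B}_{1}\le(\norm{V}_{\infty}+\norm{e^{-\beta H/r}}_{\infty})\norm{V-e^{-\beta H/r}}_{\infty}$ and sub-multiplicativity of the induced $1\to1$ norm. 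Summing the $r$ terms converts one factor $(\beta/r)^{2k+1}$ into $\beta^{2k+1}/r^{2k}$ and collects the per-step exponentials into the advertised $e^{4\Upsilon\beta\lambda/r}$ prefactor. The asymptotic statement then follows immediately, since every exponential argument is $\bigo{1/r}$ and each such factor tends to $1$ as $r\to\infty$, leaving $\bigo{\Upsilon^{2k+1}\tfrac{\alpha_{comm}(H,2k)}{(2k+1)!}\tfrac{\beta^{2k+1}}{r^{2k}}}$.

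The step I expect to be the main obstacle is keeping the exponential prefactors at the per-step scale $e^{\bigo{\beta\lambda/r}}$ rather than at the global scale $e^{\bigo{\beta\lambda}}$. Unlike a unitary channel, the renormalized imaginary-time map is nonlinear and can be locally \emph{expansive} in trace norm---post-selection on the $e^{-\beta H}$ ``outcome'' can increase distinguishability---so a careless bound on the telescoped propagation $\mathcal{E}_{\mathrm{ex}}^{\circ(r-1-j)}$ threatens to accumulate a factor $e^{2\beta\norm{H}(r-1)/r}$. Avoiding this requires performing the renormalization one step at a time, exploiting the cancellation of normalization constants so that only the single-step trace lower bound is ever invoked, and verifying that the geometric-series tail from the denominators stays summable under the constraint on $r$. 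This tight book-keeping of the competing exponential growth and per-step renormalization---exactly the difficulty flagged in the QDrift analysis---is where the substantive work lies.
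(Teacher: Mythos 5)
There is a genuine gap, and it sits exactly at the step you yourself flag as the ``main obstacle'': the composition/book-keeping step, which your proposal identifies but does not close. Your opening reduction is correct (composing the normalized one-step maps does equal the globally normalized map $V^{r}\rho (V^{\dagger})^{r}/\Tr(V^{r}\rho(V^{\dagger})^{r})$, and your two-term splitting identity is essentially the same add-and-subtract the paper uses). But once you commit to telescoping the \emph{unnormalized} compositions, the per-step induced norms compound multiplicatively: each term of your telescoping sum is bounded by $\norm{\mathcal{E}_{\mathrm{ex}}}_{1\to1}^{\,r-1-j}\,\indone{\mathcal{E}_{\mathrm{ex}}-\mathcal{E}_{\mathrm{Tr}}}\,\norm{\mathcal{E}_{\mathrm{Tr}}}_{1\to1}^{\,j}\le e^{2\beta\lambda(r-1)/r}\,\indone{\mathcal{E}_{\mathrm{ex}}-\mathcal{E}_{\mathrm{Tr}}}$, and the single global trace lower bound contributes another factor $e^{2\beta\norm{H}}$. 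With these worst-case sub-multiplicative estimates the renormalization is only invoked once, so nothing cancels the growth, and the best you obtain is $\Upsilon^{2k+1}\frac{\alpha_{comm}(H,2k)}{(2k+1)!}\frac{\beta^{2k+1}}{r^{2k}}\,e^{\bigo{\beta(\lambda+\norm{H})}}$. That is exponentially weaker than the claimed bound, in which every exponential carries a $\nicefrac{1}{r}$, and it cannot yield the asymptotic statement at all: a prefactor $e^{\bigo{\beta\lambda}}$ diverges in the infinite limit of the arguments no matter how $r$ scales, whereas the theorem needs all exponential factors to be $\bigo{1}$. The fix you name --- ``performing the renormalization one step at a time'' --- is not a repair of your route; it \emph{is} the paper's route: the paper bounds the difference of \emph{normalized} single-step maps (lower-bounding $\Tr\evolchan{\rho,\nicefrac{\beta}{r}}\ge e^{-2\beta\norm{H}/r}$ and $\Tr\tschan{2k}{\rho,\nicefrac{\beta}{r}}\ge e^{-2\beta\lambda/r}$ via von Neumann's trace inequality and singular-value products, bounding $\indone{\tschan{2k}{\nicefrac{\beta}{r}}}\le e^{2\beta\lambda/r}$, and converting the Childs et al.\ operator-norm bound to an induced norm), and only then multiplies by $r$ using the property $\indone{X^{\circ r}-Y^{\circ r}}\le r\,\indone{X-Y}$, asserted to extend from the channel-composition lemma to these normalized maps. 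That extension is precisely the lemma your write-up lacks, and your own (correct) observation that post-selected imaginary-time maps can be expansive in trace norm shows it is not free; since you defer this ``substantive work'' rather than perform it, the proposal as written does not establish the stated bound by either route.

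A secondary error: the geometric-series expansion of the denominator and the accompanying constraint $r>2\beta\lambda/\ln 2$ that you import from the QDrift analysis do not belong here. The paper explicitly notes that for Trotter-Suzuki the normalizing trace involves a \emph{product} of exponentials rather than a sum, so the Taylor/geometric-series trick is unavailable, and it uses the minimum-singular-value argument instead; correspondingly, Theorem \ref{thm:imagTS} carries no such constraint on $r$ (a scaling for $r$ is posited only in the asymptotic discussion). Building that hypothesis into the proof would prove a strictly weaker statement than the one claimed.
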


For a proof of this theorem, see Appendix \ref{sec:imag_proof}. The proof follows similar strategies to that of imaginary-time QDrift, however, one cannot simply expand a geometric series in the denominator here given that we have a product instead of a sum of terms. So we instead bound the trace terms using a combination of von Neumann's trace inequality and Weyl's inequality regarding the singular values of matrices. Similar to the QDrift case, we obtain a similar looking bound to that obtained in \cite{hagan2022composite} for the real time case, with some additional exponential factors as expected. Considering asymptotics, the bounds behave the same in both cases where we again see the dependence of the error on a nested commutator sum. 

\subsection{Imaginary Time Composite Channels}
Now equipped with error bounds on both QDrift and Trotter-Suzuki channels in imaginary time, we can proceed with the analysis of channels composed of the two. Given the difficulty of analyzing the renormalizing operations in the trace terms, we will again make use of asymptotic notation as was done in the Trotter-Suzuki analysis in the previous section. We will once again write a partitioning of the Hamiltonian as $H = A+B$, where $A = \sum_{i=1}^{L_A} a_i A_i$ are the terms placed in the Trotter-Suzuki channel and $B = \sum_{i=1}^{L_B} b_i B_i$ are placed in the QDrift channel.

\begin{restatable}[Imaginary Time Composite Channels]{thm}{imagCOMP}\label{thm:imagCOMP}
    Given a partitioned Hamiltonian $H = \sum_j a_j A_j + \sum_i b_i B_i$ with $\lambda_A = \sum_j a_j$, $\lambda_B = \sum_j b_j$, imaginary time $\beta$, QDrift samples $N_B$ and a density matrix initial state $\rho$, then the induced Schatten $1\rightarrow 1$ norm of the difference between the imaginary time Composite channel $\mathcal{X}^{2k}(\rho, \beta)$ of order $2k$ and the exact imaginary time evolutionary channel $\evolchan{\rho, \beta}$ has the following bound:
    \begin{equation}
    \indone{\frac{\evolchan{\rho, \beta} }{\Tr \evolchan{\rho, \beta} }- \frac{\mathcal{X}^{2k}(\rho, \beta)}{\Tr \mathcal{X}^{2k}(\rho, \beta)}} \in \bigo{\frac{\beta^2 \lambda_B^2}{N_B r} + \Upsilon^{2k+1} \frac{\alpha_{comm}(A,2k)}{(2k+1)!} \frac{\beta^{2k+1}}{r^{2k}}   + \norm{[A,B]} \frac{\beta^2}{r}}
    \end{equation}
    given that $r\geq \Upsilon \beta \lambda$. Here, all parameters in this bound correspond to the same quantities in Theorems \ref{thm:imagQD} and \ref{thm:imagTS}, where the subscripts $A$ and $B$ indicate their belonging corresponding set. 
\end{restatable}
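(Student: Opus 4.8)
The plan is to bound the difference between the ideal normalized imaginary-time channel and the normalized composite channel by decomposing the total error into three contributions that mirror the three terms appearing in the asymptotic bound: a QDrift sampling error from the $B$ terms, a Trotter-Suzuki commutator error from the $A$ terms, and a cross term arising from splitting $e^{-\beta H}$ into the $A$ and $B$ pieces. I would begin, as in Theorems \ref{thm:imagQD} and \ref{thm:imagTS}, by working with a single step of duration $\beta/r$ and then leveraging subadditivity to lift the single-step bound to $r$ steps; this is why the $\beta^{2k+1}/r^{2k}$ and $\beta^2/(N_B r)$ scalings appear rather than $\beta^{2k+1}/r^{2k+1}$. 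The core technical strategy is a telescoping/triangle-inequality argument: insert intermediate channels (exact evolution of $A$ composed with exact evolution of $B$, then the Trotter approximation of $A$ composed with exact $B$, then the full composite) so that each consecutive pair differs by exactly one of the three error sources.

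The key steps, in order, are as follows. First, I would handle the renormalizing trace operations. Because the maps are nonlinear through the $1/\Tr$ denominators, I would reuse the lemma-level machinery from the proofs of Theorems \ref{thm:imagQD} and \ref{thm:imagTS}: expand the denominators via Taylor series, isolate the zeroth-order contribution $\Tr \rho = 1$, and control the remainder using von Neumann's trace inequality together with Weyl's inequality on singular values, exactly as stated for the Trotter-Suzuki case. This produces the exponential prefactors $e^{O(\beta\lambda/r)}$ that the hypothesis $r \geq \Upsilon \beta \lambda$ renders $\bigo{1}$, so that asymptotically they drop out. Second, I would bound the $B$-channel QDrift error by invoking Theorem \ref{thm:imagQD} with $\lambda \to \lambda_B$ and $N \to N_B$ over a single step, giving the $\beta^2 \lambda_B^2 / (N_B r)$ term after the $r$-fold composition. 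Third, I would bound the $A$-channel Trotter error using Theorem \ref{thm:imagTS} with $\alpha_{comm}(H,2k) \to \alpha_{comm}(A,2k)$, yielding the $\Upsilon^{2k+1} \alpha_{comm}(A,2k)\beta^{2k+1}/((2k+1)! \, r^{2k})$ term.

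The genuinely new ingredient is the cross term $\norm{[A,B]}\beta^2/r$, which measures the cost of the factorization $e^{-\beta(A+B)/r} \approx e^{-\beta A/r} e^{-\beta B/r}$ at the operator level. I would obtain this from a first-order (in $\beta/r$) Baker-Campbell-Hausdorff / Taylor expansion of the operator difference $e^{-(A+B)\beta/r} - e^{-A\beta/r}e^{-B\beta/r}$, whose leading term is proportional to $[A,B](\beta/r)^2$, mirroring the real-time derivation in Eqs.~(2)--(4) of the Trotter-Suzuki preliminaries; promoting this to a channel difference and summing over $r$ steps produces the stated $\norm{[A,B]}\beta^2/r$ scaling. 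I expect the main obstacle to be the bookkeeping of the exponential trace-normalization factors: unlike the real-time case, the non-unitarity of $e^{-\beta H}$ means each intermediate channel in the telescope carries its own norm that decays exponentially, and one must verify that all such factors are simultaneously controlled under the single hypothesis $r \geq \Upsilon \beta \lambda$ rather than requiring separate constraints for each of the three error sources. Showing that these factors can be uniformly absorbed into the $\bigo{\cdot}$ in the infinite-parameter limit, while keeping the cross term's $\norm{[A,B]}$ dependence explicit, is where the careful work lies.
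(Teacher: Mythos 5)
Your proposal matches the paper's proof in all essentials: the paper likewise separates the error into the partitioning term $E_{A,B}$ (bounded via the operator-norm product-formula bound of Childs et al., giving the $\norm{[A,B]}\beta^2/r$ term), the QDrift and Trotter-Suzuki errors inherited from Theorems \ref{thm:imagQD} and \ref{thm:imagTS} through exactly the triangle-inequality/telescoping split you describe, and the trace-denominator mismatch handled with the same von Neumann/singular-value lower bounds, with all exponential prefactors absorbed into $\bigo{1}$ under the single hypothesis $r \geq \Upsilon\beta\lambda$. Your identification of the uniform control of the non-unitary normalization factors as the locus of the careful work is precisely where the paper's proof spends its effort.
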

For a proof of this bound see Appendix \ref{sec:imag_proof}. The condition on $r$ is not overly restrictive, nor arbitrary. In general, we do not expect to simulate the imaginary time evolution with $r$ sub-linear in either the imaginary time $\beta$ or the norms of the Hamiltonian $\lambda$, and $\Upsilon$ is an artefact of using higher order Trotter-Suzuki formulas. This result yields three error terms that we might expect. It is a sum of the QDrift and Trotter error, plus the error that arises from the initial partitioning of the Hamiltonian into two subsets. Therefore, it becomes apparent that if we have information about the terms in the Hamiltonian, we can potentially exploit the form of this error bound by choosing a good partition. By good partition, we mean one that is close to the optimal one that minimizes the error. We later show in Section \ref{sec:partition_scheme} how to possibly go about choosing a partition using intuition from physics, as well as provide a machine learning regressor to find the optimal partition. The later is more for a proof of principle rather than something you would run in practice due to the large overhead. 

\subsection{Examples and Implementation}
Within this section we illustrate an implementation of the imaginary time composite channel with an example. Approximating the imaginary time propagator $e^{-\beta H}$, which is a function of a Hamiltonian $H$ that is hard to exponentiate in general, can be achieved through the algorithm written in pseudo-code below. This algorithm returns a list of propagators (of which their product approximates $e^{-\beta H}$) that are instead easily diagonalizable. 

\begin{algorithm}[H]
\SetKwInput{Input}{Input}
\SetKwInput{Output}{Output}
\SetKwFunction{PARTITION}{PARTITION}
\SetKwFunction{TROTTER}{TROTTER}
\SetKwFunction{SAMPLE}{SAMPLE}
\caption{Pseudo-code for implementation of an imaginary time composite simulation using a high-order Trotter formula and a partitioning heuristic to divide the Hamiltonian terms between the two channels.}\label{pseudocode}

\Input{Simulation parameters: Trotter-Suzuki inner-order $2k$, QDrift samples $N_B$, iterations $r$, and imaginary time $\beta$. A list of Hamiltonian terms $H = \sum_j h_j H_j$, two deterministic classical functions: \PARTITION{$H$} which returns two lists of Hamiltonian terms $A = \sum_q a_q A_q$ and $B = \sum_p b_p B_p$, and \TROTTER{$A, 2k, \beta, r$} that returns a Trotter-Suzuki product formula of order $2k$ with time-slice $\frac{\beta}{r}$. We also require a classical oracle function \SAMPLE{} which returns a value $j$ from the probability distribution $p_j = \frac{b_j}{\sum_l b_l}$.}
\Output{A \textit{composite imaginary time product formula} in the form of a list of propagators, with \textit{inner-order} $2k$ and \textit{outer-order} $1$, which closely approximates the imaginary time propagator $e^{-\beta H}$.}

\SetAlgoLined
\SetKwProg{Fn}{Function}{:}{}
\Fn{CompositeListCompilation($H, \beta, r, 2k, N_B$)}{
    $A, B \leftarrow$ \PARTITION{$H$}\;
    $m \leftarrow 1$\;
    $\mathcal{X} \leftarrow$ \{ \}(empty ordered list)\;
    $\mathcal{T} \leftarrow$ \TROTTER{$A, 2k, \beta, r$}\;
    \While{$m \leq r$}{
        $\mathcal{Q} \leftarrow$ \{ \}(empty ordered list)\;
        $\lambda_B \leftarrow \sum_l b_l$\;
        $j \leftarrow 1$\;
        \While{$j \leq N_B$}{
            $j \leftarrow$ \SAMPLE{}\;
            append $e^{-\beta \lambda_B B_j / N_B r}$ to $\mathcal{Q}$\;
            $j \leftarrow j + 1$\;
        }
        append $\mathcal{T}$ to $\mathcal{X}$\;
        append $\mathcal{Q}$ to $\mathcal{X}$\;
        $m \leftarrow m + 1$\;
    }
    \Return $\mathcal{X}$\;
}
\end{algorithm}

\FloatBarrier

The algorithm above is written for illustrative purposes and simply constructs a composite product formula in the form of an ordered list. This can then be used as a subroutine for other algorithms in computational physics, such as those from the field of Quantum Monte-Carlo (QMC). For example, in statistical physics calculating the partition function $Z$ is of central interest. Given this returned list, which for clarity, is not a channel, we can approximate $Z$ as follows:
\begin{align}
    Z &= \Tr (e^{-\beta H}) = \Tr ((e^{-\frac{\beta}{r} A_1} ...  e^{-\frac{\beta}{r} A_{L_{A}}} e^{-\frac{\beta \lambda_B}{rN_B} B_1} ... e^{-\frac{\beta \lambda_B}{rN_B} B_{N_B}})^r )+ \epsilon,
\end{align}
where $\epsilon$ is bounded by a set of functions implied by Theorem \ref{thm:imagCOMP}. Next, we insert the resolution of the identity between each propagator such that the trace disappears
\begin{align}
    Z = \sum_{j_1, j_2, ... j_{n}} \bra{j_n} e^{-\frac{\beta}{r} A_1} \ket{j_1} ... \bra{j_{L_A -1}} e^{-\frac{\beta}{r} A_{L_{A}}} \ketbra{j_{L_A}} e^{-\frac{\beta \lambda_B}{rN_B} B_1}  \ket{j_{L_A +1}} ... \bra{j_{n-1}}e^{-\frac{\beta \lambda_B}{rN_B} B_{N_B}} \ket{j_n} + \epsilon,
\end{align} 
given that $n = |\mathcal{X}| r$ or the number of terms in the list multiplied by the number of iterations. Note the indices are written such that there $L_A r$ Trotter terms and $N_B r$ QDrift terms, with the difference that the subscripts indexing the QDrift terms are dummy indices, as these propagators are importance sampled at random, as shown in the pseudo-code. Now, with the above expression, we can either diagonalize these terms and Monte-Carlo sample this sum, or impose a classical mapping that allows us to apply some of the well known QMC algorithms. These classical mappings depend on the system in play, for example, if we are interested in Heisenberg or XXZ chains, it is possible to map the model onto a higher dimensional Ising model and apply the method of world lines \cite{suzuki1993quantum}. The motivation for using this more complicated propagator list, as opposed to the standard Trotterization, is that a propagator with a lower error requires a smaller $r$ to achieve the desired precision, and can therefore reduce the path length of the Monte-Carlo path integral required to approximate $Z$ above. Our numerics show that this framework can achieve a substantially lower error in the right conditions via attaining a smaller $C$ in multiple cost plots in Section \ref{subsubsec:iTime_results}, therefore, highlighting the potential applications of this algorithm to improve QMC simulations. \\

In this example, it was chosen to show arbitrary inner-order and outer-order 1, given that the goal is simplicity. This can be simply generalized to higher outer-orders by also requiring an additional subroutine that symmetrizes $\mathcal{T}$ and $\mathcal{Q}$, and adjusts their time-slices accordingly. We also remark that the classical function \texttt{PARTITION()} is presented here somewhat as a black box. This is done for generality, given that the user is free to use a method of their choosing. In section \ref{sec:partition_scheme}, we present an explicit algorithm for this function, which requires a single additional input $\omega_c$ called the \textit{chop threshold}, and simply involves a looping \textbf{if statement} that sorts Hamiltonian summands into $A$ or $B$ depending on whether their spectral norm is greater or less than $\omega_c$.

\section{Local Composite Channels for Real-Time Evolution}
\label{sec:local_composite}

We introduce here an algorithm for {\it real time evolution}, which utilizes the {\it local} structure of the Hamiltonian and is {\it composite}. It combines intuition from both the Trotter-QDrift composite channels as discussed in Sec. \ref{subsec:comp} and the Lieb-Robinson decomposition, Sec. \ref{sec:LR}.
Our main result is Eq. (\ref{eq:LCC}), which bounds the error of a local composite channel. 

The basic idea behind the method goes as follows: We decompose the Hamiltonian evolution into local blocks and build composite channels out of the operators supported on these local blocks. Each block is thus assigned a local partition $\{A, B\}$ and a sample number $N_B$. 

Based on prior intuition, we expect this approach to perform well on disordered models yet with local interactions, such that there is a structure for the local composite channels to exploit.  
However, by adding QDrift sampling to the local framework, with the speed limit $v_{LR}$ in mind we are ``fuzzing"  
the light-cone in the quantum circuit evolution of the system. 
For example, in Figure \ref{fig:cirquit} we show a circuit that illustrates the Trotter time evolution of a local Hamiltonian where the $H_{i,j}$ terms only connect neighbouring qubits within and between each disjoint subset $A$ and $B$ 
of the 1-dimensional lattice. 

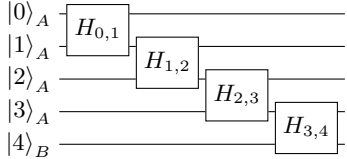
\begin{figure}[htbp!]
\centering
\begin{tikzpicture}
\begin{yquant}

qubit {$\ket{\idx}_A$} a[4];
qubit {$\ket{4}_B$} b[1];

box {$H_{0,1}$} (a[0-1]);
box {$H_{1,2}$} (a[1-2]);
box {$H_{2,3}$} (a[2-3]);
box {$H_{3,4}$} (a[3]-b[0]);

\end{yquant}
\end{tikzpicture}
\caption{Brickwork circuit diagram to highlight the emergent light-cones from nearest neighbour interactions. The notion of ``fuzzing" the light-cone occurs when we use QDrift to sample terms along these light-cone boundaries in the circuit.} \label{fig:cirquit}
\end{figure}

 The light-cone analogy is now made clear: A term $H_{0,1}$ influences the most adjacent qubit in $B$ only once 3 ``layers" of gates have been applied (intermittent gates cannot improve this speed). This behavior is analogous to the Lieb-Robinson velocity that upper-bounds the propagation of information flow in the lattice. 
 We note though that in a local composite channel, some of the gates are being randomly sampled in the QDrift channel, thus the evolution of the light-cone is not exactly reproduced since not all terms are necessarily implemented due to sampling. Keep in mind that local composite channel does include more complicated subsets of gates, some of which evolve boundary region terms like $H_{3,4}$ backwards in time; the circuit above is presented simply for conceptual purposes. 

 Towards bounding the error of a local composite channel, we build on the result from Ref. \citenum{haah2021quantum}, which concerns with bounds on time evolution operators,
 and reformulate it as a bound on channels. The motivation for the reformulation of the bound into a channel norm is so that QDrift could be incorporated into this algorithm. This would later allow us to bound the overall error. Since QDrift produces mixed states, we need to work with the density matrix formalism. 
 
\begin{theorem} 
\label{thm:LocalResult}
Given a time $t$ and a local Hamiltonian $H = \sum_X H_X$ that generates a time evolution unitary $U$ and an evolutionary channel $\capU$, $\exists \; (\mu>0) \; |$ for any disjoint regions $A,B,C$ we have the following operator norm bound: 
\begin{equation}
    ||\capU_{A\cup B} \capU^\dagger_B \capU_{B\cup C} - \capU_{A\cup B \cup C}|| \in \bigo{e^{-\mu dist(A,C)}} \sum_{X:bd(AB,C)} ||H_X||,
\end{equation}
where the evolutionary channels contain only Hamiltonians terms supported on their subscripted sets of the lattice, and X:bd(AB,C) indicates $X\subseteq A\cup B\cup C$ and $X \not \subseteq A\cup B$ and $X \not \subseteq C$.
\begin{proof}
    We first restate the main result from Ref. \citenum{haah2021quantum} on time evolution operators,
    \begin{equation}
        ||U_{A\cup B} U^\dagger_B U_{B\cup C} - U_{A\cup B \cup C}|| \in \bigo{e^{-\mu dist(A,C)}} \sum_{X:bd(AB,C)} ||H_X||.
    \end{equation} 
Now, write $V:= U_{A\cup B} U^\dagger_B U_{B\cup C}$ and $U := U_{A\cup B \cup C}$ and consider the following induced channel infinity norm
\begin{align}
\label{eq:b1}
        \max_{\rho : \norm{\rho}_1 \leq 1} ||V\rho V^\dagger - U \rho U^\dagger|| &= \max_{\rho : \norm{\rho}_1 \leq 1} ||V\rho V^\dagger - U\rho V^\dagger   + U\rho V^\dagger - U\rho U^\dagger|| \nonumber\\
        &= \max_{\rho : \norm{\rho}_1 \leq 1}||(V-U)\rho V^\dagger + U\rho(V^\dagger-U^\dagger) || \nonumber\\
        & \leq \max_{\rho : \norm{\rho}_1 \leq 1} ||(V-U)\rho V^\dagger|| + \max_{\rho : \norm{\rho}_1 \leq 1}||U\rho(V-U)^\dagger|| 
        \nonumber\\
        & \leq 2||V-U||.
\end{align}
In the last line we used the fact that the infinity norm of a density matrix is upper bounded by 1, and  the unitary invariance of the Schatten norms. Inserting the original definitions for $V$ and $U$ into $\max_{\rho : \norm{\rho}_1 \leq 1} ||\mathcal{V} - \capU|| \leq 2||V-U||$, where $\mathcal{V}$ is just $V\rho V^\dagger$, 
and similarly $\mathcal{U}= U\rho U^\dagger$, 
we successfully upper bounded the desired channel norm with the result   
with the constant factor disappearing due to 
$\mathcal{O}$  notation.
\end{proof}
\end{theorem}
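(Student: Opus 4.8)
The plan is to import the operator-norm statement of Ref.~\citenum{haah2021quantum} essentially verbatim and then \emph{upgrade} it from a bound on unitaries to a bound on the associated conjugation channels, rather than re-deriving any Lieb--Robinson estimate. Concretely, I would set $V := U_{A\cup B} U^\dagger_B U_{B\cup C}$ and $U := U_{A\cup B \cup C}$, observing that $V$ is a product of three unitaries and is therefore itself unitary. The cited result then supplies, for free, the operator bound
\begin{equation}
||V-U|| \in \bigo{e^{-\mu dist(A,C)}} \sum_{X:bd(AB,C)} ||H_X||,
\end{equation}
and the entire remaining task is to pass from $||V-U||$ to the induced channel distance $\max_{\rho:||\rho||_1\leq 1}||V\rho V^\dagger - U\rho U^\dagger||$ that defines $||\capU_{A\cup B}\capU^\dagger_B\capU_{B\cup C} - \capU_{A\cup B\cup C}||$.

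The one genuinely load-bearing step is to reduce the channel difference to the operator difference by inserting and removing a mixed term $U\rho V^\dagger$:
\begin{equation}
V\rho V^\dagger - U\rho U^\dagger = (V-U)\rho V^\dagger + U\rho(V^\dagger - U^\dagger).
\end{equation}
This telescoping isolates a single unitary error $V-U$ in each summand while leaving every other factor unitary. I would then apply the triangle inequality to the two summands and bound each using submultiplicativity of the spectral norm together with unitary invariance, so that the factors $||U||$ and $||V^\dagger||$ equal $1$ and drop out, leaving each term controlled by $||V-U||\,||\rho||_\infty$.

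Because the maximization runs over density matrices with $||\rho||_1 \leq 1$, the Schatten norm ordering $||\rho||_\infty \leq ||\rho||_1 \leq 1$ lets me discard the $\rho$-dependence entirely, yielding $\max_{\rho:||\rho||_1\leq 1}||V\rho V^\dagger - U\rho U^\dagger|| \leq 2||V-U||$. Substituting the operator bound above for $||V-U||$ and absorbing the harmless factor of $2$ into the $\bigo{\cdot}$ then completes the argument, giving precisely the claimed channel bound.

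The step I expect to require the most care is not any individual inequality but the norm bookkeeping: the output distance is measured in spectral norm while the input is normalized in trace norm, so I must make sure the submultiplicative estimate and the inequality $||\rho||_\infty \leq ||\rho||_1$ are applied to the correct Schatten indices, and that the quantity being maximized genuinely coincides with the induced channel norm named in the statement. No new Lieb--Robinson input beyond the cited operator bound is needed, so once the norm conventions are pinned down the proof is routine.
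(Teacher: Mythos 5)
Your proposal is correct and follows essentially the same route as the paper's own proof: you import the operator bound from Ref.~\citenum{haah2021quantum}, insert the cross term $U\rho V^\dagger$ to telescope the channel difference into $(V-U)\rho V^\dagger + U\rho(V^\dagger - U^\dagger)$, and then use unitary invariance together with $\norm{\rho}_\infty \leq \norm{\rho}_1 \leq 1$ to arrive at $2\norm{V-U}$, absorbing the constant into the $\mathcal{O}$ notation. Your extra care with the Schatten-index bookkeeping ($\norm{\rho}_\infty \leq \norm{\rho}_1$) is a slightly more explicit justification of the step the paper states tersely, but the argument is the same.
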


The bound (\ref{eq:b1}) we have proven above is still not of the form of that given in QDrift.
QDrift bound utilizes the diamond distance, which is  a completely bounded trace norm or Schatten 1-norm defined as $\diamondnorm{\mathcal{V}-\capU} = \max_{\rho : \norm{\rho}_1 \leq 1}||(\mathcal{V}-\capU)\otimes \openone||_1$. 
While this is an upper bound on the infinity norm in general, it is not immediately clear how tight this bound may be as factors of dimensionality of the system may come into play, which we wish to avoid. This can be made clear by a simple example: Consider an $N\times N$ identity matrix $\openone_N$. This matrix has $||\openone_N||_1 = N$ while $||\openone_N||_\infty = 1$, where $N$ is a factor of dimensionality. Therefore, if possible, we would also like to convert 
the  result (\ref{eq:b1}) to a diamond norm. Building on Ref. \citenum{hagan2022composite}, we accomplish this in the following lemma:

\begin{lemma} \label{lem:LocalDiamond}
Given a time $t$ and a local Hamiltonian $H = \sum_X H_X$ that generates a time evolution unitary $U$ and an evolutionary channel $\capU$, $\exists \; (\mu>0) \; |$ for any disjoint regions $A,B,C$ we have the following diamond distance bound
\begin{equation}
\label{eq:b2}
    \diamondnorm{\capU_{A\cup B} \capU^\dagger_B \capU_{B\cup C} - \capU_{A\cup B \cup C}} \in \bigo{e^{-\mu dist(A,C)}} \sum_{X:bd(AB,C)} ||H_X||
\end{equation}
\begin{proof}
We use Equations (11-17) from Ref. ~\citenum{hagan2022composite}  replacing their Trotter channel with the local block channel $\mathcal{V} = \capU_{A\cup B} \capU^\dagger_B \capU_{B\cup C}$ and set $\capU = \capU_{A\cup B \cup C}$ with $V$ and $U$ representing the unitaries that induce each channel,
\begin{align}
        \diamondnorm{\capU(t) - \mathcal{V}(t)} :=& \norm{\parens{\capU(t) - \mathcal{V}(t)}\otimes \openone}_1 \label{eq:diamond_to_spectral_start} \nonumber\\
        =& \max_{\rho : \norm{\rho}_1 \leq 1} \norm{e^{-iHt}\otimes \openone \rho e^{i H t}\otimes \openone - V(t)\otimes \openone \rho V^\dagger(t)\otimes \openone}_1
        \nonumber\\
        \leq& \max_{\rho : \norm{\rho}_1 \leq 1} \norm{e^{-iHt}\otimes \openone \rho e^{i H t}\otimes \openone -e^{-i H t} \otimes \openone \rho V^\dagger(t) \otimes \openone}_1 
        \nonumber\\
        \text{ }& +\max_{\rho : \norm{\rho}_1 \leq 1} \norm{ e^{-i H t} \otimes \openone \rho V^\dagger(t)  - V(t)\otimes \openone \rho V^\dagger(t)\otimes \openone}_1 
        \nonumber\\
        =& \max_{\rho : \norm{\rho}_1 \leq 1} \norm{\rho \parens{e^{-i H t} - V^\dagger(t)}\otimes \openone}_1 + \max_{\rho : \norm{\rho}_1 \leq 1} \norm{\parens{e^{-i H t} - V(t)}\otimes \openone \rho}_1 
        \nonumber\\
        \leq & 2 \norm{e^{-i H t} - V(t)}_\infty \max_{\rho : \norm{\rho}_1 \leq 1} \norm{\rho}_1 \nonumber\\
        = & 2 \norm{e^{-i H t} - V(t)}_\infty 
        \nonumber\\
        = & 2 \norm{U(t) - V(t)}_\infty.
    \end{align}
In the last line, we inserted the Theorem \ref{thm:LocalResult} to complete the proof. We observe the same bound on our Diamond distance as appeared in the induced infinity norm with the constant factor 2 once again absorbed into $\mathcal{ O}$.
\end{proof}
\end{lemma}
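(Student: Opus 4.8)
The plan is to reduce the diamond distance between the two channels to a spectral-norm distance between the unitaries that induce them, and then to invoke Theorem~\ref{thm:LocalResult} directly. The essential observation is that both $\mathcal{V} = \capU_{A\cup B}\capU^\dagger_B \capU_{B\cup C}$ and $\capU = \capU_{A\cup B\cup C}$ are conjugation channels: writing $V := U_{A\cup B}U^\dagger_B U_{B\cup C}$ and $U := U_{A\cup B\cup C}$, both $V$ and $U$ are unitary (a product of unitaries is unitary, even with the inverse factor $U_B^\dagger$), so $\mathcal{V}(\rho) = V\rho V^\dagger$ and $\capU(\rho) = U\rho U^\dagger$. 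This is exactly the structure handled by Eqs.~(11--17) of Ref.~\citenum{hagan2022composite} in the Trotter setting, so the strategy is to replay that argument with the local block unitary $V$ in place of the product formula $V(t)$ used there.

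First I would expand the diamond norm from its definition, $\diamondnorm{\capU(t) - \mathcal{V}(t)} = \max_{\rho:\norm{\rho}_1\leq 1}\norm{(U\otimes\openone)\rho(U^\dagger\otimes\openone) - (V\otimes\openone)\rho(V^\dagger\otimes\openone)}_1$, where the maximization runs over states on the system tensored with an ancilla. Next I would insert and subtract the cross term $(U\otimes\openone)\rho(V^\dagger\otimes\openone)$ and apply the triangle inequality, splitting the trace norm into $\norm{((U-V)\otimes\openone)\,\rho\,(V^\dagger\otimes\openone)}_1$ and $\norm{(U\otimes\openone)\,\rho\,((U-V)^\dagger\otimes\openone)}_1$. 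For each piece I would peel off the operator difference in spectral norm using the submultiplicativity bounds $\norm{XY}_1\leq\norm{X}_\infty\norm{Y}_1$ and $\norm{YX}_1\leq\norm{Y}_1\norm{X}_\infty$, leaving a residual factor of either $\rho(V^\dagger\otimes\openone)$ or $(U\otimes\openone)\rho$. Because $U,V$ are unitary and the Schatten $1$-norm is unitarily invariant, each residual factor has trace norm equal to $\norm{\rho}_1\leq 1$, and because $\norm{(U-V)\otimes\openone}_\infty = \norm{U-V}_\infty$ the ancilla extension contributes nothing. Summing the two pieces gives $\diamondnorm{\capU - \mathcal{V}}\leq 2\norm{U-V}_\infty$.

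The final step is to substitute Theorem~\ref{thm:LocalResult}, which bounds $\norm{U_{A\cup B}U^\dagger_B U_{B\cup C} - U_{A\cup B\cup C}}_\infty$ by $\bigo{e^{-\mu\, dist(A,C)}}\sum_{X:bd(AB,C)}\norm{H_X}$; the prefactor $2$ is then absorbed into the $\mathcal{O}$ notation, yielding Eq.~(\ref{eq:b2}).

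The step I expect to be the main obstacle is precisely the one flagged in the paragraph preceding the lemma: guaranteeing that no factor of the Hilbert-space dimension enters when passing from the ancilla-extended trace norm to the spectral norm. A careless route --- bounding the channel difference directly via $\norm{\cdot}_1\leq d\,\norm{\cdot}_\infty$ with $d$ the dimension --- would reintroduce exactly the dimensional penalty that the $\openone_N$ example warns against. The telescoping trick circumvents this by ensuring that at every stage one factor is a genuine density matrix of unit trace norm while only the operator difference $U-V$ is measured in spectral norm, so the bound stays dimension-free. Verifying that the $\otimes\openone$ extension commutes cleanly through this factorization, and confirming that $V$ is genuinely unitary despite the adjoint channel $\capU^\dagger_B$, are the two places where care is required; everything else is routine application of the triangle inequality and submultiplicativity of the Schatten norms.
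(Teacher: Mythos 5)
Your proposal is correct and follows essentially the same route as the paper's proof: expand the diamond norm, insert the cross term $(U\otimes\openone)\rho(V^\dagger\otimes\openone)$, apply the triangle inequality and the bounds $\norm{XY}_1\leq\norm{X}_\infty\norm{Y}_1$ together with unitary invariance to reach $2\norm{U-V}_\infty$, and then invoke Theorem \ref{thm:LocalResult} with the factor $2$ absorbed into the $\bigo{\cdot}$ notation. Your additional remarks on the unitarity of $V$ and on avoiding dimension-dependent factors are accurate commentary on the same argument rather than a departure from it.
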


Onward, the idea of the the local-composite algorithm is to start with some Hamiltonian $H = \sum_X H_X$ defined on some lattice $\Gamma$ with $X \subseteq \Gamma$. Begin by breaking up the lattice into two regions: $A\cup B = \Gamma$. Now we take a smaller region of the lattice $Y \subseteq A$ and use the result to approximate the time evolution operator in the following way,
\begin{equation} \label{eq:blocking}
    e^{-iHt} \approx e^{-iH_At} (e^{-iH_Yt})^\dagger e^{-iH_{Y\cup B }t}
\end{equation}
Note that this breakup is distinct from Trotterization in that we are decomposing the operator into a product formula of which the operator exponentials still contain sums of operators. However, the operator sums now only contain terms in each local ``block". We can re-curse this process to make $m$ blocks of near-equal size. Next, we decide which simulation method to employ to approximate and implement the block-evolution operators. In the previous theorem 
(\ref{eq:b2}),
a bound was proved for only one round of decomposition, but note that when defining $V$, the proof holds regardless of how many unitary blockings the operator is composed of. 
In Ref. \citenum{haah2021quantum}, it was stated that the error in Theorem \ref{thm:LocalResult} for $D$ rounds of decomposition in a $D$-dimensional lattice is $\in \bigo{e^{-\mu l} \nicefrac{DL^D}{l}}$ for a lattice with chain length $L$ and blocks with overlap $l$. The overlap for the block in the above example would be the diameter of the set $Y$, which can be written $diam(Y)=\max_{y,y'\in Y} dist(y,y')$;  the latter distance is again the distance between lattice indices. 

Before investigating this channel numerically, we  provide a bound on the diamond norm between this local composite channel and the {\it exact} evolutionary channel. Along with the fact that the diamond norm is sub-additive and sub-multiplicative, we will make use of the following two properties:
\begin{align*} \label{eq:normsum}
    \diamondnorm{\mathcal{A}^{\circ r} - \mathcal{B}^{\circ r}} &\leq r \diamondnorm{\mathcal{A} - \mathcal{B}} \\
    \diamondnorm{\mathcal{A}\mathcal{B} - \mathcal{C} \mathcal{D}} &\leq \diamondnorm{\mathcal{A} - \mathcal{C}} + \diamondnorm{\mathcal{B}-\mathcal{D}},
\end{align*}
Where $\mathcal{A}, \mathcal{B}, \mathcal{C}$ and ${\mathcal{D}}$
represent channels.  

These properties were shown in Ref. \citenum{watrous2018theory}. For our purposes, we require a more general property that holds for the composition of $k$ channels $\mathcal{A}_i$. We provide a simple proof in the following lemma:

\begin{lemma} \label{lem:multisub}
    Given a composition of $k$ quantum channels $\mathcal{A}_i$, and another composition of $k$ quantum channels $\widetilde{\mathcal{A}_i}$, the diamond norm of the difference between these two channels has the following bound
    \begin{equation}
        \diamondnorm{\bigcirc^k_{i=1} \mathcal{A}_i - \bigcirc^k_{i=1} \mathcal{B}_i} \leq \sum_{i=1}^k \diamondnorm{\mathcal{A}_i-\mathcal{B}_i} . \label{eq:normsum}
    \end{equation}
    \begin{proof}
The proof is achieved by induction and it uses the fact that $\diamondnorm{\mathcal{A}_i} \leq 1$ for any quantum channel, which is due to the fact that the spectral norm is unitary invariant and density matrices have a maximum eigenvalue of 1. Starting from the left and suppressing the composition operation $\circ$ above, we have
\begin{align}
    \diamondnorm{\mathcal{A}_k \mathcal{A}_{k-1} ...  \mathcal{A}_1 - \mathcal{B}_k  \mathcal{B}_{k-1} ...  \mathcal{B}_1} &= \diamondnorm{\mathcal{A}_k \mathcal{A}_{k-1} ...  \mathcal{A}_1 - \mathcal{B}_k \mathcal{A}_{k-1} \mathcal{A}_{k-2}...\mathcal{A}_1 + \mathcal{B}_k \mathcal{A}_{k-1} \mathcal{A}_{k-2}...\mathcal{A}_1 - \mathcal{B}_k  \mathcal{B}_{k-1} ...  \mathcal{B}_1} 
    \nonumber\\
    &= \diamondnorm{(\mathcal{A}_k - \mathcal{B}_k)\mathcal{A}_{k-1}...\mathcal{A}_1 + \mathcal{B}_k(\mathcal{A}_{k-1}\mathcal{A}_{k-2}...\mathcal{A}_1 - \mathcal{B}_{k-1}\mathcal{B}_{k-2}...\mathcal{B}_1)}\nonumber\\
    & \leq \diamondnorm{\mathcal{A}_k - \mathcal{B}_k}\diamondnorm{\mathcal{A}_{k-1}...\mathcal{A}_1} + \diamondnorm{\mathcal{B}_k}\diamondnorm{\mathcal{A}_{k-1}\mathcal{A}_{k-2}...\mathcal{A}_1 - \mathcal{B}_{k-1} \mathcal{B}_{k-2}...\mathcal{B}_1} 
    \nonumber\\
    & \leq \diamondnorm{\mathcal{A}_k - \mathcal{B}_k} + \diamondnorm{\mathcal{A}_{k-1}\mathcal{A}_{k-2}...\mathcal{A}_1 - \mathcal{B}_{k-1}\mathcal{B}_{k-2}...\mathcal{B}_1} .
    \end{align}
Repeating this procedure, we decompose the rightmost term into a sum of two channel differences by induction, allowing one to prove the original identity in Equation (\ref{eq:normsum}).
\end{proof}
\end{lemma}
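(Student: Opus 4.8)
The plan is to prove this by induction on the number $k$ of composed channels, peeling off one channel at a time from the left via a telescoping (``hybrid argument'') that inserts a single mixed term. The only ingredients required are the two properties already quoted just before the lemma — sub-additivity and sub-multiplicativity of the diamond norm — together with the elementary fact that any quantum channel $\mathcal{A}$ satisfies $\diamondnorm{\mathcal{A}} \leq 1$, and hence so does any composition of channels, since a composition of completely positive trace-preserving maps is again completely positive and trace-preserving.

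The base case $k=1$ is immediate, as the claimed inequality reduces to $\diamondnorm{\mathcal{A}_1 - \mathcal{B}_1} \leq \diamondnorm{\mathcal{A}_1 - \mathcal{B}_1}$. For the inductive step I would assume the bound holds for any two $(k-1)$-fold compositions and then treat the $k$-fold case. The key manipulation is to add and subtract the single hybrid term $\mathcal{B}_k \circ \mathcal{A}_{k-1} \circ \cdots \circ \mathcal{A}_1$, so that
\begin{align}
\diamondnorm{\bigcirc_{i=1}^k \mathcal{A}_i - \bigcirc_{i=1}^k \mathcal{B}_i}
&= \diamondnorm{\parens{\mathcal{A}_k - \mathcal{B}_k}\circ \bigcirc_{i=1}^{k-1}\mathcal{A}_i
+ \mathcal{B}_k \circ \parens{\bigcirc_{i=1}^{k-1}\mathcal{A}_i - \bigcirc_{i=1}^{k-1}\mathcal{B}_i}} \nonumber\\
&\leq \diamondnorm{\mathcal{A}_k - \mathcal{B}_k}\,\diamondnorm{\bigcirc_{i=1}^{k-1}\mathcal{A}_i}
+ \diamondnorm{\mathcal{B}_k}\,\diamondnorm{\bigcirc_{i=1}^{k-1}\mathcal{A}_i - \bigcirc_{i=1}^{k-1}\mathcal{B}_i}.\nonumber
\end{align}
Here sub-additivity splits the two summands and sub-multiplicativity factors each product. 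Using $\diamondnorm{\bigcirc_{i=1}^{k-1}\mathcal{A}_i}\leq 1$ and $\diamondnorm{\mathcal{B}_k}\leq 1$ leaves exactly $\diamondnorm{\mathcal{A}_k - \mathcal{B}_k}$ plus the diamond distance between two $(k-1)$-fold compositions, to which the inductive hypothesis applies and yields $\sum_{i=1}^{k-1}\diamondnorm{\mathcal{A}_i - \mathcal{B}_i}$. Summing gives the claim.

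The calculation is genuinely routine, so there is no serious analytic obstacle; the one point that must be stated carefully rather than glossed over is that each intermediate object $\bigcirc_{i=1}^{m}\mathcal{A}_i$ is itself a legitimate quantum channel, which is what licenses the bound $\diamondnorm{\,\cdot\,}\leq 1$ on it. This is exactly the role played by the remark preceding the lemma that $\diamondnorm{\mathcal{A}_i}\leq 1$ for any channel, extended to compositions via closure of CPTP maps under composition. I would therefore state this closure property explicitly at the start of the proof so that both appearances of the $\leq 1$ bound in the displayed chain are fully justified, and then let the induction run as above.
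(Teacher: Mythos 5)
Your proposal is correct and follows essentially the same argument as the paper: the same hybrid-term insertion $\mathcal{B}_k \circ \mathcal{A}_{k-1} \circ \cdots \circ \mathcal{A}_1$, the same application of sub-additivity and sub-multiplicativity, and the same use of $\diamondnorm{\cdot} \leq 1$ for channels, with the paper's ``repeat the procedure'' phrasing simply being your induction made informal. Your explicit remark that compositions of CPTP maps are again CPTP (licensing the $\leq 1$ bound on the intermediate compositions) is a small tightening of a point the paper leaves implicit, but it is not a different route.
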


Next, we utilize Theorem \ref{thm:LocalResult}, where we converted the spectral norm result on local block unitaries to a bound on a local decomposition channel. 
It is now made clear why this was done; we require a channel description to describe the QDrift algorithm and relate their bounds. Using Lemma \ref{lem:multisub}, we proceed with the initial goal of bounding the diamond norm between the local composite channel and the exact evolutionary channel. Here, for compactness we adopt the notation of writing blocked composite local channels as $\mathcal{X}_{Y_i}$ and ``exact" local channels (prior to hybridization into Trotter and QDrift) $\mathcal{U}_{Y_i}$. Here, $Y_i$ indicates a set of lattice points $Y_i \subseteq \Gamma$ such that each $Y_i$ is constructed via the algorithm laid out earlier in this Section. Therefore, only certain Hamiltonian interaction terms are supported on the lattice subsets $Y_i$. Additionally, let $\mathcal{U}_{Y_i}$ contain terms that have also undergone the appropriate conjugation (given that the intersections of some blocks are evolved backwards in time). Finally, let each set $Y_i \in Y$ such that $Y = \{Y_1, Y_2, ... , Y_m\}$ and $m$ is the number of blockings constructed by the Lieb-Robinson decomposition. This provides convenient notation for the following theorem:

\begin{theorem} \label{thm:localcomp_diamond}
    Given a time $t$, local channel iterations $r$, and a local Hamiltonian $H_\Gamma$ on a $D$-dimensional lattice $\Gamma$, let $\mathcal{X}_Y(t)$ be a local-composite channel with $m$ local blocks $Y_i$, and partition $\{A, B\}_Y$ on each block, then the diamond norm of the difference between the local composite channel and the evolutionary channel has the following bound:
    \begin{align}
        &\diamondnorm{\bigcirc_Y \parens{\mathcal{X}_Y(\nicefrac{t}{r})}^{\circ r} - \capU(t)} \leq m \max_Y\diamondnorm{\parens{\mathcal{X}_Y(\nicefrac{t}{r})}^{\circ r} - \mathcal{U}_{Y}(t)} + \epsilon_{LR}
        \nonumber\\
        &\quad= \frac{t^2 m}{2r} \max_Y \parens{\parens{\sum_{i,j} a_{Y_i} a_{Y_j} \norm{[A_{Y_i}, A_{Y_j}]} + \sum_{i, j} a_{Y_i} b_{Y_j} \norm{[A_{Y_i}, B_{Y_j}]}} +  \frac{4 \lambda_{B_Y}^2 t^2 }{N_{B_Y} r}} + \bigo{e^{-\mu l} DL^D/l}
        \label{eq:LCC}
    \end{align} 
    where $A_Y, B_Y$ are once again the terms of $H_Y$ in Trotter and QDrift with their respective spectral norms $a_Y, b_Y$, $N_B$ is the number of QDrift samples and $\lambda$ is the sum of the spectral norms of the terms in QDrift. It should also be stressed that $r$ is a local channel iteration, which means that we do not iterate the channel that is a composition of blocks but rather each individual local composite channel built out of the block terms. The reason for this is that the error in the local decomposition is independent of time for time independent $H$ with $t \in \bigo{1}$. We also require $m \in \bigo{(L/l)^D}$ \cite{haah2021quantum}. 
    \begin{proof}
        We begin with the norm we wish to bound, and then apply the above identities to format the norm such that known bounds from \cite{haah2021quantum} and \cite{hagan2022composite} can be applied. 
        \begin{align}
            \diamondnorm{\bigcirc_{i=1}^m \parens{\mathcal{X}_Y(\nicefrac{t}{r})}^{\circ r} - \capU(t)} &= \diamondnorm{\bigcirc_{i=1}^m  \parens{\mathcal{X}_{Y_i}(\nicefrac{t}{r})}^{\circ r} - \bigcirc_{i=1}^m \mathcal{U}_{Y_i}(t) + \epsilon_{LR}} \\
            &\leq  \sum_{i=1}^m \diamondnorm{ \parens{\mathcal{X}_{Y_i}(\nicefrac{t}{r})}^{\circ r} - \mathcal{U}_{Y_i}(t)} +\epsilon_{LR} \\
            & \leq  m \max_{Y} \diamondnorm{\parens{\mathcal{X}_Y(\nicefrac{t}{r})}^{\circ r} -  \mathcal{U}_{Y}(t)} +\epsilon_{LR}
        \end{align}
        Then by inserting the bound given in \cite{hagan2022composite} Equation 31, we arrive at the original expression. It then immediately follows that we can bound local composite channels of any order $2k$ by inserting the result of \cite{hagan2022composite} Equation 58, however, we omit this as it follows trivially from above.
    \end{proof}
\end{theorem}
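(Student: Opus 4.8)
The plan is to reduce the global diamond-norm distance to a sum of local, block-level distances, incurring a single Lieb-Robinson penalty along the way, and then to invoke the previously-established composite-channel bound on each block. First I would introduce the intermediate object $\bigcirc_{i=1}^m \mathcal{U}_{Y_i}(t)$, the composition of the \emph{exact} local block channels (each evolving its block, with the appropriate backward-in-time conjugations on the overlaps, for the full time $t$). The aim is to write $\capU(t)$ as this composition up to an error $\epsilon_{LR}$; this is precisely the content of Lemma \ref{lem:LocalDiamond}, whose proof (as noted in the text following Eq. (\ref{eq:b2})) holds regardless of how many blockings the operator is composed of. Recursing the single-round decomposition across the $m$ blocks and $D$ lattice dimensions, and importing the accumulated-error estimate from Ref.~\citenum{haah2021quantum}, yields $\epsilon_{LR} \in \bigo{e^{-\mu l} DL^D/l}$ with $m \in \bigo{(L/l)^D}$.

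Next I would peel off this Lieb-Robinson term with a single application of sub-additivity of the diamond norm, leaving $\diamondnorm{\bigcirc_{i=1}^m (\mathcal{X}_{Y_i}(\nicefrac{t}{r}))^{\circ r} - \bigcirc_{i=1}^m \mathcal{U}_{Y_i}(t)}$. To this difference of two $m$-fold channel compositions I would apply Lemma \ref{lem:multisub}, which bounds it by $\sum_{i=1}^m \diamondnorm{(\mathcal{X}_{Y_i}(\nicefrac{t}{r}))^{\circ r} - \mathcal{U}_{Y_i}(t)}$. Replacing each summand by its worst case over blocks then produces the factor $m \max_Y$, which is exactly the first line of the claimed bound.

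Finally I would insert the known per-block composite estimate, Eq. (31) of Ref.~\citenum{hagan2022composite}, for $\diamondnorm{(\mathcal{X}_Y(\nicefrac{t}{r}))^{\circ r} - \mathcal{U}_Y(t)}$. That bound supplies the Trotter commutator sum $\sum_{i,j} a_{Y_i} a_{Y_j}\norm{[A_{Y_i},A_{Y_j}]}$, the cross term $\sum_{i,j} a_{Y_i} b_{Y_j}\norm{[A_{Y_i},B_{Y_j}]}$, and the QDrift contribution $4\lambda_{B_Y}^2 t^2 / (N_{B_Y} r)$, all carrying the prefactor $t^2/(2r)$; combined with $m \max_Y$ and $\epsilon_{LR}$ this reproduces Eq. (\ref{eq:LCC}). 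The higher-order $2k$ generalization follows identically by substituting Eq. (58) of Ref.~\citenum{hagan2022composite} for the first-order composite bound.

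The main obstacle I anticipate is not the algebra but the bookkeeping of the Lieb-Robinson error across multiple rounds of decomposition: Lemma \ref{lem:LocalDiamond} is stated for a single tripartition $(A,B,C)$, so one must argue that recursing the blocking $D$ times on a chain of length $L$ with overlap $l$ genuinely accumulates to $\bigo{e^{-\mu l} DL^D/l}$ rather than to something worse. This rests on the claim, borrowed from Ref.~\citenum{haah2021quantum}, that each successive blocking contributes only a boundary-supported error that is exponentially suppressed in $l$, and that the number of blocks grows only as $(L/l)^D$. A secondary subtlety worth flagging is that $r$ is a \emph{local} iteration applied to each block channel individually rather than to the full composition; one must verify that iterating each $\mathcal{X}_{Y_i}$ a total of $r$ times before composing is consistent with comparing against the un-iterated exact block channel $\mathcal{U}_{Y_i}(t)$, which is legitimate precisely because the local decomposition error is time-independent for $t \in \bigo{1}$.
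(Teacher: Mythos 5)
Your proposal is correct and follows essentially the same route as the paper's proof: introduce the composition of exact local block channels $\bigcirc_{i=1}^m \mathcal{U}_{Y_i}(t)$ with Lieb-Robinson error $\epsilon_{LR}$, apply Lemma \ref{lem:multisub} to reduce to a sum of per-block diamond distances, bound that sum by $m \max_Y$, and insert Equation 31 (and Equation 58 for order $2k$) of Ref.~\citenum{hagan2022composite}. Your added care about the multi-round accumulation of the Lieb-Robinson error and the meaning of local iterations $r$ makes explicit points the paper treats only in passing, but it does not change the argument.
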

The takeaway from this is that the block-local composite channel is bounded by an error that is very similar to that of the general composite channel asides from the addition of an extra term that is asymptotically exponentially small. However, while we expect to see improvement from this localization scheme, getting more out of this bound does not seem feasible due to the amount of freedom in this algorithm as almost every parameter is dependent on the ``worst" subset of the set $Y$. 

\section{Numerical Simulations} \label{sec:numerical_sim}
In this section we highlight the details of configuring our numerical investigations by defining Hamiltonians of interest. We then discuss partitioning strategies and choice of an error measure before presenting numerical results. The choice of error measure usually does not require a dedicated subsection. However, when considering composite algorithms, there is a subtle feature in that not all algorithms may be treated on equal footing by the same error measure, which can lead to inconsistencies. For our proposes, the trace distance is the most useful measure of choice. Rather than computing upper-bounds, the numerical results we provide here are exact calculations of the cost $C$, or the precise counts on the number of quantum gates with the form $e^{-i\frac{t}{r} \bigotimes_j \sigma_j^\nu}$ required to meet some desired precision $\epsilon$. We write gates in this form to highlight that the Hamiltonian summands in our simulations are always written as a tensor product of Pauli operators, allowing for a nice parallel to the well known rotation gates in the context of quantum computing. For imaginary time, although this form still holds in our numerics, the same analogy does not hold, as we are simply constructing $e^{-\frac{\beta}{r} H_j}$ on a classical computer which can be of somewhat arbitrary form. \\ 

In order to calculate $C$, we have constructed a library to compile any desired product formula simulation, given a list of Hamiltonian terms, partition, QDrift samples, simulation time and desired precision. These simulator objects are handled by external functions that can partition the simulators, calculate errors and exact costs, or approximate simulation cost via Monte-Carlo methods and more. The library is built on NumPy, but also contains conversion functions to load Hamiltonians generated by quantum chemistry packages OpenFermion \cite{mcclean2020openfermion} and PySCF \cite{sun2018pyscf} to simulate systems of interest. It also contains methods for geometrically local simulations to compute blockings using the required set logic.

\subsection{Hamiltonians of Interest}
Within this section, we introduce the Hamiltonians in our numerical investigation of composite algorithms, as well as briefly outline methods used for their generation. 

\subsubsection{Electronic Structure Hamiltonians in Second Quantization} \label{sec:hydrogen} The electronic structure problem is perhaps one of the most famous classically intractable problems that has vast applications in quantum chemistry. In order to write down these Hamiltonians, it is first necessary to introduce Fermionic creation and annihilation operators. Fermions are particles with half-integer spin that obey Fermi-Dirac statistics, meaning they obey the following anti-commutation relations:
\begin{align}
    \{a_m, a_n^\dagger\}:&= a_ma_n^\dagger + a_n^\dagger a_m = \delta_{mn} \\
    \{a_m, a_n\} &= \{a_m^\dagger, a_n^\dagger\} = 0
\end{align}

We work in Fock space where the subscripts of operators indicate the excitation number or the atomic orbital of an electron. Molecular electronic structure Hamiltonians then take the following form: 
\begin{equation}\label{eq:secondquant}
    H = \sum_{mn} h_{mn} a_m a_n^\dagger + \sum_{pqrs} h_{pqrs} a_p a_q a_r^\dagger a_s^\dagger 
\end{equation} 
Here the first term represents single excitations and the second term keeps track of double excitations or ``hopping" amongst orbitals. The coefficients $h$ are molecular integrals that depend on the basis of choice to describe the molecule \cite{whitfield2011simulation}. Further, more can be done to aid in the implementation of this problem on a quantum computer. The Jordan-Wigner transformation provides a one-to-one mapping between the fermionic and spin operators. This will allow us to write down the time evolution operator in terms of the universal rotation gates (and CNOT gates). To understand the transformation, first observe: 
\begin{align}
    a^\dagger &= \begin{bmatrix}
    0 & 0\\
    1 & 0 
    \end{bmatrix} = \frac{\sigma^x - i\sigma^y }{2}:= \sigma^- \\
    a &= \begin{bmatrix}
    0 & 1\\
    0 & 0 
    \end{bmatrix} = \frac{\sigma^x + i\sigma^y }{2}:= \sigma^+
\end{align}
Now to build in the desired commutation relations and  generalize this to a Hilbert space for $N$ qubits, or the tensor (Kronecker) product of $N$ 2-dimensional Hilbert spaces $\hilbSpace = \bigotimes_{i=1}^N \hilbSpace_i$:

\begin{align}
    a_n^\dagger & \Leftrightarrow \openone^{\otimes n -1} \otimes  \sigma^- \otimes (\sigma^z)^{\otimes N- n -1} \\
    a_n & \Leftrightarrow \openone^{\otimes n -1} \otimes  \sigma^+ \otimes (\sigma^z)^{\otimes N- n -1}
\end{align}

In order to build these Hamiltonians numerically, we use the OpenFermion \cite{mcclean2020openfermion} and PySCF \cite{sun2018pyscf} packages for quantum chemistry. OpenFermion is a library that allows for the easy manipulation of fermionic operators that arise in quantum chemistry, as well as it interfaces with a variety of electronic structure packages that perform molecular integrals in the basis of choice to generate Hamiltonians in the form of Equation \ref{eq:secondquant}. Further, OpenFermion also has the Jordan Wigner transform built in, allowing one to construct this Hamiltonian in the Pauli basis. PySCF was our electronic structure package of choice to compute molecular integrals. \\


Given the form of Equation \ref{eq:secondquant}, we observe that our Hilbert space needs to be truncated. An active space calculation does exactly this; the Hamiltonian is written in a space such that only so many orbitals are ``active" or such that an electron can be excited to occupy active orbitals. We generate all of our electronic structure Hamiltonians in the minimal basis where we use a the number of qubits equal to the total period of the molecule. For our numerical investigation, we provide a function to generate chains of hydrogen atoms given a very simple input; bond length and number of atoms. The function uses PySCF to compute the molecular integrals, and then uses the data to build the Hamiltonian in an active space implied by the minimal basis, and using a minimal spin configuration. 


\subsubsection{Jellium Uniform Electron Gas}
Jellium is a model of a uniform electron gas that captures the interactions between delocalized electrons in a solid with uniformly distributed positive potentials serving as Nuclei. It is not only a system of interest in Materials Science, but also as a benchmark system in quantum simulation. More compact representations of this Hamiltonian have been proposed as a candidate for experimental simulation on near-term hardware \cite{babbush2018low}. The system Hamiltonian has a closed form representation and does not require any additional molecular integrals to construct:
\begin{equation}
    H = \frac{1}{2} \sum_{p, \sigma} k^2_p a^\dagger_{p, \sigma} a_{p, \sigma} - \frac{4\pi}{\Omega}\sum_{p\neq q, j, \sigma} \parens{\zeta_j \frac{e^{ik_{q-p} \cdot R_j}}{k^2_{p-q}}} a^\dagger_{p, \sigma} a_{q, \sigma} + \frac{2\pi}{\Omega} \sum_{(p,\sigma)\neq (q, \sigma '), \nu \neq 0} \frac{a^\dagger_{p, \sigma} a^\dagger_{q, \sigma '} a_{q+\nu, \sigma '} a_{p -\nu, \sigma}}{k_\nu^2}
\end{equation}
where the $j$th nuclei has position $R_j$ and atomic number $\zeta_j$, and $k_\nu=\frac{2\pi \nu}{\Omega^{\nicefrac{1}{3}}}$ with cell volume $\Omega$ and $\sigma$ containing both up and down spins. For the derivation of this Hamiltonian see Appendix B of \cite{babbush2018low}. Conveniently, OpenFermion also provides simple functions to quickly generate this Hamiltonian, and we do so in the momentum plane wave basis with periodic boundary conditions. We elect not to use the more compact plane wave dual basis representation presented in \cite{babbush2018low}, due to the fact that we are using this Hamiltonian as a benchmark, rather than studying the outputs of the simulation. For the composite simulation, Jellium provides many Hamiltonian terms and a very sharply peaked distribution (see Figure  for a system of size equal to that of the spin models we study. Given that system size is more of a limiting factor than term number in our numerical study, this presents an opportunity to see how a composite channel performs on a system with greater $L$. To limit the system size we also use a spinless model, and then perform the Jordan-Wigner transformation on the second quantized Hamiltonian to represent our Hamiltonian as a sum of Pauli operators. This Hamiltonian is constructed with the necessary transformations using OpenFermion \cite{mcclean2020openfermion}.

\begin{figure}[h!]
    \centering
        \begin{subfigure}[b]{.49\textwidth}
            \includegraphics[width=1\textwidth]{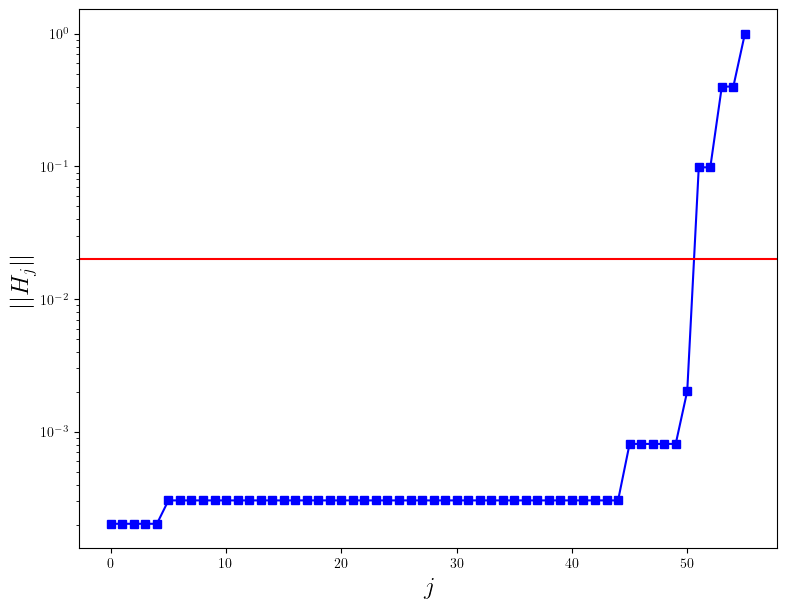}
            \caption{}
        \end{subfigure}
        \begin{subfigure}[b]{.49\textwidth}
            \includegraphics[width=1\textwidth]{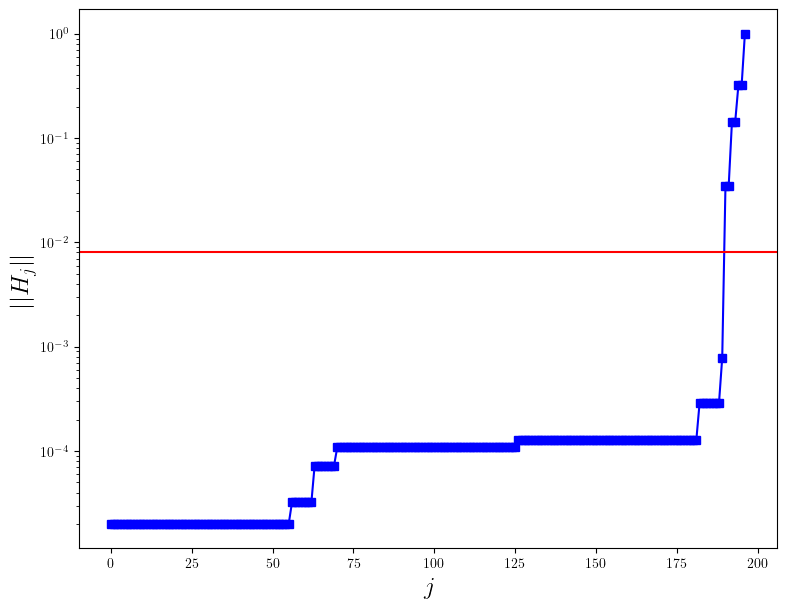}
            \caption{}
        \end{subfigure}
        \caption{\textit{Jellium Spectral Norm Distribution:} Semi-log plots of the sorted normalized spectral norms versus Hamiltonian index for 5 and 7 site Jellium models in figures (a) and (b) respectively. The plots show the increases in number of terms as well as how the distributions become increasingly sharply peaked. In red we provide a potential choice of $\omega_c$ for the partitioning heuristic.} \label{fig:Jelliumspec}
\end{figure}
\FloatBarrier

\subsubsection{Graph Hamiltonian Model}
The Hamiltonian we explore here involves a spin chain imposed on a lattice $\Gamma \in \mathbb{Z}^D$ with a graph distance metric $dist(\mathbf{u},\mathbf{v}) = |\mathbf{u}-\mathbf{v}|_1$ where $\mathbf{u}$ and $\mathbf{v}$ are vectorized coordinates on the graph with dimension equal to $D$. For our investigation we only examine lattices with $D=1$, given that for a fixed number of sites, this gives the most sharply peaked spectral distribution than any other $D$:

\begin{equation}
    H = \sum_{i>j} e^{-dist(i,j)}\alpha_{ij} \sigma^x_i \sigma^x_j + \sum_k \beta_k \sigma_k^z
\end{equation}

This system is similar to the quantum transverse field Ising model but with interactions that fall of exponentially with graph distance. The coefficients $\alpha_{ij}$ and $\beta_k$ are site dependant coupling constants that allow for the introduction of more disorder and/or structure in the Hamiltonian. To add some disorder to the model, we draw these coefficients pseudo-randomly from a Gaussian distribution with mean 0 and variance 1. \\

\subsubsection{Heisenberg Model}
The Heisenberg model describes a quantum spin system in a magnetic field with nearest-neighbour interactions. The Hamiltonian takes the following form:
\begin{equation} \label{eq:heisenberg}
    H = \sum_j \left(J_x \sigma_j^x \sigma_{j+1}^x + J_y \sigma_j^y \sigma_{j+1}^y + J_z \sigma_j^z \sigma_{j+1}^z \right)+ \sum_i B_z \sigma_i^z
\end{equation}
Here $B_z$ is the strength of the magnetic field in the $z$ direction and $J_{\{x, y, z\}}$ are coupling constants. Given the intuition of our composite channel, we expect this model to take advantage of our algorithm when the coupling constants largely differ in magnitude such that partitioning into Trotter and QDrift takes advantage of more Hamiltonian structure. Furthermore, introducing site-dependant coupling constants or writing down a highly disordered spin system could further add structure that the algorithm can take advantage of. A Hamiltonian of this nature would look something like the following: 
\begin{equation} \label{eq:spin_glass}
    H = \sum_j \left(J_x^{(j)} \sigma_j^x \sigma_{j+1}^x + J_y^{(j)} \sigma_j^y \sigma_{j+1}^y + J_z^{(j)} \sigma_j^z \sigma_{j+1}^z \right) + \sum_i B_z \sigma_i^z
\end{equation}
This Hamiltonian appears somewhat contrived for extracting the performance of the local composite channel. However, this system closely resembles the Edwards-Anderson model of a spin glass, a system of interest in condensed matter physics. In an attempt to create a sharp distribution, we simply sample $J_\nu^{(j)}$ from an exponential distribution with a scale parameter of 0.1. \\

\subsection{Partitioning Schemes} \label{sec:partition_scheme}
The main difficulty with deploying the composite simulation framework concerns  finding a good partitioning. In introducing composite simulations, Hagan and Wiebe suggested partitioning schemes derived on the basis of optimizing analytic cost functions both in deterministic and probabilistic settings \cite{hagan2022composite}. Here, we take a different approach involving the exact calculation of the simulation error, and an optimization routine that, given convergence, finds the optimal partitioning and gate count with respect to a chosen error tolerance and simulation time. This approach is used to answer the question regarding the best savings one can hope to achieve when deploying composite methods to simulate a specific Hamiltonian. We are not proposing this as a pre-processing routine, as it has complexity greater than that of the simulation itself, which is trivial as the optimization involves solving the simulation problem recursively. In addition, we also arrive at simple heuristics that can be used to partition certain Hamiltonians with little overhead, which we do propose as a strategy for using a composite approach. \\

\textit{Chop} is the partition that we introduce in this work. The idea is based on the heuristic of placing a few terms with larger spectral norms into Trotter-Suzuki channels and numerous small terms into QDrift, assuming that the Hamiltonian presents this structure. We start by sorting the terms by their spectral weights and introducing a ``chop threshold" $\omega_c \in [0, \max_i h_i]$. This scale will determine the partition such that if a term has spectral norm $h_i\geq\omega_c$ then $H_i\rightarrow A$ if $h_i < \omega_c$ then $H_i\rightarrow B$. Now we can express the error tolerance $\epsilon$ as a function of channel iterations $r$, with partitioning chop $\omega_c$ and a sample number $N_b$ that will be chosen in an optimization routine, and for a fixed initial state and time:
\begin{equation}\label{eq:errorfunc}
    \norm{(\mathcal{X}^{2k})^{\circ r}(\rho, t/r, \omega_c, N_b) - \evolchan{\rho, t}}_1 = \epsilon(r).
\end{equation}
By fixing an error tolerance for $\epsilon$, the exact cost of the simulation becomes a black box function with no closed form expression:
\begin{equation}\label{eq:cost_function}
    C_{comp} = f(\epsilon_{thresh}, r, \omega_c, N_b).
\end{equation}
This is the cost function we wish to minimize. However, we cannot do that by conventional methods such as with direct gradient descent. Also, with no strong intuition for a choice of $N_B$, if we wish to optimize this parameter we have to deal with integer optimization as well. The iterations $r$, however, while an integer, does not require optimization, but rather emits a search problem. If we allow an optimizer to pick initial random values for $N_b$ and $\omega_c$ from a fixed interval, then we must find the value $r$ required to meet the error threshold $\epsilon_{thresh}$, which will ultimately be determined by the optimizer's choice of the other two parameters. To complete this, we perform an exponential search on $r$ until we find some $r$ where $\epsilon(r) \geq \epsilon_{thresh}$ and set this as an upper bound on $r$. We then perform a binary search to find the smallest value of $r$ required to meet this condition and count the number of gates in the channel. This is a very expensive function given that we are precisely building the composite channel, applying it to the density matrix initial state in the problem, and counting the gates applied in each iteration of the search. The expensive nature emerges due to the sheer number of matrix multiplications required in performing this task, not in the search for $r$, which is nearly optimal. Note the importance of using the trace distance in this approach as it guarantees monotonicity of $\epsilon(r)$, which makes the search possible. This is not so in the framework of sampling the quantum infidelity, as finding the cost here would require other statistical methods (see Appendix \ref{sec:appendix_error}. \\

Now a glaring question left unanswered is the choice of an optimizer. We implement the Gradient-Boosted Regression Trees (GBRT) algorithm included in Sci-Kit Optimize \cite{pedregosa2011scikit}. This algorithm is specifically-designed to handle the optimization of very expensive functions. It is also convenient for our purposes given that it can handle both integer and real optimization parameters simultaneously. At a high level, the algorithm works by using a series of decision trees with an associated loss function. The decision trees perform regression to fit the input function and are iteratively generated based on the minimization of the loss function via gradient descent. This optimizer and cost function \ref{eq:cost_function} can then be easily generalized to the local composite channels where now we have an $N_b$ and $\omega_c$ for each blocking. As the number of local blocks grows, the optimization routine will need to take a larger number of input parameters in this prescription. However, the size of the system becomes classically intractable long before we would consider using this many local blocks, so this is far from a concern. \\

In some cases, models may exhibit a partitioning that is somewhat canonical and can lead to excellent performance of composite methods. This occurs when we have a Hamiltonian that fits naturally into the intuition behind the algorithm, such that we have a set $A$ containing large terms with small commutators and a set $B$ with small terms that are known not to commute in general. We are, therefore, proposing to use the chop partition but by choosing the chop threshold $\omega_c$ based on physical intuition regarding the Hamiltonian, rather than some expensive optimization routine. A perfect example of such a system is a Heisenberg model with weak coupling. In this case, looking at Equation \ref{eq:heisenberg}, we would set the chop threshold $\omega_c = \max\{J_x, J_y, J_z\}$, which implies we simulate the interactions with QDrift $\{J_\nu \sigma_j^\nu \sigma_{j+1}^\nu\}_{\nu = x,y,z} \rightarrow B$, and simulate the site energy terms with Trotter-Suzuki $\{B_z \sigma_j^z\} \rightarrow A$. In this way, the terms in the set $A$ all commute with each other, whereas the terms in the set $B$ are guaranteed to have a small spectral norm. We bring numerical evidence that this provides computational advantages in the sections below. In general, any system with perturbative interactions may benefit from this framework, given that the commutators within the system are small, as they will avail  this canonical partitioning. In cases where the partitioning is not as obvious, as is the case with {\rm H$_3$} and Jellium, we can achieve similar advantages by choosing $\omega_c = \max \frac{d \norm{H_j}}{dj} \textbf{s.t.} \norm{B} \geq \norm{A}$, meaning we sweep an ordered list of the Hamiltonian spectral norms and track the largest difference between terms, chopping the list where this occurs, given $j\geq \frac{L}{2}$. The final condition is just to ensure the majority of terms are simulated by QDrift. We also use this strategy throughout \ref{subsec:performance} and show advantages. 

\subsection{Error Measures} \label{sec:Error_Measure}
In this investigation, there is some arbitrariness in the error measure one can choose in order to quantify the performance of a simulation channel. In order to evaluate the resources required by an algorithm, one must evaluate the number of gates required to meet a certain $\epsilon$, which is calculated by said error measure. In the literature, this $\epsilon$ is often quantified by the diamond distance utilized in previous sections. However, while analytically convenient, for any reasonably-sized system, computing this quantity becomes computationally expensive. While it is possible to evaluate it efficiently, this requires finding the solution of a semi-definite program, which is much less efficient than using some other error measures. In addition, since we are not constrained to analytically solvable expressions or closed form equations with our numerical methods, we can optimize this cost in terms of some partition scheme. This is the idea behind the optimal chop partition, and doing so requires frequent computations of $\epsilon$. With this in mind, the error of our algorithm should be a quantity that we can compute in a reasonable amount of time while also being a fair error measure. The criteria for ``fairness" comes within the error measures treating each algorithm that comprises the composite channel on an equal footing. For example, if we are to optimize the partitioning with respect to the gate cost (which is dependent on $\epsilon$), then if Trotter is more performant with respect to QDrift in one error measure than in another, then our composite optimizer will favor Trotter, which will be reflected in the partition. As a result the total cost of the composite channel will be skewed by the error measure used. \\

We consider the infidelity and trace distance as possible measures of $\epsilon$. For definitions and an additional discussion regarding the scaling and complexity of computing these quantities, see Appendix \ref{sec:appendix_error}. Analytics are required in order to answer the question of which measure might provide a more fair comparison. More specifically, we can ask the question of how the error measures will scale with respect to the total simulation time $t$, and then test these results numerically. In terms of the infidelity, we provide the following Theorems:
\begin{restatable}[QDrift Infidelity Time Scaling]{prop}{qdinfscaling} \label{thm:qdinfscaling}
    Given a QDrift channel $\qdchan{\rho, t}$ and the standard evolutionary channel $\capU(\rho, \nicefrac{t}{N})$, for a density matrix $\rho$, time $t/N$, then the infidelity between the outputs of the channels $1- \fidel(\qdchan{\rho, t}, \capU(\rho, t/N)) \in \bigo{t^2}$.
 \end{restatable}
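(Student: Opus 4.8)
The plan is to expand both channel outputs to first order in the elementary time slice $\tau := t/N$ shared by the two channels, observe that their zeroth- and first-order terms coincide, and conclude that the trace-norm distance between the outputs is $\bigo{\tau^2}=\bigo{t^2}$; a Fuchs--van de Graaf inequality then transfers this bound to the infidelity. The reason infidelity is the right quantity to expect at $\bigo{t^2}$ is that the leading (first-order) generators of the two channels agree, so the states differ only at second order even though the individual QDrift unitaries $e^{-iH_i\lambda\tau}$ differ from $e^{-iH\tau}$ already at first order.

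Concretely, I would write the exact output as $\capU(\rho,\tau)=e^{-iH\tau}\rho\, e^{iH\tau}=\rho - i\tau[H,\rho]+\bigo{\tau^2}$, and each sampled QDrift unitary as $V_i=e^{-iH_i\lambda\tau}=\openone - i\lambda\tau H_i+\bigo{\tau^2}$, so that $V_i\rho V_i^\dagger=\rho - i\lambda\tau[H_i,\rho]+\bigo{\tau^2}$. Averaging over the QDrift distribution gives $\qdchan{\rho,\tau}=\sum_i p_i V_i\rho V_i^\dagger=\rho - i\lambda\tau\,[\sum_i p_i H_i,\rho]+\bigo{\tau^2}$. The key identity is $\lambda\sum_i p_i H_i=\lambda\sum_i \tfrac{h_i}{\lambda}H_i=\sum_i h_i H_i=H$, so the first-order term equals exactly $-i\tau[H,\rho]$, matching the exact evolution. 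Hence $\capU(\rho,\tau)-\qdchan{\rho,\tau}=\bigo{\tau^2}$.

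The delicate point, which I expect to be the main obstacle, is that each difference $V_i\rho V_i^\dagger-\rho$ is only $\bigo{\tau}$ individually; the cancellation occurs solely after averaging against $p_i$. I would therefore keep the first-order term explicit, define the residual $R_i := V_i\rho V_i^\dagger-\rho+i\lambda\tau[H_i,\rho]$, and bound $\norm{R_i}_1\le c(\lambda\tau)^2$ uniformly in $i$ using the explicit Taylor remainder of the matrix exponential together with $\norm{H_i}=1$ and $\norm{\rho}_1=1$ in the finite-dimensional, bounded setting assumed throughout. Then $\capU(\rho,\tau)-\qdchan{\rho,\tau}=-\sum_i p_i R_i$ plus the summed first-order term, the latter vanishing exactly by the identity above, so the triangle inequality over the convex combination yields $\norm{\capU(\rho,\tau)-\qdchan{\rho,\tau}}_1\in\bigo{\lambda^2\tau^2}=\bigo{t^2}$ (with $N$ fixed).

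Finally I would invoke the Fuchs--van de Graaf bound $1-\fidel(\sigma_1,\sigma_2)\le\tfrac12\norm{\sigma_1-\sigma_2}_1$ with $\sigma_1=\capU(\rho,t/N)$ and $\sigma_2=\qdchan{\rho,t/N}$, giving $1-\fidel(\qdchan{\rho,t/N},\capU(\rho,t/N))\in\bigo{t^2}$. The conclusion is insensitive to the choice of fidelity convention, since if $\fidel$ denotes the root fidelity then $1-\fidel^2=(1-\fidel)(1+\fidel)$ is also $\bigo{t^2}$. Once the uniform remainder estimate securing the first-order cancellation is in place, the rest is routine.
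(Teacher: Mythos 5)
Your proof is correct, but it takes a genuinely different route from the paper's. The paper restricts to a pure input $\rho = \ketbra{\psi}{\psi}$, invokes its Lemma \ref{lem:PureNonPure} that $\fidel(\rho,\sigma)=\Tr(\rho\sigma)$ when one argument is pure, and Taylor-expands directly inside $\sum_j P_j \abs{\bra{\psi}e^{iHt/N}e^{-iH_j\lambda t/N}\ket{\psi}}^2$. There the first-order terms survive inside the bracket but are purely imaginary (being $i$ times expectations of Hermitian operators), so they cancel in the modulus squared \emph{per sample} $j$; the QDrift averaging identity $\lambda\sum_j p_j H_j = H$, which is the engine of your cancellation, is never needed. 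You instead bound the trace distance of the channel outputs first---where the first-order cancellation occurs only after averaging over the sampling distribution---and then transfer to infidelity via Fuchs--van de Graaf. What your route buys: it handles arbitrary mixed inputs (the paper's proof, despite the proposition saying ``density matrix,'' genuinely needs purity), and it cleanly reuses the standard channel-error machinery. What the paper's route buys: the direct expansion exposes the explicit coefficient of $t^2$, which is the substantive point in context---this proposition sits next to Proposition \ref{thm:TSfidelity} precisely to show that infidelity squares the Trotter error but \emph{not} the QDrift error, making infidelity an ``unfair'' cost metric. A Fuchs--van de Graaf argument can never surface that contrast: applied verbatim to Trotter it yields only $\bigo{t^{2k+1}}$ rather than $\bigo{(t^{2k+1})^2}$, and it cannot certify that the QDrift $t^2$ term is generically nonvanishing, since it only ever upper-bounds infidelity by trace distance. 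Because the statement as written is merely an upper bound, your proof does establish it.
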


 \begin{restatable}[Trotter-Suzuki Infidelity Time Scaling]{prop}{TSfidelity} \label{thm:TSfidelity}
Given a Trotter-Suzuki channel $\tschan{2k}{\rho, t}$ and the standard evolutionary channel $\evolchan{\rho, t}$, for a density matrix pure state $\rho$ and time $t$, then the infidelity between the outputs of the channels $1 - \fidel(\tschan{2k}{\rho, t} \capU(\rho, t)) \in \bigo{(t^{2k+1})^2}$.
\end{restatable}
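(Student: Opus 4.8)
The plan is to exploit the fact that, for a pure input state, both channels act by unitary conjugation and therefore produce pure outputs, which collapses the fidelity to a single overlap of state vectors. Writing $\rho = \ketbra{\psi}{\psi}$, I would define $\ket{\phi} := e^{-iHt}\ket{\psi}$ and $\ket{\chi} := \trotterchan{2k}{-iHt}\ket{\psi}$, so that $\capU(\rho,t) = \ketbra{\phi}{\phi}$ and $\tschan{2k}{\rho,t} = \ketbra{\chi}{\chi}$. Because $\trotterchan{2k}{-iHt}$ is by construction a product of the unitaries $e^{-iH_j t'}$, it is itself unitary, so both $\ket{\phi}$ and $\ket{\chi}$ are unit vectors and the fidelity reduces to $\fidel = \abs{\braket{\phi}{\chi}}^2$. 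The entire task is thus to control $1 - \abs{\braket{\phi}{\chi}}^2$.

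The key step is to bound this infidelity by the \emph{squared} Euclidean distance of the two output vectors. First I would use the elementary inequality $\abs{\braket{\phi}{\chi}} \geq \mathrm{Re}\braket{\phi}{\chi} = 1 - \tfrac{1}{2}\norm{\ket{\phi} - \ket{\chi}}_2^2$, which holds precisely because both vectors are normalized. Squaring and discarding a non-negative term yields $\abs{\braket{\phi}{\chi}}^2 \geq 1 - \norm{\ket{\phi}-\ket{\chi}}_2^2$, and hence
\begin{equation}
  1 - \fidel\parens{\tschan{2k}{\rho,t}, \capU(\rho,t)} \leq \norm{\ket{\phi} - \ket{\chi}}_2^2 .
\end{equation}
At this point the infidelity is already controlled by the \emph{square} of a state-vector error, which is the source of the claimed scaling.

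It then remains to relate the vector distance to the operator error of the product formula and invoke the known Trotter-Suzuki estimate. Since $\ket{\phi} - \ket{\chi} = \parens{e^{-iHt} - \trotterchan{2k}{-iHt}}\ket{\psi}$ with $\norm{\ket{\psi}}_2 = 1$, the definition of the spectral norm as the operator norm induced by the Euclidean vector norm gives $\norm{\ket{\phi}-\ket{\chi}}_2 \leq \specnorm{e^{-iHt} - \trotterchan{2k}{-iHt}}$. The standard $2k$th-order product-formula error bound (the operator-norm version of the nested-commutator $\alpha_{comm}$ estimate underlying Eq.~\eqref{eq:TrotterCost2k}) gives $\specnorm{e^{-iHt} - \trotterchan{2k}{-iHt}} \in \bigo{t^{2k+1}}$. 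Squaring this produces $1 - \fidel \in \bigo{(t^{2k+1})^2}$, as claimed.

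The main obstacle is conceptual rather than computational: one must guarantee that the infidelity genuinely inherits the \emph{square} of the Trotter error and not merely its first power. This quadratic gain is exactly what the normalization of the output states buys, since the leading contribution to $\braket{\phi}{\chi}$ linear in the error operator is purely imaginary and cancels in $\abs{\braket{\phi}{\chi}}^2$, leaving only the second-order piece. This is precisely what distinguishes the infidelity from the trace distance, which scales only linearly in $t^{2k+1}$, and the contrast is the whole point of the ``fairness'' discussion motivating the choice of error measure. A secondary caveat is that the reduction to a single overlap relies on $\rho$ being pure; for a genuinely mixed input one would instead have to appeal to Uhlmann's theorem rather than the elementary pure-state formula, which is why the hypothesis restricts to pure states.
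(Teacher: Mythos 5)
Your proof is correct, but it takes a genuinely different route from the paper's. After the same reduction of the fidelity to a single overlap, the paper proceeds perturbatively: it expands $e^{iHt}$ as $\trotterchan{2k}{-iHt}^\dagger \pm i t^{2k+1}R_{2k+1} \mp i t^{2k+3}R_{2k+3} + \ldots$ in terms of remainder operators, so the overlap becomes $1 \pm i t^{2k+1}\Braket{\psi}{R_{2k+1}\trotterchan{2k}{-iHt}}{\psi} \mp \cdots$, and then argues that the term linear in $t^{2k+1}$ drops out of the modulus squared, leaving the explicit leading coefficient $(t^{2k+1})^2\,|\Braket{\psi}{R_{2k+1}\trotterchan{2k}{-iHt}}{\psi}|^2$; this matches the expansion style of the companion propositions (QDrift infidelity and both trace-distance scalings) and exhibits the constant. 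Your argument is non-perturbative: the chain $1-\fidel \le \norm{\ket{\phi}-\ket{\chi}}_2^2 \le \specnorm{e^{-iHt}-\trotterchan{2k}{-iHt}}^2$ followed by the standard $\bigo{t^{2k+1}}$ operator-norm Trotter bound. What this buys is rigor and generality: the quadratic gain follows purely from normalization of the two output vectors (i.e., unitarity of both evolutions) and works for \emph{any} unitary approximation, and you never have to argue that the cross term vanishes. That step is in fact the weakest point of the paper's own proof, since dropping the linear term requires the matrix element of $R_{2k+1}\trotterchan{2k}{-iHt}$ to be effectively real, which the paper does not verify; the clean justification is exactly the unitarity your inequality encodes, namely that the real part of the overlap deficit is automatically second order in the error. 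Two small caveats on your write-up: squaring $|\braket{\phi}{\chi}| \ge 1-\tfrac{1}{2}\norm{\ket{\phi}-\ket{\chi}}_2^2$ deserves a word when the right-hand side is negative (in that case $1-\norm{\ket{\phi}-\ket{\chi}}_2^2<0$ and the conclusion is trivial), and your closing remark that the linear contribution is ``purely imaginary'' repeats the paper's loose phrasing --- more precisely, its real part is of second order --- though your actual inequality chain never relies on that claim.
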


The proofs of these theorems are included in  Appendix \ref{sec:appendix_error}. Here, we obtain the non-trivial result that the infidelity is squared when considering Trotter-Suzuki formulas. This would lead to our optimized chop algorithm to heavily favor this channel over QDrift, and for this reason, we consider it an "unfair" error measure. On the other hand, if we consider how the trace distance scales with simulation time in both algorithms, we obtain the following Theorems:

\begin{restatable}[QDrift Trace Distance Time Scaling]{prop}{QDtracedist} \label{thm:QDtracedist}
Given a QDrift channel $\qdchan{\rho, t}$ and the standard evolutionary channel $\capU(\rho, \nicefrac{t}{N})$, for an arbitrary density matrix $\rho$ and time $t/N$, then the trace distance between the outputs of the channels $\tracedist(\qdchan{\rho, t}, \capU(\rho, t/N)) \in \bigo{t^2}$.
\end{restatable}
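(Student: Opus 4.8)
The plan is to establish the quadratic scaling by passing to the superoperator picture, Taylor-expanding both output states in the per-step time, and showing that the constant and linear terms cancel so that the leading surviving contribution — and hence the trace distance — is $\bigo{t^2}$. First I would write the exact channel as the Liouvillian flow $\capU(\rho,t)=e^{t\liouv}(\rho)$ with $\liouv(\rho)=-i[H,\rho]$, and the single QDrift step as the mixed-unitary channel
\[
\qdchan{\rho,t}=\sum_{i=1}^L p_i\, e^{t\liouv_i}(\rho)=\rho+t\Big(\sum_i p_i\liouv_i\Big)(\rho)+\frac{t^2}{2}\Big(\sum_i p_i\liouv_i^2\Big)(\rho)+\bigo{t^3},
\]
where $\liouv_i(\rho)=-i\lambda[H_i,\rho]$ is the generator of the $i$-th branch and $p_i=h_i/\lambda$. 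The exact channel has the analogous expansion $\capU(\rho,t)=\rho+t\,\liouv(\rho)+\tfrac{t^2}{2}\liouv^2(\rho)+\bigo{t^3}$.

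The decisive step is the first-order matching identity: since $p_i=h_i/\lambda$ and $H=\sum_i h_iH_i$, we have $\sum_i p_i\liouv_i(\rho)=-i\sum_i h_i[H_i,\rho]=-i[H,\rho]=\liouv(\rho)$, so the constant and linear terms of the two expansions agree exactly. Subtracting then leaves
\[
\qdchan{\rho,t}-\capU(\rho,t)=\frac{t^2}{2}\Big(\sum_i p_i\liouv_i^2-\liouv^2\Big)(\rho)+\bigo{t^3}.
\]
Taking the Schatten-$1$ norm and using $\norm{[H_i,\rho]}_1\le 2\norm{H_i}\norm{\rho}_1=2\norm{\rho}_1$ (each $H_i$ normalized) together with $\norm{H}\le\lambda$, every application of $\liouv_i$ or $\liouv$ costs a factor $\bigo{\lambda}$, so $\norm{\liouv_i^2(\rho)}_1,\norm{\liouv^2(\rho)}_1\le 4\lambda^2\norm{\rho}_1$. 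Because $\norm{\rho}_1=1$, the leading term has trace norm $\bigo{\lambda^2 t^2}$, which yields $\tracedist(\qdchan{\rho,t},\capU(\rho,t))\in\bigo{t^2}$ as claimed (the constant reproducing the familiar $4\lambda^2t^2$ QDrift factor).

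The main technical care is not the leading-order cancellation but controlling the full remainder uniformly in $\rho$: I must verify that the neglected $\bigo{t^3}$ tails of both operator-exponential series are genuinely subleading rather than merely the first omitted term. I would handle this by bounding the Dyson/Taylor tail term-by-term using submultiplicativity of the Schatten-$1$ norm under the channel action and the smallness of $\lambda t$ (the same regime $N>2\lambda t/\ln 2$ invoked for Theorem~\ref{thm:imagQD} guarantees convergence), so that the entire higher-order contribution is dominated by the quadratic term. A helpful simplification relative to the imaginary-time analysis is that here both maps are trace-preserving, so no renormalizing denominator appears and the difference stays linear in $\rho$; this keeps the remainder estimate routine and the resulting $\bigo{\lambda^2t^2}$ bound uniform over unit-trace inputs, which is exactly what the fairness comparison with QDrift requires.
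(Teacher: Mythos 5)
Your proposal is correct and takes essentially the same route as the paper: Taylor-expand both channels, observe that the zeroth- and first-order terms cancel exactly because $\sum_i p_i\,\lambda H_i = \sum_i h_i H_i = H$, and conclude the surviving difference is $\bigo{t^2}$ in trace norm. The Liouvillian superoperator dressing and your explicit tail bound are cosmetic refinements of the paper's argument (which simply absorbs all higher orders into a remainder operator $R(t)$), not a different proof.
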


\begin{restatable}[Trotter-Suzuki Trace Distance Time Scaling]{prop}{TStracedist} \label{thm:TStracedist}
    Given a Trotter-Suzuki channel $\tschan{2k}{\rho, t}$ and the standard evolutionary channel $\evolchan{\rho, t}$, for an arbitrary density matrix $\rho$ and time $t$, then the trace distance between the outputs of the channels $\tracedist(\tschan{2k}{\rho, t}, \capU(\rho, t)) \in \bigo{t^{2k+1}}$.
\end{restatable}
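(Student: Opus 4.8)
The plan is to reduce the trace distance between the two channel outputs to a spectral-norm distance between the underlying unitaries, and then invoke the known Trotter-Suzuki operator-norm error estimate. First I would write the trace distance explicitly (up to the conventional factor of $\frac{1}{2}$, which is immaterial for the asymptotic) as $\tracedist(\tschan{2k}{\rho, t}, \capU(\rho, t)) = \frac{1}{2}\norm{V\rho V^\dagger - U\rho U^\dagger}_1$, where $V := \trotterchan{2k}{-iHt}$ is the product-formula unitary and $U := e^{-iHt}$ is the exact propagator.

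Next, following the same add-and-subtract manipulation used in the proof of Lemma \ref{lem:LocalDiamond}, I would insert the cross term $U\rho V^\dagger$ and apply the triangle inequality to obtain a bound of the form $\frac{1}{2}\parens{\norm{(V-U)\rho V^\dagger}_1 + \norm{U\rho(V-U)^\dagger}_1}$. Applying Hölder's inequality for Schatten norms, $\norm{ABC}_1 \leq \norm{A}_\infty \norm{B}_1 \norm{C}_\infty$, together with the unitarity of $U$ and $V$ (so that $\norm{U}_\infty = \norm{V}_\infty = 1$) and the fact that $\norm{\rho}_1 = 1$ for any density matrix, collapses this expression to the clean bound $\tracedist(\tschan{2k}{\rho, t}, \capU(\rho, t)) \leq \norm{V - U}_\infty$. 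Crucially, only $\norm{\rho}_1 = 1$ is used, so the argument holds for an arbitrary mixed state rather than just a pure state.

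Finally, I would invoke the standard spectral-norm error estimate for the $2k$-th order Trotter-Suzuki formula, $\norm{\trotterchan{2k}{-iHt} - e^{-iHt}}_\infty \in \bigo{t^{2k+1}}$, as established in \cite{childs2021theory} and quoted in Section \ref{sec:trotter_intro}. Substituting this into the bound above immediately yields $\tracedist(\tschan{2k}{\rho, t}, \capU(\rho, t)) \in \bigo{t^{2k+1}}$.

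I do not anticipate a serious obstacle: the reduction is a routine Hölder/triangle-inequality argument essentially identical in structure to Lemma \ref{lem:LocalDiamond}, and the operator-norm Trotter bound is a known result. The conceptual point worth emphasizing — and the sharp contrast with Proposition \ref{thm:TSfidelity} — is that no squaring of the time dependence occurs here. The trace distance is \emph{linear} in the operator difference $V-U$, so it inherits the $\bigo{t^{2k+1}}$ scaling directly, whereas the infidelity, being quadratic in the overlap, acquires an extra power. This asymmetry is precisely what motivates the choice of the trace distance as the ``fair'' error measure in Section \ref{sec:Error_Measure}.
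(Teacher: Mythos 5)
Your proof is correct, but it takes a genuinely different route from the paper's. The paper proves this proposition by direct expansion: it writes the exact propagator as the product formula plus remainder operators, $e^{-iHt} = \trotterchan{2k}{-iHt} \pm i t^{2k+1}R_{2k+1} + \ldots$, substitutes into the channel difference inside the trace, observes that the leading terms cancel, and concludes that what survives is $t^{2k+1}$ times a bounded remainder, hence $\bigo{t^{2k+1}}$. You instead reduce the channel distance to an operator distance — the add-and-subtract/H\"older argument giving $\tracedist(\tschan{2k}{\rho,t},\capU(\rho,t)) \leq 2\norm{V-U}_\infty$ (the paper's convention for $\tracedist$ carries no $\nicefrac{1}{2}$, so the factor 2 appears, which is immaterial asymptotically) — and then cite the known spectral-norm Trotter bound from \cite{childs2021theory}. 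Your route is more modular and arguably cleaner: it manifestly handles arbitrary mixed $\rho$, produces a non-asymptotic bound if one inserts the explicit commutator-dependent Trotter estimate, and is structurally identical to Lemma \ref{lem:LocalDiamond}, which you could even invoke directly since the trace distance on any fixed input is dominated by the diamond distance. What the paper's expansion buys in exchange is uniformity with the companion propositions: the same Taylor-with-remainder technique is used for Propositions \ref{thm:qdinfscaling}, \ref{thm:TSfidelity}, and \ref{thm:QDtracedist}, which makes the central contrast of Section \ref{sec:Error_Measure} — that the infidelity squares the $t^{2k+1}$ remainder while the trace distance keeps it linear — visible as an explicit cancellation pattern rather than as a consequence of a cited black-box bound. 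Your closing observation about linearity versus quadratic overlap captures exactly that point.
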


Here, we see that no such squaring occurs, and the expected time-scaling is obtained. For this reason, we compute the entire density matrix and  $\epsilon$ using the trace distance in all of our numerical simulations. Proofs of the above theorems, as well as further discussions can again be found in Appendix \ref{sec:appendix_error}.

\subsection{Performance Results} \label{subsec:performance}
In this section, we first numerically analyze the real time quantum algorithm given by Hagan and Wiebe \cite{hagan2022composite} and then show equivalent numerical calculations of the imaginary time classical case. To accomplish this, we provide cost plots in which we provide the minimum $C$, or the number of rotation gates to achieve a desired simulation accuracy $\epsilon$ (calculated by the trace distance) for each point in time $t$ or $\beta$. To reiterate, here we exactly compute entire evolution channels with a random initial state $\rho$ sampled from the unit hyper-sphere, and directly apply and count gates. We conclude this section with a brief discussion about numerical studies for the local composite simulation algorithms. In these plots, we study variants of the composite channel and display results with the aforementioned notation with the addition of a tilde over the channel if the partition and $N_B$ have been optimized with GBRT. 
For example, a composite channel with inner-order 2 and outer-order 1 with an optimized partition and number of QDrift samples $N_B$ is written like so $\widetilde{\mathcal{X}}^{2,1}$. \\

Throughout the section, we normalize $\norm{H} = 1$ and run simulations for times $t \in (0, \frac{3\pi}{2}]$ so as to ensure the system undergoes non-trivial dynamics without overlapping the phase. This is done due to the fact that Trotter formulas have a periodic error for $\norm{H}t\geq 2\pi$, and running simulations in this range would lead to the optimizer finding the ``good points", where the error happens to be small, which would provide a very low cost simulation and a sharp drop in the cost trend. We also report the cost advantages achieved on \textit{crossover points}, which are values of $t$ such that $C_{QD}(t) = C_{TS}(t)$. We denote the composite channel advantage as $\xi \coloneqq C_{QD}(t') / C_\mathcal{X}(t') = C_{TS}(t') / C_\mathcal{X}(t')$. As we are unable to compute these times exactly we use interpolation methods to report values of $\xi$. This is motivated by the fact that analytics suggest this to be the point of greatest advantage for a composite channel \citenum{hagan2022composite}. This is intuitive, especially given higher-order Trotter-Suzuki formulas which are known to asymptotically outperform QDrift for large $t$, whilst QDrift is dominant in the small $t$ limit, suggesting a region where their strengths can be combined. 

\begin{table}[htbp!]
    \centering
    \begin{tabular}{| c | c | c | c |}
    \hline
        Hamiltonian & $\xi$ & Num. Terms & Time \\
        \hline
        Hydrogen-3 & 2.3 & 62 & Real \\
        5 Site Jellium & 9.2 & 56 & Real \\
        6 Site Jellium & 18.8 & 94 & Real \\
        7 Site Jellium & 10.4 & 197 & Real \\
        7 Spin Graph & 4.1 & 49 & Real \\
        8 Spin Graph & 3.9 & 64 & Real \\
        \hline
        8 Spin Heisenberg & 3.1 & 29 & Imag. \\
        Hydrogen-3 & 2.3 & 62 & Imag. \\
        6 Site Jellium & 18.8 & 94 & Imag. \\
        \hline
    \end{tabular}
    \caption{Summary of gate cost improvements observed (contingent on optimization convergence). We observe that savings tend to somewhat improve as the number of terms increases (within the same model), with the exception of Jellium 7 where optimizer struggles with partitioning due to the number of terms. This is evident in the lack of monotonicity of $C(\widetilde{\mathcal{X}^1}$ in Figure \ref{fig:Jellium56}. The most significant savings are seen for the Jellium models. Even in cases where the number of terms are comparable to other models, larger advantages are persistent in Jellium. This is further establishes the spectral norm distribution as one of the most important indicators of performance in the composite framework.}
    \label{tab:numerics_results}
\end{table} 
\FloatBarrier

\subsubsection{Real-Time Composite Quantum Simulations}
Beginning with Hydrogen chains, our results are highlighted in Figure \ref{fig:H3}. This plot reveals two interesting features: for long-time simulations, heuristics can be found that essentially match the performance of an optimized formula simply by inspecting the distribution of the norms of individual summands $\norm{H_j}$, and with the optimized routine, we find a significant improvement at the crossover point with $\xi = 2.3$. These plots begin flat for most of the simulation channels, which for the most part, indicates that one application of the channel achieves the desired $\epsilon$ for multiple sequential simulations at small times. This is expected, and is especially common with Trotter-Suzuki formulas, given that with one iteration they apply at least $L$ gates depending on the order, while QDrift provides the option of sampling single gates.

\begin{figure}[htbp!]
    \centering
    \includegraphics[width=0.6\textwidth]{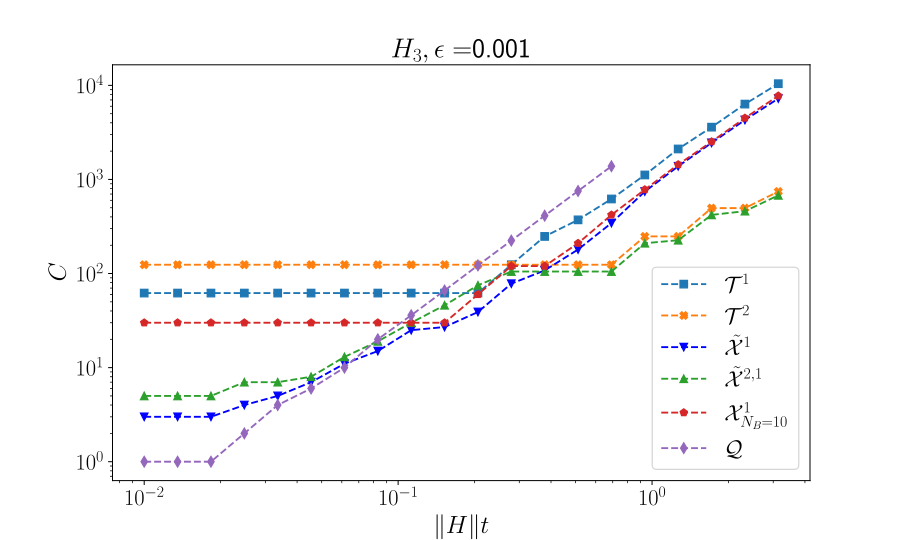}
    \caption{\textit{H$_3$ Cost Plot Simulation (Real-Time): }Log-log gate cost plot for the {\rm H$_3$} Hamiltonian generated with OpenFermion using three-dimensional Gaussians in a minimal basis. The bond distance is chosen to be that which minimizes the energy surface of {\rm H$_2$}, which is $\approx 0.8$ angstrom. We achieve a crossover advantage of $\xi = 2.3$, as well as remaining constant factor advantages at long times.} \label{fig:H3}
\end{figure} 
\FloatBarrier

Figure \ref{fig:H3} is interesting given that both the heuristic and optimized channels provide significant advantages at the crossover point, as well as the heuristic partition seems to match the asymptotic performance of the optimized channel. This demonstrates that optimization subroutines are not required to gain advantages in this framework. Furthermore, the second inner-order composite channel also shows a consistent advantage over the second-order Trotter channel. \\

To gain insight into effective choices of partition and QDrift samples, we also plot the optimized $N_B$ values and the ratio of Trotter terms to total terms $|A|/|H|$ with time in Figure \ref{fig:H3nbW}. Here, we find choices that somewhat agree with our prior intuition. For short times, places almost all terms into QDrift, and slowly increases $N_B$. As $t$ increases, more terms are placed into Trotter with the partition bouncing around in the regime where Trotter and QDrift have similar costs, which is also expected. The composite channel $\widetilde{\mathcal{X}}^{2,1}$  essentially places all terms into the Trotter simulator, given the favorable asymptotic performance of higher order Trotter formulas over QDrift, while $\widetilde{\mathcal{X}}^{1}$ finds a balance between the two at long times, likely due to their equivalent $t$ scaling. The most interesting behavior is that of $N_B$ at mid to long times. Here, $N_B$ peaks near the crossover point and then falls off as Trotter $t$-scaling becomes dominant in $\widetilde{\mathcal{X}}^{2,1}$. However, for the $\widetilde{\mathcal{X}}^{1}$ channel, $N_B$ experiences somewhat of a revival after the peak, and stabilizes at 15, which is about 24\% of the terms. We use this percentage to motivate future heuristic choices of $N_B$ in our investigation of Jellium in Figure $\ref{fig:Jellium56}$, which turns out to work quite well. 

\begin{figure}[h!] 
    \centering
        \begin{subfigure}[b]{.49\textwidth}
            \includegraphics[width=1\textwidth]{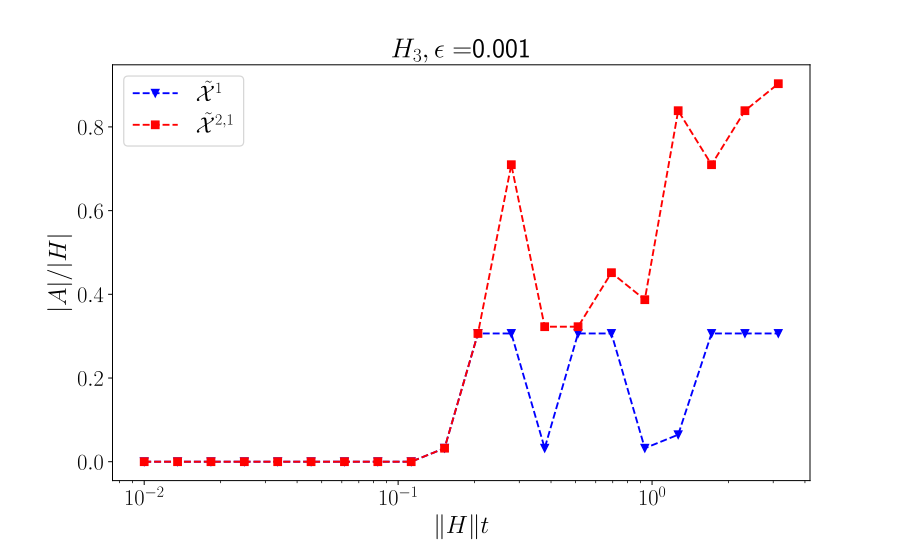}
            \caption{}
        \end{subfigure}
        \begin{subfigure}[b]{.49\textwidth}
            \includegraphics[width=1\textwidth]{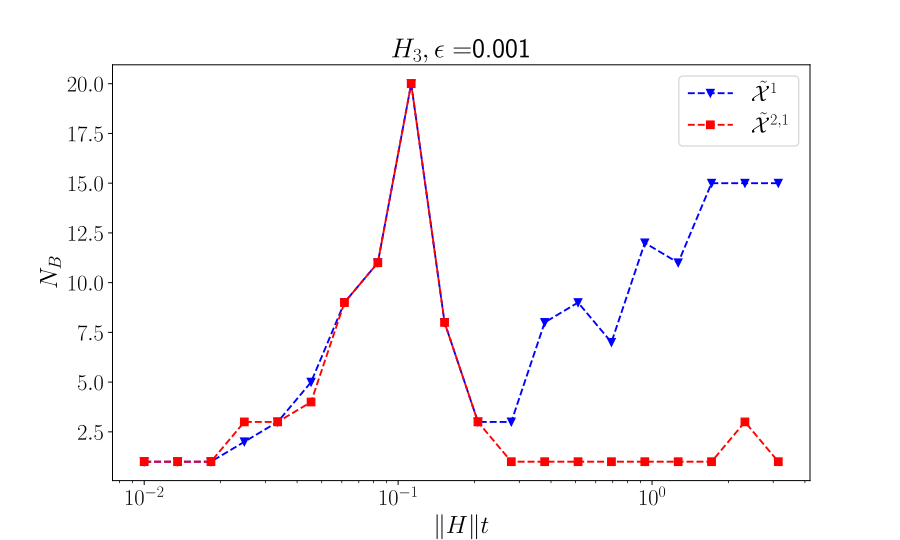}
            \caption{}
        \end{subfigure}
        \caption{\textit{Optimized H$_3$ Simulation Parameters:} Semi-log plots of parameters obtained by the GBRT optimization routine for the $\widetilde{\mathcal{X}^{1}}$ and $\widetilde{\mathcal{X}}^{2,1}$ real time channels. These parameter choices correspond to the H$_3$ simulation in Figure \ref{fig:H3}. In (a) we plot the cardinality of the $A$ set over the total number of terms, as a function of time. These values are dictated by GBRT optimized value of $\omega_c$. In (b) we present the equivalent plot with $N_B$.} 
        \label{fig:H3nbW}
\end{figure}
\FloatBarrier

When it comes to the simulation of Jellium, we find some of the most significant performance improvements within this section, including an order of magnitude cost difference at the crossover point (see Figure \ref{fig:Jellium56}). Specifically, in the case of 6-site Jellium, the Trotter and QDrift cost at the crossover point is approximately 100 gates, versus the composite channel, which achieves the same precision $\epsilon$ with only about 7 gates. Here, it is also shown that one can find an adequate partition that leads to advantages at longer times without the need for any optimization. This is also the only model whereby the optimization routine struggles to find optimal partitions in the neighbourhood of the crossover point. This leads to the Jellium 7 model inheriting a smaller $\xi$ than what is likely achievable. 
\begin{figure}[h!]
    \centering
        \begin{subfigure}[b]{.49\textwidth}
            \includegraphics[width=1\textwidth]{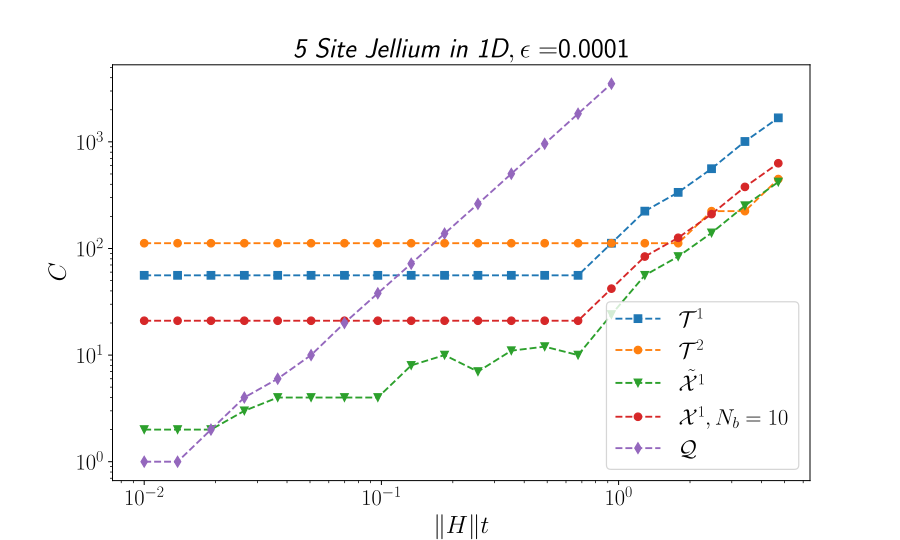}
            \caption{}
        \end{subfigure}
        \begin{subfigure}[b]{.49\textwidth}
            \includegraphics[width=1\textwidth]{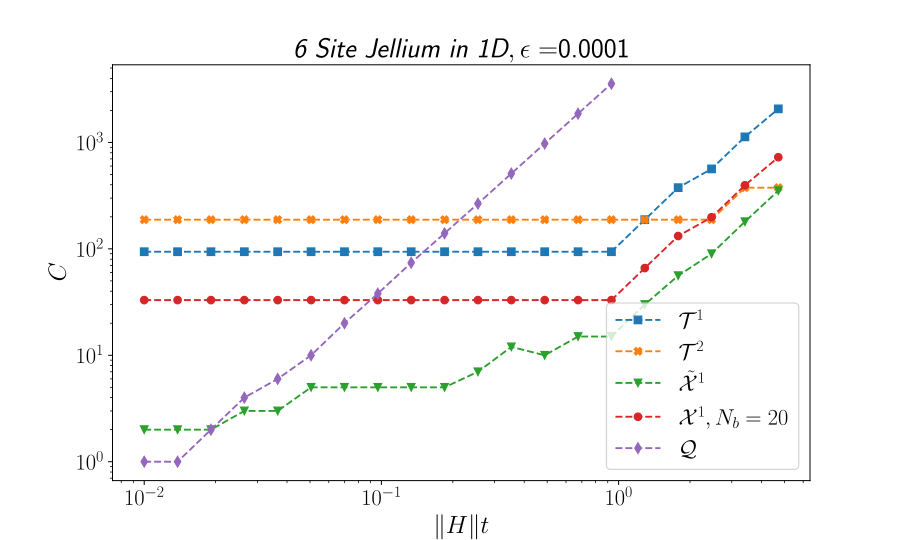}
            \caption{}
        \end{subfigure}
        \begin{subfigure}[b]{.49\textwidth}
            \includegraphics[width=1\textwidth]{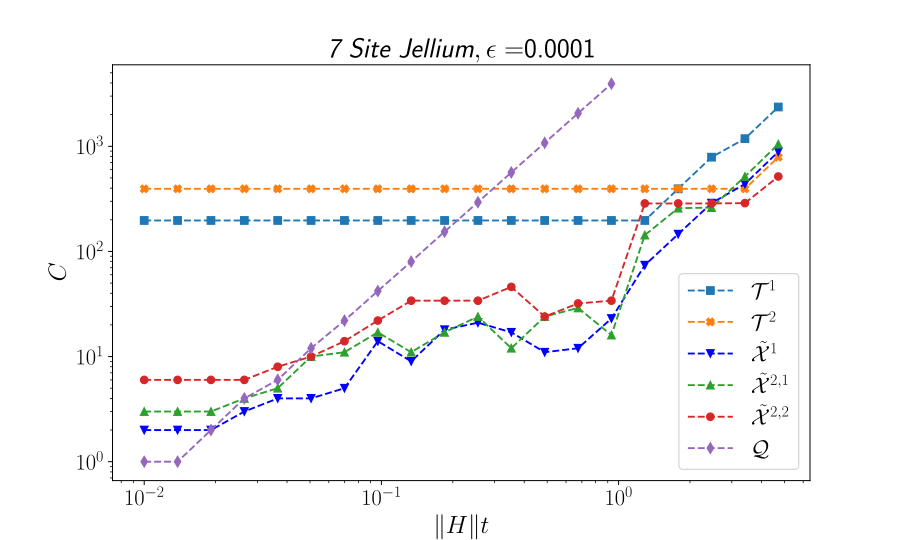}
            \caption{}
        \end{subfigure}
        \caption{\textit{Jellium Simulation Cost plots (Real-time): } Log-log cost plots of quantum simulations of Jellium with 5, 6, and 7 sites in (a), (b), and (c) respectively. In (a) and (b) we have in red, a chop heuristic where no optimization overhead is used. The distribution of Hamiltonian terms is chopped immediately before $\max \frac{d \norm{H_j}}{dj}$, and approximately $\frac{1}{5}L$ terms are sampled. This heuristic works quite well, although it is outperformed by the optimized version, especially at short times. In (a) we achieve $\xi = 9.2$. In (b) we achieve an impressive advantage of $\xi =  18.8$, the largest of all our real time results. In (c), we perform a similar analysis of the 7 site model with some higher order composite channels, but find that the optimizer has increased difficulty with larger numbers of Hamiltonian terms. Here $\xi = 10.4$, but through inspecting some neighbouring points of the crossover region, it surely has the potentially to be much larger.} \label{fig:Jellium56}
\end{figure}
\FloatBarrier

The final system we investigate in this section is that of the graph toy model with exponentially decaying interactions, which is a beyond nearest neighbour model. In Figure \ref{fig:graph_sim}, we study this model for chains of length 7 and 8, and find essentially identical behaviour. When moving from 7 to 8 spins, we only add 15 more terms to the Hamiltonian, which is clearly not enough to see any significant advantages. In fact, the crossover advantage is slightly smaller for the bigger model, but this could also be due to the optimizer not fully converging. 

\begin{figure}[htbp!]
\centering
    \begin{subfigure}[b]{.49\textwidth}
        \includegraphics[width=1\textwidth]{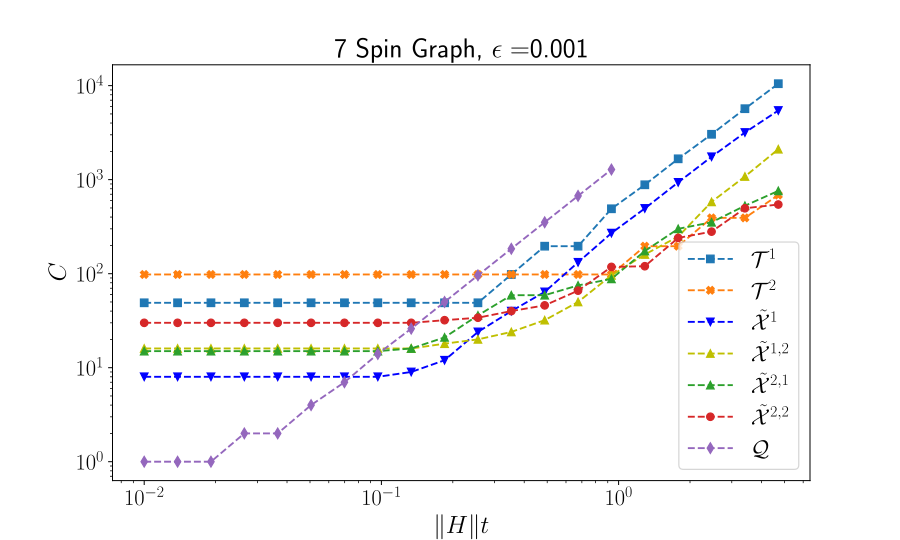}
        \caption{} 
    \end{subfigure}
    \begin{subfigure}[b]{.49\textwidth}
        \includegraphics[width=1\textwidth]{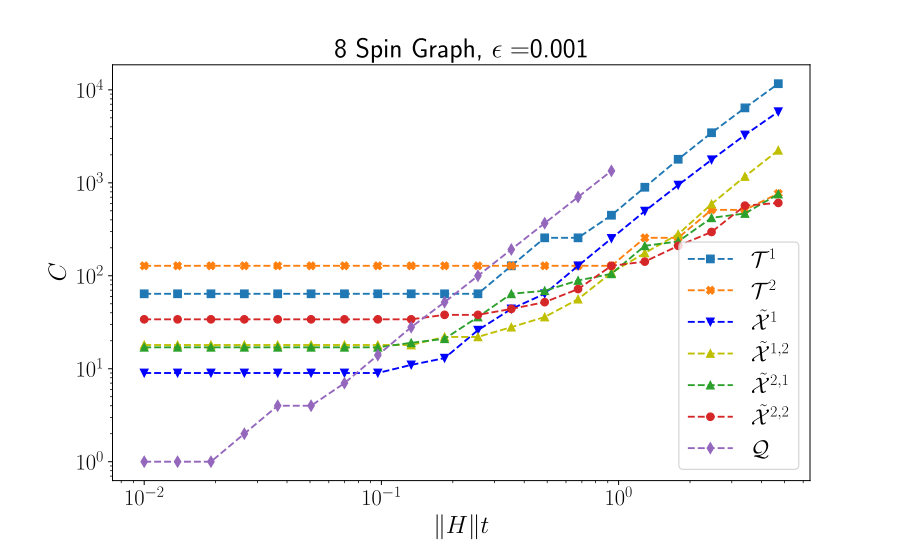}
        \caption{} 
    \end{subfigure}
    \caption{\textit{Toy Spin Graph Model Cost Plot:} In (a) the we have a cost plot for the 7 spin model where we obtain a crossover advantage of $\xi = 4.1$, which is fairly significant. The plot has multiple regions where different composite channels are optimal. In (b) we have the 8 spin model where we establish $\xi = 3.9$. This advantage, as well as the channel performance is almost identical to (a).}
    \label{fig:graph_sim}
\end{figure} 
\FloatBarrier

\subsubsection{Imaginary-Time Composite Channels} \label{subsubsec:iTime_results}
This section contains cost plots of simulations of the same aforementioned Hamiltonians, but with imaginary time propagators, where we present cost plots of the most interesting Hamiltonians from the previous section.\\

For simulations of the Heisenberg model, we find similar advantages to those in real time. In Figure \ref{fig:imag_sim}, we see that our proposed heuristic leads to an advantage over Trotter-Suzuki and QDrift in the regions of interest. What is different about this plot is that the optimizer finds the same $N_B$ and partitioning for all short times. This is an artefact of both the Hamiltonian and how the optimizer is programmed. Since all the splitting (single site) terms have equal spectral norm, the optimizer is placed in an all or nothing scenario, as choosing $\omega_c < 1$ immediately places all terms into QDrift. Given that after receiving the cost, the program then solves a search problem to find the minimal $r$ to achieve $\epsilon$ precision, it is rare that it finds the ideal conditions to build a pure QDrift channel with $r=1$. However, significant savings are still achieved at the crossover point, and over $\mathcal{T}^1$ and $\mathcal{T}^2$ at large $\beta$. Once again, the validity of heuristic partitions are shown, specifically in the 1st order composite channel, which exactly matches its optimized version at large $\beta$. \\

\begin{figure}[htbp!]
    \centering
    \includegraphics[width=0.7\textwidth]{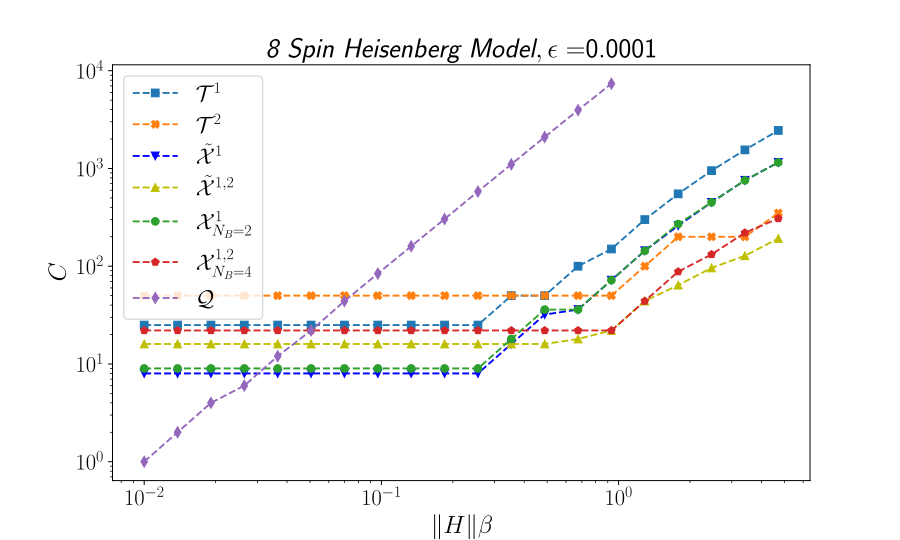}
    \caption{\textit{8 Spin Heisenberg Model Cost Plot:} In this imaginary time simulation we establish $\xi = 3.1$, as well as maintain advantages at large $\beta$. We also find that our chosen heuristics are essentially optimal at large $\beta$, with the green and dark blue lines overlapping.} \label{fig:imag_sim}
\end{figure} 
\FloatBarrier

For Hydrogen chains, we obtain strikingly similar results and compared with those in real time, as seen in Figure \ref{fig:iH3}. We once again obtain a significant crossover advantage, as well as constant factor advantages at large $\beta$, or low temperature. Heuristics are also shown to continue to hold in their effectiveness, in this case, from the crossover point and onward.

\begin{figure}[htbp!]
    \centering
    \includegraphics[width=0.7\textwidth]{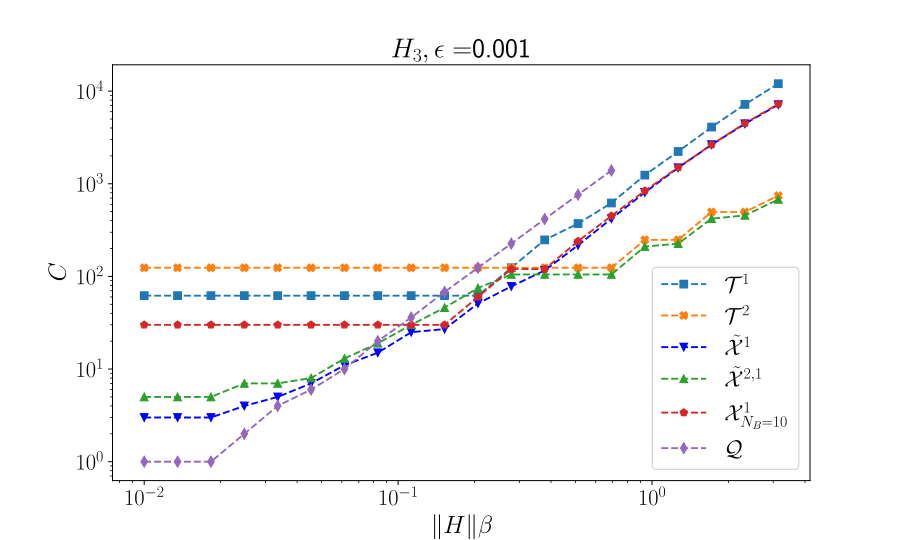}
    \caption{\textit{H$_3$ Simulation Cost Plot (Imaginary-Time):} The same parameters are used to build the 3 atom Hydrogen chain as we done in real time. We achieve incredibly similar results, and again recover the real time result of $\xi = 2.3$ that was achieved in Figure \ref{fig:H3}.} \label{fig:iH3}
\end{figure} 
\FloatBarrier

For Jellium, we choose to investigate the system size with the best-behaved optimizer, as well as the largest $\xi$, which occurs for the 6-site model. In imaginary time, we once again reproduce a significant advantage, shown in Figure \ref{fig:iJellium6}. As in the case with {\rm H$_3$}, this plot is quite similar to the real time case in Figure \ref{fig:Jellium56}. However, here at large $\beta$ the composite channel seems to do better in imaginary time given that even the first order composite channel (with optimization) outperforms the second order Trotter channel. This happens in the final point of the plot where $\mathcal{T}^2$ is no longer in the``flat-regime". While this is very interesting, it is unclear analytically why this occurs, and we would likely not expect this trend to continue asymptotically. 

\begin{figure}[htbp!]
    \centering
    \includegraphics[width=0.7\textwidth]{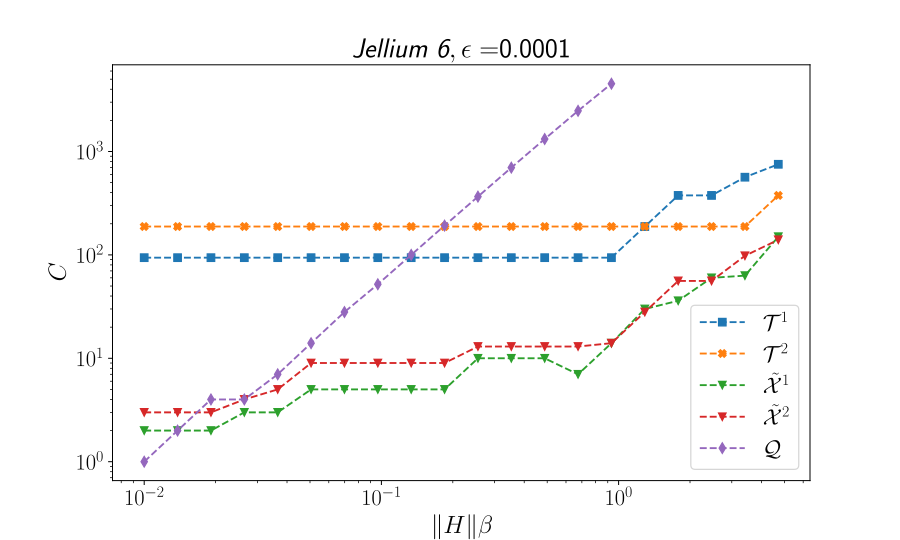}
    \caption{\textit{6 Site Jellium Simulation Cost Plot (Imaginary-Time):} Here we recover the  crossover advantage of $\xi = 18.8$ from the real time simulation in Figure \ref{fig:Jellium56}, which is also the largest advantage achieved in our imaginary time simulations. We additionally achieve advantages over second order Trotter at large $\beta$.} \label{fig:iJellium6}
\end{figure} 
\FloatBarrier

Overall, this section nicely complements some of the analytics in Section \ref{sec:analysis} both by reinforcing the fact that composite quantum channels allow for similar advantages in both real and imaginary time, as well as through calculation of exact constant factor advantages. In other words, this section provides convincing evidence on the applicability of composite formulas to classical imaginary-time Monte-Carlo algorithms.

\subsubsection{Local Composite Quantum Channels}
Distinct from the previous two algorithms, with the local composite channels we do not immediately expect to see significant simulation advantages for the small systems we can compute. Recall, this algorithm makes use of the Lieb-Robinson velocity $v_{LR}$ that limits the propagation of information and thus correlations in a local lattice model. In our above numerics, our lattices contain $\leq 8$ sites, meaning even for small $v_{LR}$, the lattice can still become quickly entangled. In this section, it is important to understand where the observed advantages are originating, whether they are from the local decomposition, or something else. For example, the results in prior sections already suggest that composite channels can outperform Trotter and QDrift channels in certain regimes. If a block decomposition is introduced, we pay a small gate cost to break the simulation into subsets (as gates on the boundary are applied more than once), but the advantages of the composite simulation are almost guaranteed to outweigh this cost. Thus we wish to find a regime in which we can perform calculations of exact costs with local composite channels that explicitly gain advantages via block decomposition. Otherwise, we will observe essentially the same behaviour as before, but with slightly smaller constant factors. There are two ways to go about achieving this; one is to add more sites to the model, which quickly becomes computationally intractable with standard methods. The second strategy is to decrease the coupling between sites in the lattice, which naturally decreases $v_{LR}$. This is the strategy we utilize. In reference to Equation \ref{eq:heisenberg} we perform our cost plots on Heisenberg models with 8 sites with $B_z = 1$, and $J_\nu^{(j)}$ sampled from an exponential distribution with a scale parameter (serving as a coupling constant) of 0.00005. To allow for a fair simulation, we then choose $\epsilon = 0.000001$, such that statistically, $~98\% $ of terms will be greater than $\epsilon$, which can be seen from a simple integration of the PDF. Results of this simulation are shown in Figure \ref{fig:local_heisen}.  \\

\begin{figure}[htbp!]
    \centering
    \includegraphics[width=0.7\textwidth]{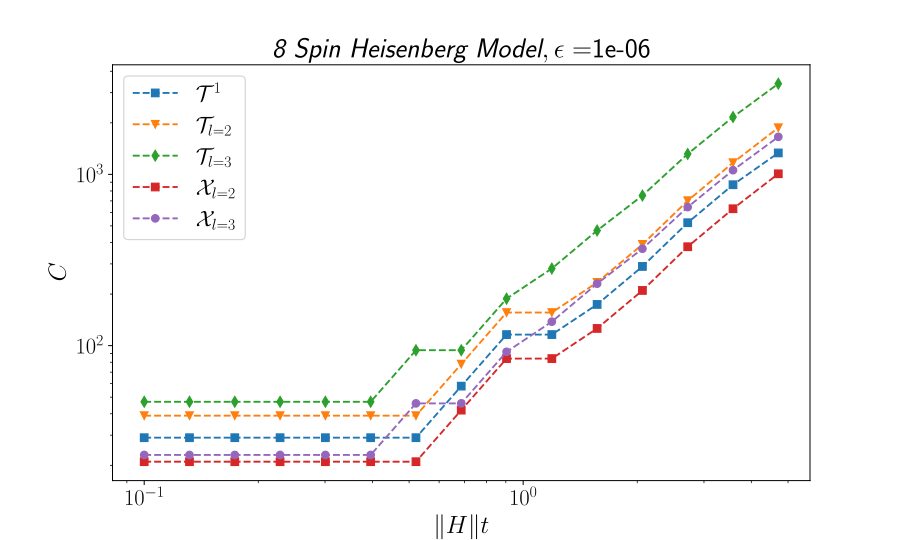}
    \caption{\textit{8 Spin Heisenberg Model Simulated by localized and standard channels:} This cost plot compares Trotter to its  localized versions, with the subscript $l$ indicating the overlap of the boundary region in the block-local simulation.  Channels with no $l$ subscript are the standard simulations from prior sections. As expected, we observe that the composite channel is most efficient, however, given that the standard Trotter algorithm outperforms the localized Trotter channel, we can conclude that we are not in the regime where locality is providing advantages. The same heuristics were used for the composite channel as in Figure \ref{fig:imag_sim}, with $N_B = (4,1,4)$ on lattice subsets $(A, Y, B)$.} \label{fig:local_heisen}
\end{figure} 
\FloatBarrier

Given the gap between $\mathcal{T}^1$ and $\mathcal{T}_{l=2}$ in this very weak coupling regime, it is unclear whether our methods (exact gate counts) provide the means for investigating advantages gained by locality. In Ref. \cite{haah2021quantum}, via computations of bounds, numerics did not show advantages (in the form of T-gates) until approximately 100 sites were included in a more strongly-coupled model with $J_\nu^{(j)} \in [-1, 1]$ sampled i.i.d., so our results with far fewer spins are not unexpected. However, we still theoretically expect to see advantages in the limit where system sizes are large, and we can take advantage of the Lieb-Robinson bound.

\section{Discussion and Summary}
\label{sec:summary}
The main contributions of this work are the extension of the composite channel to imaginary time and to local Hamiltonians, and the construction of a composite simulation library for numerically evaluating algorithmic performance. The imaginary-time bounds provided advocate for the potential use of QDrift and Composite QDrift-product formulas in classical simulations where they can potentially improve the efficiency/accuracy of Quantum Monte-Carlo simulations. This is possible via shortening the path length of discrete time path integrals in these calculations given that our algorithms can approximate $e^{-\beta H}$ with fewer propagators than Trotter-Suzuki formulas in most cases. Our results also highlight that depending on the choice of $\beta$, one may also perform an imaginary time simulation using QDrift, particularly for high temperature physics. It is an open question as to whether these methods can be used to improve the overall accuracy before one runs into the sign problem. Nevertheless, these methods are interesting as one can use physical information about the Hamiltonian to choose a partitioning that possibly grants a computational advantage.  
\\

Furthermore, the extension of the composite channel to local Hamiltonians provides an interesting approach to lattice simulations in which there is a priori knowledge about the lattice structure. The diamond distance bound on this algorithm is significant in that the cost of establishing local composite blocks seemingly minimal in terms of asymptotics, making it a potentially-useful algorithm. However, it seems more powerful numerical methods may be required to explore the regimes in which this approach shines, given the size of the systems likely required. An interesting future trade-off to be explored here is the size of the blocking $l$, and the size of the commutator sums in Theorem \ref{thm:localcomp_diamond}. Furthermore, one can investigate optimizations of partitions and QDrift samples on lattice subsets. Despite our library containing these methods, we did not employ them as they can lead to misleading figures. Here, one must tread cautiously, as adding more blocks with parameterized weights and QDrift samples can lead to an optimizer finding lower costs, similar to how adding hidden units to a neural network can lead to the over-fitting of a data set in machine learning. In this scenario, it may be possible to find localized composite channels that outperform the standard channels, simply because the increase in parameters allowed us to find a more tailored partitioning.\\ 

One of the difficulties with the techniques introduced in Ref. \citenum{hagan2022composite} is finding good parameters to yield cost improvements. Our numeric results seem to suggest that even heuristic approaches for parameters, such as $N_B$, can yield at least factors of 2 reductions for small Hamiltonians, as seen in Fig. \ref{fig:H3}. The difficulties of multiple parameter optimization becomes more of an issue when moving to the local simulation framework. With $m$ blocks there are at least $2m$ parameters that need to be determined with a chop partitioning scheme. However, simple heuristics still lead to performance advantages over localized Trotter-Suzuki formulas in Figure \ref{fig:local_heisen}. The fact that said heuristics provide advantages with only an 8 spin model suggests that cost savings should scale favorably with increased lattice sizes. \\

In Section \ref{sec:Error_Measure}, a series of short but significant proofs were also provided to remind the reader to be cautious in examining error within hybridized algorithms. In our case, the infidelity treated QDrift differently than Trotter by a square, and using it would have skewed the partitions and provided inaccurate costs on the composite channel. \\

Finally, the library built for this project is easy to implement. It is also flexible, and it reliably evaluates the exact number of operator of exponentials $e^{-iH_jt}$ required to execute on a quantum computer (or $e^{-iH_j \beta}$ for classical) to time evolve a system within a given error tolerance. To reiterate, the reason this is important is because the cost of these algorithms is highly dependent on the Hamiltonian, specifically in the number and size of the terms, and their commutators. Additionally, the costs and errors discussed in this paper are derived from analytic upper bounds on the diamond distance and induced Schatten $1 \rightarrow 1$ norm, which are often bounded with repeated applications of the triangle inequality and using $\max$ operations, and can therefore be loose. These exact numerics can capture the true costs and errors of these algorithms, and advantages can motivate further studies into these approaches. This library is available upon reasonable request to the authors.

\section{Acknowledgements}
M.P. was supported through NSERC discovery grants, N.W. and M.H. were supported using grants from US DOE National Quantum Information Science Research Centers, Co-design Center for Quantum Advantage (C2QA) under contract number DE-SC0012704 as well as support from Google Inc.  
D.S. acknowledges support from an NSERC Discovery Grant and the Canada Research Chair program. JC acknowledges support from the Natural Sciences and Engineering Research Council (NSERC), the Shared Hierarchical Academic Research Computing Network (SHARCNET), Compute Canada, and the Canadian Institute for Advanced Research (CIFAR) AI chair program. Resources used in preparing this research were provided, in part, by the Province of Ontario, the Government of Canada through CIFAR, and companies sponsoring the Vector Institute \url{www.vectorinstitute.ai/#partners}.

\bibliographystyle{unsrt}
\bibliography{bib.bib}

\begin{thebibliography}{10}

\bibitem{feynman2018simulating}
Richard~P Feynman.
\newblock Simulating physics with computers.
\newblock In {\em Feynman and computation}, pages 133--153. CRC Press, 2018.

\bibitem{lloyd1996universal}
Seth Lloyd.
\newblock Universal quantum simulators.
\newblock {\em Science}, 273(5278):1073--1078, 1996.

\bibitem{kitaev1995quantum}
A~Yu Kitaev.
\newblock Quantum measurements and the abelian stabilizer problem.
\newblock {\em arXiv preprint quant-ph/9511026}, 1995.

\bibitem{nielsen2002quantum}
Michael~A Nielsen and Isaac Chuang.
\newblock Quantum computation and quantum information, 2002.

\bibitem{reiher2017elucidating}
Markus Reiher, Nathan Wiebe, Krysta~M Svore, Dave Wecker, and Matthias Troyer.
\newblock Elucidating reaction mechanisms on quantum computers.
\newblock {\em Proceedings of the national academy of sciences},
  114(29):7555--7560, 2017.

\bibitem{bauer2020quantum}
Bela Bauer, Sergey Bravyi, Mario Motta, and Garnet Kin-Lic Chan.
\newblock Quantum algorithms for quantum chemistry and quantum materials
  science.
\newblock {\em Chemical Reviews}, 120(22):12685--12717, 2020.

\bibitem{babbush2018low}
Ryan Babbush, Nathan Wiebe, Jarrod McClean, James McClain, Hartmut Neven, and
  Garnet Kin-Lic Chan.
\newblock Low-depth quantum simulation of materials.
\newblock {\em Physical Review X}, 8(1):011044, 2018.

\bibitem{jordan2012quantum}
Stephen~P Jordan, Keith~SM Lee, and John Preskill.
\newblock Quantum algorithms for quantum field theories.
\newblock {\em Science}, 336(6085):1130--1133, 2012.

\bibitem{berry2007efficient}
Dominic~W Berry, Graeme Ahokas, Richard Cleve, and Barry~C Sanders.
\newblock Efficient quantum algorithms for simulating sparse hamiltonians.
\newblock {\em Communications in Mathematical Physics}, 270:359--371, 2007.

\bibitem{childs2021theory}
Andrew~M Childs, Yuan Su, Minh~C Tran, Nathan Wiebe, and Shuchen Zhu.
\newblock Theory of trotter error with commutator scaling.
\newblock {\em Physical Review X}, 11(1):011020, 2021.

\bibitem{campbell2019random}
Earl Campbell.
\newblock Random compiler for fast hamiltonian simulation.
\newblock {\em Physical review letters}, 123(7):070503, 2019.

\bibitem{wan2022randomized}
Kianna Wan, Mario Berta, and Earl~T Campbell.
\newblock Randomized quantum algorithm for statistical phase estimation.
\newblock {\em Physical Review Letters}, 129(3):030503, 2022.

\bibitem{ouyang2020compilation}
Yingkai Ouyang, David~R White, and Earl~T Campbell.
\newblock Compilation by stochastic hamiltonian sparsification.
\newblock {\em Quantum}, 4:235, 2020.

\bibitem{berry2020time}
Dominic~W Berry, Andrew~M Childs, Yuan Su, Xin Wang, and Nathan Wiebe.
\newblock Time-dependent hamiltonian simulation with $l1$-norm scaling.
\newblock {\em Quantum}, 4:254, 2020.

\bibitem{childs2012hamiltonian}
Andrew~M Childs and Nathan Wiebe.
\newblock Hamiltonian simulation using linear combinations of unitary
  operations.
\newblock {\em arXiv preprint arXiv:1202.5822}, 2012.

\bibitem{berry2015simulating}
Dominic~W Berry, Andrew~M Childs, Richard Cleve, Robin Kothari, and Rolando~D
  Somma.
\newblock Simulating hamiltonian dynamics with a truncated taylor series.
\newblock {\em Physical review letters}, 114(9):090502, 2015.

\bibitem{low2019hamiltonian}
Guang~Hao Low and Isaac~L Chuang.
\newblock Hamiltonian simulation by qubitization.
\newblock {\em Quantum}, 3:163, 2019.

\bibitem{gilyen2019quantum}
Andr{\'a}s Gily{\'e}n, Yuan Su, Guang~Hao Low, and Nathan Wiebe.
\newblock Quantum singular value transformation and beyond: exponential
  improvements for quantum matrix arithmetics.
\newblock In {\em Proceedings of the 51st Annual ACM SIGACT Symposium on Theory
  of Computing}, pages 193--204, 2019.

\bibitem{martyn2021grand}
John~M Martyn, Zane~M Rossi, Andrew~K Tan, and Isaac~L Chuang.
\newblock Grand unification of quantum algorithms.
\newblock {\em PRX Quantum}, 2(4):040203, 2021.

\bibitem{suzuki1993quantum}
Masuo Suzuki.
\newblock Quantum monte carlo methods—recent developments.
\newblock {\em Physica A: Statistical Mechanics and its Applications},
  194(1-4):432--449, 1993.

\bibitem{foulkes2001quantum}
WMC Foulkes, Lubos Mitas, RJ~Needs, and Guna Rajagopal.
\newblock Quantum monte carlo simulations of solids.
\newblock {\em Reviews of Modern Physics}, 73(1):33, 2001.

\bibitem{hagan2022composite}
Matthew Hagan and Nathan Wiebe.
\newblock Composite quantum simulations.
\newblock {\em arXiv preprint arXiv:2206.06409}, 2022.

\bibitem{haah2021quantum}
Jeongwan Haah, Matthew~B Hastings, Robin Kothari, and Guang~Hao Low.
\newblock Quantum algorithm for simulating real time evolution of lattice
  hamiltonians.
\newblock {\em SIAM Journal on Computing}, (0):FOCS18--250, 2021.

\bibitem{mcclean2020openfermion}
Jarrod~R McClean, Nicholas~C Rubin, Kevin~J Sung, Ian~D Kivlichan, Xavier
  Bonet-Monroig, Yudong Cao, Chengyu Dai, E~Schuyler Fried, Craig Gidney,
  Brendan Gimby, et~al.
\newblock Openfermion: the electronic structure package for quantum computers.
\newblock {\em Quantum Science and Technology}, 5(3):034014, 2020.

\bibitem{motta2020determining}
Mario Motta, Chong Sun, Adrian~TK Tan, Matthew~J O’Rourke, Erika Ye, Austin~J
  Minnich, Fernando~GSL Brand{\~a}o, and Garnet Kin-Lic Chan.
\newblock Determining eigenstates and thermal states on a quantum computer
  using quantum imaginary time evolution.
\newblock {\em Nature Physics}, 16(2):205--210, 2020.

\bibitem{whitfield2011simulation}
James~D Whitfield, Jacob Biamonte, and Al{\'a}n Aspuru-Guzik.
\newblock Simulation of electronic structure hamiltonians using quantum
  computers.
\newblock {\em Molecular Physics}, 109(5):735--750, 2011.

\bibitem{suzuki1990fractal}
Masuo Suzuki.
\newblock Fractal decomposition of exponential operators with applications to
  many-body theories and monte carlo simulations.
\newblock {\em Physics Letters A}, 146(6):319--323, 1990.

\bibitem{Hatano_2005}
Naomichi Hatano and Masuo Suzuki.
\newblock Finding exponential product formulas of higher orders.
\newblock In {\em Quantum Annealing and Other Optimization Methods}, pages
  37--68. Springer Berlin Heidelberg, nov 2005.

\bibitem{nachtergaele2006lieb}
Bruno Nachtergaele and Robert Sims.
\newblock Lieb-robinson bounds and the exponential clustering theorem.
\newblock {\em Communications in mathematical physics}, 265:119--130, 2006.

\bibitem{nachtergaele2011much}
Bruno Nachtergaele and Robert Sims.
\newblock Much ado about something: why lieb-robinson bounds are useful.
\newblock {\em arXiv preprint arXiv:1102.0835}, 2011.

\bibitem{watrous2018theory}
John Watrous.
\newblock {\em The theory of quantum information}.
\newblock Cambridge university press, 2018.

\bibitem{sun2018pyscf}
Qiming Sun, Timothy~C Berkelbach, Nick~S Blunt, George~H Booth, Sheng Guo,
  Zhendong Li, Junzi Liu, James~D McClain, Elvira~R Sayfutyarova, Sandeep
  Sharma, et~al.
\newblock Pyscf: the python-based simulations of chemistry framework.
\newblock {\em Wiley Interdisciplinary Reviews: Computational Molecular
  Science}, 8(1):e1340, 2018.

\bibitem{pedregosa2011scikit}
Fabian Pedregosa, Ga{\"e}l Varoquaux, Alexandre Gramfort, Vincent Michel,
  Bertrand Thirion, Olivier Grisel, Mathieu Blondel, Peter Prettenhofer, Ron
  Weiss, Vincent Dubourg, et~al.
\newblock Scikit-learn: Machine learning in python.
\newblock {\em the Journal of machine Learning research}, 12:2825--2830, 2011.

\end{thebibliography}

\begin{widetext}

\appendix
\section{Proofs of Imaginary Time Bounds} \label{sec:imag_proof}
\subsection{Imaginary time QDrift}

\imagQD* 

\begin{proof}
Following the supplementary material of \cite{campbell2019random}, we will make use of the Liouvillian super-operator formalism for the imaginary time channel. In imaginary time we have:

\begin{equation}
    \mathcal{L}(\rho) = \{H, \rho\} = H \rho + \rho H,
\end{equation}
which generates the following channel: 
\begin{equation}
    \frac{e^{-\beta \mathcal{L(\rho)}}}{\Tr e^{-\beta \mathcal{L(\rho)}}} = \frac{e^{-\beta H} \rho e^{-\beta H}}{\Tr(e^{-\beta H} \rho e^{-\beta H})}.
\end{equation}

Throughout this section, whenever $\liouv$ appears, it is implied that we mean $\liouv(\rho)$, we use the former for more compact notation. The difference between $\liouv$ defined here and that in \cite{campbell2019random} is that the commutator becomes an anti-commutator, and the $i$ vanishes due to the nature of the imaginary time channel. Equivalence can be seen via the Taylor series expansion of $e^{-\beta \liouv(\rho)}$, where $\liouv^n$ means $n$ applications of the super-operator: $\{H, \{H, ... , \{H, \{H, \rho \}  \} ... \}\}$ From \cite{campbell2019random}, this super-operator yields the following properties: 

\begin{equation} \label{eq:liouv_prop1}
    \indone{\liouv} \leq 2 \norm{H} \leq 2\lambda .
\end{equation}

Similarly, if we define $\liouv_j$ that generate unitaries under each $H_j$ as in \cite{campbell2019random}, and given that the Hamiltonian terms are normalized by $h_j$ the following also holds:

\begin{equation} \label{eq:liouv_prop2}
    \indone{\liouv_j} \leq 2\norm{H_j} = 2 .
\end{equation}

We are now ready to proceed with the proof. To avoid cumbersome notation we will write $\liouv(\rho)$ simply as $\liouv$. Let us first examine the induced 1-norm of the difference between exact channel $\mathcal{U}_N$ that acts for imaginary time $\nicefrac{\beta}{N}$ and the QDrift channel $\mathcal{Q}$, and the generalize this for $N$ compositions of the QDrift channel representing $N$ samples:

\begin{align}
    \indone{\frac{\mathcal{U}_N}{\Tr \mathcal{U}_N} - \frac{\mathcal{Q}}{\Tr \mathcal{Q}}} &= \indone{\frac{e^{-\nicefrac{\beta}{N} \liouv}}{\Tr e^{-\nicefrac{\beta}{N} \liouv}} - \frac{\sum_i \frac{h_i}{\lambda} e^{-\nicefrac{\beta \lambda}{N} \liouv_i}}{\Tr(\sum_j \frac{h_j}{\lambda} e^{-\nicefrac{\beta \lambda}{N} \liouv_j})}}\\
    &=  \indone{\frac{e^{-\nicefrac{\beta}{N} \liouv}}{1+\Tr(\sum_{n=1}^\infty \frac{(-\beta \liouv)^n}{N^n n!})} - \frac{\sum_i \frac{h_i}{\lambda} e^{-\nicefrac{\beta \lambda}{N} \liouv_i}}{1 + \Tr(\sum_j \frac{h_j}{\lambda} {\sum_{m=1}^\infty \frac{(-\beta \liouv_j)^m}{N^m m!}})}} \\
    &\leq \indone{\frac{e^{-\nicefrac{\beta}{N} \liouv}}{1+\Tr(\sum_{n=1}^\infty \frac{(-\beta \liouv)^n}{N^n n!})} - \frac{\sum_i \frac{h_i}{\lambda} e^{-\nicefrac{\beta \lambda}{N} \liouv_i}}{1+\Tr(\sum_{n=1}^\infty \frac{(-\beta \liouv)^n}{N^n n!})}} \nonumber \\
    \text{  }&+ \indone{\frac{\sum_i \frac{h_i}{\lambda} e^{-\nicefrac{\beta \lambda}{N} \liouv_i}}{1+\Tr(\sum_{n=1}^\infty \frac{(-\beta \liouv)^n}{N^n n!})} - \frac{\sum_i \frac{h_i}{\lambda} e^{-\nicefrac{\beta \lambda}{N} \liouv_i}}{1 + \Tr(\sum_j \frac{h_j}{\lambda} {\sum_{m=1}^\infty \frac{(-\beta \liouv_j)^m}{N^m m!}})}} \\
    &\coloneqq a + b
\end{align}

Now we shall proceed by dealing with each of these terms individually. The denominator of the $a$ can be expanded via its geometric series:

\begin{align}
    a &= \indone{\parens{\sum_{k=0}^\infty \bigl (-\Tr(\sum_{n=1}^\infty \frac{(-\beta \liouv)^n}{N^n n!}) \bigr)^k}  (e^{\nicefrac{-\beta \liouv}{N}} - \sum_i \frac{h_i}{\lambda} e^{\nicefrac{-\beta \liouv_i}{N}})} \\
    &= \abs{\sum_{k=0}^\infty (-\Tr\bigl (\sum_{n=1}^\infty \frac{(-\beta \liouv)^n}{N^n n!}) \bigr)^k} \indone{e^{\nicefrac{-\beta \liouv}{N}} - \sum_i \frac{h_i}{\lambda} e^{\nicefrac{-\beta \lambda \liouv_i}{N}}} \\ 
\end{align}

We will first focus on the leftmost factor. To bound this term we will apply the triangle inequality to apply the norm to each term in the sum. We will also use the fact that the absolute value of the trace is trivially upper-bounded by the induced 1-norm: $\abs{\Tr \liouv(\rho)} \leq \indone{\liouv(\rho)}$. Here it is understood that this absolute value is maximized over all $\rho : \norm{\rho}=1$. We then use the sub-multiplicative property $\indone{\liouv^n}\leq \indone{\liouv}^n$:

\begin{align}
    a &\leq \sum_{k=0}^\infty \bigl (\sum_{n=1}^\infty \frac{\beta^n \abs{\Tr(\liouv)^n}}{N^n n!} \bigr )^k \indone{e^{\nicefrac{-\beta \liouv}{N}} - \sum_i \frac{h_i}{\lambda} e^{\nicefrac{-\beta \lambda \liouv_i}{N}}} \\
    &\leq \sum_{k=0}^\infty \bigl (\sum_{n=1}^\infty \frac{\beta^n \indone{\liouv}^n}{N^n n!} \bigr )^k \indone{e^{\nicefrac{-\beta \liouv}{N}} - \sum_i \frac{h_i}{\lambda} e^{\nicefrac{-\beta \lambda \liouv_i}{N}}} \\
    &\leq \sum_{k=0}^\infty \bigl (\sum_{n=1}^\infty \frac{\beta^n 2^n \lambda^n}{N^n n!} \bigr )^k \indone{e^{\nicefrac{-\beta \liouv}{N}} - \sum_i \frac{h_i}{\lambda} e^{\nicefrac{-\beta \lambda \liouv_i}{N}}} \\
    &= \sum_{k=0}^\infty (e^{\nicefrac{2 \lambda \beta }{N}} -1)^k \indone{e^{\nicefrac{-\beta \liouv}{N}} - \sum_i \frac{h_i}{\lambda} e^{\nicefrac{-\beta \lambda \liouv_i}{N}}} \\
    & = \frac{\indone{e^{\nicefrac{-\beta \liouv}{N}} - \sum_i \frac{h_i}{\lambda} e^{\nicefrac{-\beta \lambda \liouv_i}{N}}} }{2 - e^{\nicefrac{2 \beta \lambda}{N}}}
\end{align}

Where we require $|e^{\nicefrac{2 \beta \lambda}{N}} - 1| < 1$. Next, inspecting the numerator, this is of the form bounded in the Appendix of \cite{campbell2019random}, therefore, the following holds trivially:
\begin{equation}
    a \leq \frac{4 \beta^2 \lambda^2}{N^2} \frac{e^{\nicefrac{2 \lambda \beta }{N}} }{2 - e^{\nicefrac{2 \beta \lambda}{N}}}
\end{equation}

Next, let us proceed to bound the $b$ term:
\begin{align}
    b &= \indone{\frac{\sum_i \frac{h_i}{\lambda} e^{-\beta \liouv_i}}{1+\Tr(\sum_{n=1}^\infty \frac{(-\beta \liouv)^n}{n!})} - \frac{\sum_i \frac{h_i}{\lambda} e^{-\beta \liouv_i}}{1 + \Tr(\sum_j \frac{h_j}{\lambda} {\sum_{m=1}^\infty \frac{(-\beta \liouv_j)^m}{m!}})}} \\
    &= \indone{\sum_i \frac{h_i}{\lambda} e^{-\beta \liouv_i}} \abs{\sum_{k=0}^\infty \bigl (-\Tr(\sum_{n=1}^\infty \frac{(-\beta \liouv)^n}{n!}) \bigr)^k - \sum_{l=0}^\infty \bigl (-\Tr(\sum_j \frac{h_j}{\lambda} \sum_{n=1}^\infty \frac{(-\beta \liouv_j)^n}{n!}) \bigr)^l} \\
    &\leq e^{\nicefrac{2\beta \lambda}{N}} \abs{\sum_{k=0}^\infty \bigl (-\Tr(\sum_{n=1}^\infty \frac{(-\beta \liouv)^n}{n!}) \bigr)^k - \sum_{l=0}^\infty \bigl (-\Tr(\sum_j \frac{h_j}{\lambda} \sum_{n=1}^\infty \frac{(-\beta \liouv_j)^n}{n!}) \bigr)^l}
\end{align}
Now, carefully expanding the double sums up to second order:

\begin{equation}
    = e^{\nicefrac{2\beta \lambda}{N}} \abs{ \begin{multlined} \sum_{k=2}^\infty \bigl (-\Tr(\sum_{n=2}^\infty \frac{(-\beta \liouv)^n}{n!}) \bigr)^k - \sum_{l=2}^\infty \bigl (-\Tr(\sum_j \frac{h_j}{\lambda} \sum_{n=2}^\infty \frac{(-\beta \liouv_j)^n}{n!}) \bigr)^l + \\ \Tr \sum_j \frac{h_j}{\lambda} \sum_{m=2}^\infty \frac{(-\beta \lambda \liouv_j)^m}{N^m m!} - \Tr \sum_{n=2}^\infty \frac{(-\beta \lambda \liouv)^n}{N^n m!} + \Tr \frac{\beta \liouv}{N} - \Tr \sum_j \frac{h_j}{\lambda} \frac{\beta \lambda \liouv_j}{N} \end{multlined}} 
\end{equation}

Using the fact that $\sum_j h_j \liouv_j = \liouv$, the final two terms cancel and we have:

\begin{align}
    &= e^{\nicefrac{2\beta \lambda}{N}} \Biggr|\sum_{k=2}^\infty \bigl (-\Tr(\sum_{n=2}^\infty \frac{(-\beta \liouv)^n}{N^n n!}) \bigr)^k - \sum_{l=2}^\infty \bigl (-\Tr(\sum_j \frac{h_j}{\lambda} \sum_{n=2}^\infty \frac{(-\beta \liouv_j)^n}{N^n n!}) \bigr)^l \nonumber\\
    &\qquad+ \Tr \sum_j \frac{h_j}{\lambda} \sum_{m=2}^\infty \frac{(-\beta \lambda \liouv_j)^m}{N^m m!} - \Tr \sum_{n=2}^\infty \frac{(-\beta \lambda \liouv)^n}{N^n m!} \Biggr| \\
    &\leq e^{\nicefrac{2\beta \lambda}{N}} \medmath{\parens{ \abs{\sum_{k=2}^\infty \bigl (-\Tr(\sum_{n=2}^\infty \frac{(-\beta \liouv)^n}{N^n n!}) \bigr)^k} + \abs{\sum_{l=2}^\infty \bigl (-\Tr(\sum_j \frac{h_j}{\lambda} \sum_{n=2}^\infty \frac{(-\beta \liouv_j)^n}{N^n n!}) \bigr)^l} + \abs{\Tr \sum_j \frac{h_j}{\lambda} \sum_{m=2}^\infty \frac{(-\beta \lambda \liouv_j)^m}{N^m m!}} + \abs{\Tr \sum_{n=2}^\infty \frac{(-\beta \lambda \liouv)^n}{N^n m!}} }} \\
    &\leq e^{\nicefrac{2\beta \lambda}{N}} \medmath{\parens{ \sum_{k=2}^\infty \bigl (\sum_{n=2}^\infty \frac{\beta^n \abs{\Tr{\liouv}^n}}{N^n n!} \bigr)^k + \sum_{l=2}^\infty \bigl (\sum_j \frac{h_j}{\lambda} \sum_{n=2}^\infty \frac{\beta^n \abs{\Tr{\liouv}^n}}{N^n n!} \bigr)^l + \sum_j \frac{h_j}{\lambda} \sum_{m=2}^\infty \frac{\beta^m \lambda^m \abs{\Tr{\liouv_j}^n}}{N^m m!} + \sum_{n=2}^\infty \frac{\beta^n \lambda^n \abs{\Tr{\liouv}^n}}{N^n m!} }} \\
    \end{align}
    
    Now we once again use the property $\abs{\Tr \liouv(\rho)} \leq \indone{\liouv(\rho)}$:
    
    \begin{align}
    &\leq e^{\nicefrac{2\beta \lambda}{N}} \medmath{\parens{ \sum_{k=2}^\infty \bigl (\sum_{n=2}^\infty \frac{\beta^n \indone{\liouv}^n}{N^n n!} \bigr)^k + \sum_{l=2}^\infty \bigl (\sum_j \frac{h_j}{\lambda} \sum_{n=2}^\infty \frac{\beta^n \indone{\liouv_j}^n}{N^n n!} \bigr)^l +  \sum_j \frac{h_j}{\lambda} \sum_{m=2}^\infty \frac{\beta^m \lambda^m \indone{\liouv_j}^m}{N^m m!} +  \sum_{n=2}^\infty \frac{\beta^n \lambda^n \indone{\liouv}^n}{N^n m!} }} \\
    &\leq e^{\nicefrac{2\beta \lambda}{N}} \parens{ \sum_{k=2}^\infty \bigl (\sum_{n=2}^\infty \frac{2^n \beta^n \lambda^n}{N^n n!} \bigr)^k + \sum_{l=2}^\infty \bigl (\sum_j \frac{h_j}{\lambda} \sum_{n=2}^\infty \frac{2^n \beta^n \lambda^n}{N^n n!} \bigr)^l +  \sum_j \frac{h_j}{\lambda} \sum_{m=2}^\infty \frac{2^m \beta^m \lambda^m}{N^m m!} +  \sum_{n=2}^\infty \frac{2^n \beta^n \lambda^n}{N^n m!} } \\
    &\leq e^{\nicefrac{2\beta \lambda}{N}} \parens{ 2 \sum_{k=2}^\infty \bigl (\frac{2\beta^2 \lambda^2}{N^2} e^{\nicefrac{2\lambda \beta}{N}} \bigr)^k +  2\frac{2\beta^2 \lambda^2}{N^2} e^{\nicefrac{2\lambda \beta}{N}}}\\
    &=  2 \frac{\frac{4\beta^4 \lambda^4}{N^4} e^{\nicefrac{6\lambda \beta}{N}}}{1-\frac{2\beta^2 \lambda^2}{N^2} e^{\nicefrac{2\lambda \beta}{N}}} +  2\frac{2\beta^2 \lambda^2}{N^2} e^{\nicefrac{4\lambda \beta}{N}}\\
\end{align}

Where in the last 3 lines we used both the tail-bound on an exponential sum; $\sum_{n=2}^\infty \frac{x^n}{n!} \leq \frac{x^2}{2} e^x$ and the tail-sum of the geometric series; $\sum_{n=2}^\infty x^n = \frac{x^2}{1-x}$. The latter of the two requires that $|x|<1$ for convergence, so we therefore, require $|\frac{2\beta^2 \lambda^2}{N^2} e^{\nicefrac{2\lambda \beta}{N}}| < 1$, which is not a strong requirement given we are interested in large $N$. Now combining the results for $a$ and $b$ we obtain the final result for a channel that simulates a fraction $\nicefrac{1}{N}$ of the dynamics:
\begin{align}
    \indone{\mathcal{U}_N - \Lambda_{QD}} &\leq a + b \\
    &\leq \frac{4 \beta^2 \lambda^2}{N^2} \frac{e^{\nicefrac{2 \lambda \beta }{N}} }{2 - e^{\nicefrac{2 \beta \lambda}{N}}} + 2 \frac{\frac{4\beta^4 \lambda^4}{N^4} e^{\nicefrac{6\lambda \beta}{N}}}{1-\frac{2\beta^2 \lambda^2}{N^2} e^{\nicefrac{2\lambda \beta}{N}}} +  2\frac{2\beta^2 \lambda^2}{N^2} e^{\nicefrac{4\lambda \beta}{N}}
\end{align}

Using the fact that the norm is sub-additive under composition
\begin{align}
   \indone{\mathcal{U}_N^{\circ N} - \Lambda_{QD}^{\circ N}} &\leq N \indone{\mathcal{U}_N- \Lambda_{QD}}, 
\end{align}
we are left with the following result for the full-time channel
\begin{align} \label{eq:pre-linear}
    \indone{\mathcal{U} - \Lambda_{QD}^{\circ N}} &\leq \frac{4 \beta^2 \lambda^2}{N} \frac{e^{\nicefrac{2 \lambda \beta }{N}} }{2 - e^{\nicefrac{2 \beta \lambda}{N}}} + 2 \frac{\frac{4\beta^4 \lambda^4}{N^3} e^{\nicefrac{6\lambda \beta}{N}}}{1-\frac{2\beta^2 \lambda^2}{N^2} e^{\nicefrac{2\lambda \beta}{N}}} +  2\frac{2\beta^2 \lambda^2}{N} e^{\nicefrac{4\lambda \beta}{N}}.
\end{align}

Next, we wish to linearize this bound for a clearer interpretation of the error. We can do so by utilizing the conditions required to hold in the proof: 
\begin{equation} \label{eq:qd_cond1}
    |e^{\nicefrac{2 \beta \lambda}{N}} - 1| < 1 ,
\end{equation}

\begin{equation} \label{eq:qd_cond2}
    |\frac{2\beta^2 \lambda^2}{N^2} e^{\nicefrac{2\lambda \beta}{N}}| < 1. 
\end{equation}
 
The first condition (\ref{eq:qd_cond1}) gives $e^{\nicefrac{2\beta \lambda}{N}} \in (0,2)$ which implies $\frac{2\beta \lambda}{N} \in (-\infty, \ln{2})$, and given that we require this quantity necessarily be positive we have $\frac{2\beta \lambda}{N} \in [0, \ln{2})$. Now, our second condition (\ref{eq:qd_cond2}), given the positivity constraint, further implies $\frac{x^2}{2} e^x \in [0,1)$ with $x=\nicefrac{2 \beta \lambda}{N}$. Using max value obtained from Equation 
(\ref{eq:qd_cond1}), that $x < \ln{2}$, Equation (\ref{eq:qd_cond2}) reduces to $\frac{x^2}{2} e^x \leq 2\ln{2} \in [0, 1)$. Given that Equation (\ref{eq:qd_cond2}) is a product of two monotonic functions in the domain $(0,\infty)$, any value of $x$ less than the maximum obtained in the first condition also satisfies the second. Therefore, going forward we focus only on satisfying the constraint in Equation (\ref{eq:qd_cond1}).\\

Going forward in linearizing Equation \ref{eq:pre-linear} we will take $\lambda$ and $N$ to be constant and consider variable $\beta$, and continue by upper bounding each term. Beginning with the first term we have:

\begin{equation} \label{eq:qd_term1}
    \frac{4 \beta^2 \lambda^2}{N} \frac{e^{\nicefrac{2 \lambda \beta }{N}} }{2 - e^{\nicefrac{2 \beta \lambda}{N}}} = \frac{2 \beta^2 \lambda^2}{N} \frac{1}{y - \frac{1}{2}}, \; \; \mathbf{s.t.} \; y(\beta) = e^{-\nicefrac{2 \lambda \beta }{N}} \; \text{and} \;  \frac{2}{3} \leq y(\beta) \leq 1
\end{equation}

Now we rewrite the aforementioned constraint $1\leq e^{\nicefrac{2 \lambda \beta }{N}} < 2$, to be $1\leq e^{\nicefrac{2 \lambda \beta }{N}} \leq \frac{3}{2}$ to prevent one from approaching a divergence in the proof arbitrarily closely. Utilizing the following property:
\begin{equation}
    f(b) = f(a) + \int_a^b \frac{df}{dy} dy \leq f(a) + (b-a) \max_{y \in [a,b]}{\abs{\frac{df(y)}{dy}}} ,
\end{equation}

with $f(\beta) = \frac{1}{y(\beta) - \frac{1}{2}}$ Equation \ref{eq:qd_term1} is upper bounded in the following way:

\begin{equation}
    \frac{4 \beta^2 \lambda^2}{N} \frac{4\beta^2 \lambda^2}{N} e^{\nicefrac{4\beta \lambda}{N}} \leq \frac{4 \beta^2 \lambda^2}{N} (3 + 12 \frac{\lambda}{N})
\end{equation}

Moving on to the third term in Equation (\ref{eq:pre-linear}), we have:

\begin{equation}
    \frac{4\beta^2 \lambda^2}{N} e^{\nicefrac{4\beta \lambda}{N}} = \frac{4\beta^2 \lambda^2}{N} y^2, \; \; \mathbf{s.t.} \;\; y = e^{\nicefrac{2\beta \lambda}{N}} \; \text{and} \; 1\leq y(\beta) \leq \frac{3}{2} ,
\end{equation}

and using the same integral bound we obtain:

\begin{equation}
    \frac{4\beta^2 \lambda^2}{N} e^{\nicefrac{4\beta \lambda}{N}} \leq \frac{4\beta^2 \lambda^2}{N} (1+3 \frac{\lambda}{N}).
\end{equation}

Finally, we are left to bound the second term in Equation \ref{eq:pre-linear}, which we can write in the following way:

\begin{equation}
    \parens{\frac{4 \beta^2 \lambda^2}{N} e^{\nicefrac{4\lambda \beta}{N}}} \frac{\frac{4\beta^2 \lambda^2}{N^2} e^{\nicefrac{2\lambda \beta}{N}}}{2-\frac{4\beta^2 \lambda^2}{N^2} e^{\nicefrac{2\lambda \beta}{N}}} = 2 \parens{\frac{4 \beta^2 \lambda^2}{N} e^{\nicefrac{4\lambda \beta}{N}}} \frac{y\ln{y}}{1-y\ln{y}}, \; \; \mathbf{s.t.} \; y(\beta) = e^{\nicefrac{2\lambda \beta}{N}} \; \text{and} \; 1\leq y(\beta) \leq \frac{3}{2}.
\end{equation}

The function $f(y) = \frac{y\ln{y}}{1-y\ln{y}}$ is monotonic over the constrained domain and can be upper bounded by inserting the largest value that $\ln{y}$ achieves. Also, notice the term in parentheses is precisely the 3rd term in Equation \ref{eq:pre-linear} that we bounded above. Using these two observations:

\begin{align}
    2 \parens{\frac{4 \beta^2 \lambda^2}{N} e^{\nicefrac{4\lambda \beta}{N}}} \frac{y\ln{y}}{1-y\ln{y}} &\leq \frac{4 \beta^2 \lambda^2}{N}
    2 \parens{1 + 3 \frac{\lambda}{N}} \frac{y\ln{\frac{3}{2}}}{1-y\ln{\frac{3}{2}}} \\
    &\leq \frac{4 \beta^2 \lambda^2}{N} 2 \ln{\frac{3}{2}} \parens{1 + 3 \frac{\lambda}{N}} \parens{\frac{1}{\frac{2}{3} - \ln{\frac{3}{2}}} + \frac{2\lambda}{3N} \frac{1}{(\frac{2}{3} - \ln{\frac{3}{2}})^2}}
\end{align}
where in the last step we once again used the integral identity following the same procedure as we did for the first term. Now that we have linearized each term, we combine all the results to obtain:
\begin{equation}
    \indone{\mathcal{U} - \Lambda_{QD}^{\circ N}} \leq \frac{\beta^2 \lambda^2}{N}\parens{a + b\frac{\lambda}{N} + c\frac{\lambda^2}{N^2}}
\end{equation}
where $a\approx 28.41845 , \; b\approx 128.95110\; \text{and} \; c\approx 95.08717$. If we make the promise that $\frac{\lambda}{N} \leq 0.01$, the bound simplifies to 

\begin{equation}
     \indone{\mathcal{U} - \Lambda_{QD}^{\circ N}} \leq \frac{C\beta^2 \lambda^2}{N}
\end{equation}
where $C\approx 29.71747$, completing the proof.
\end{proof}

This condition is not very restrictive on the imaginary time $\beta$. Considering the condition given in Equation (\ref{eq:qd_cond1}), we have $\beta < \frac{\ln{2} N}{2\lambda} \approx 34.657$, so $\beta < 10\pi$, meaning that given this promise on the ratio $\frac{\lambda}{N}$, the bound holds for a simulation time that allows the phase to oscillate up to 5 times if we were dealing with real-time.
 

\subsection{Imaginary Time Trotter-Suzuki Formulas}
Following the procedure in \cite{hagan2022composite}, we must first convert the existing bound to a bound on channels and then bound the error with the renormalizing operations. \\

\imagTS*

\begin{proof}
First consider the induced 1-norm of the difference between the channels after a single iteration or time-step $r$. As in the previous proof, absolute values containing a channel operation are understood to be maximized over all inputs $\rho : \norm{\rho}_1 = 1$:
    \begin{align}
        &\indone{\frac{\evolchan{\rho, \nicefrac{\beta}{r}, \nicefrac{\beta}{r}}}{\Tr(\evolchan{\rho, \nicefrac{\beta}{r}, \nicefrac{\beta}{r}})} - \frac{\tschan{2k}{\rho, \nicefrac{\beta}{r}, \nicefrac{\beta}{r}}}{\Tr(\tschan{2k}{\rho, \nicefrac{\beta}{r}, \nicefrac{\beta}{r}})}} \leq \indone{\frac{\evolchan{\rho, \nicefrac{\beta}{r}}- \tschan{2k}{\rho, \nicefrac{\beta}{r}}}{\Tr \evolchan{\rho, \nicefrac{\beta}{r}}}} \nonumber\\
        &\qquad+ \indone{\tschan{2k}{\rho, \nicefrac{\beta}{r}} \parens{\frac{1}{\Tr \tschan{2k}{\rho, \nicefrac{\beta}{r}}}- \frac{1}{\Tr \evolchan{\rho, \nicefrac{\beta}{r}}}}} \\
        &= \abs{\frac{1}{\Tr \evolchan{\rho, \nicefrac{\beta}{r}}}} \indone{\evolchan{\rho, \nicefrac{\beta}{r}}-\tschan{2k}{\rho, \nicefrac{\beta}{r}}} + \indone{\tschan{2k}{\rho, \nicefrac{\beta}{r}}} \abs{\frac{\Tr(\evolchan{\rho, \nicefrac{\beta}{r}} - \tschan{2k}{\rho, \nicefrac{\beta}{r}})}{\Tr(\evolchan{\rho, \nicefrac{\beta}{r}}) \Tr(\tschan{2k}{\rho, \nicefrac{\beta}{r}})}} \\
        &\leq \abs{\frac{1}{\Tr \evolchan{\rho, \nicefrac{\beta}{r}}}} \indone{\evolchan{\rho, \nicefrac{\beta}{r}}-\tschan{2k}{\rho, \nicefrac{\beta}{r}}} \nonumber\\
        &\qquad+ \indone{\tschan{2k}{\rho, \nicefrac{\beta}{r}}} \indone{\evolchan{\rho, \nicefrac{\beta}{r}} - \tschan{2k}{\rho, \nicefrac{\beta}{r}}}\abs{\frac{1}{\Tr(\evolchan{\rho, \nicefrac{\beta}{r}}) \Tr(\tschan{2k}{\rho, \nicefrac{\beta}{r}})}} \\
        & = \indone{\evolchan{\rho, \nicefrac{\beta}{r}} - \tschan{2k}{\rho, \nicefrac{\beta}{r}}} \parens{\frac{\indone{\tschan{2k}{\rho, \nicefrac{\beta}{r}}} }{\abs{\Tr(\evolchan{\rho, \nicefrac{\beta}{r}}) \Tr(\tschan{2k}{\rho, \nicefrac{\beta}{r}})}} + \frac{1}{\abs{\Tr \evolchan{\rho, \nicefrac{\beta}{r}}}}}
    \end{align}
    where in the second line we simply cross-multiply the traces of the outputs of the channels (since they are just numbers), and in third line we use the fact that the absolute value of the trace is trivially upper bounded by the diamond norm. Going forward we need to lower bound the traces in the denominator, which can be done using the well-known von Neumann trace inequality: $\sum_{i=1}^N \sigma^A_i \sigma^B_{N-i+1} \leq \Tr(AB)$ where $\sigma^{(A, B)}_i$ represent the singular values of the matrix $A$ or $B$ respectfully, and they are ordered such that $\sigma_i \leq \sigma_{i+1}$. The absolute values in the denominator also need not be written as these traces are positive by definition. For the exact evolution channel we have:
    \begin{align}
        \Tr \evolchan{\rho, \nicefrac{\beta}{r}} &= \Tr(e^{-\beta H} \rho e^{-\nicefrac{\beta}{r} H}) \\
        &= \Tr (e^{-2\nicefrac{\beta}{r} H} \rho) \\
        & \geq \sum_i \sigma_i(e^{-2\nicefrac{\beta}{r} H}) \sigma_{N-i+1}(\rho) \\
        & \geq \sigma_{min}(e^{-2\nicefrac{\beta}{r} H}) \Tr(\rho) \\
        & \geq e^{-2\nicefrac{\beta}{r} \norm{H}}, 
    \end{align}
    where we use the fact that the trace of $\rho$ is 1, along with the fact that $\rho$ is positive semi-definite and hermitian, such that the sum of its singular values is equal to its trace. Next, we follow a similar procedure in dealing with the trace of the output of the Trotter-Suzuki channel:
    \begin{align}
        \Tr \tschan{2k}{\rho, \nicefrac{\beta}{r}} &= \Tr(\prodform(\nicefrac{\beta}{r}) \rho \prodform^\dagger(\nicefrac{\beta}{r})) \\
        &\geq \sigma_{min}\parens{\prodform^\dagger(\nicefrac{\beta}{r}) \prodform(\nicefrac{\beta}{r})} \Tr \rho \\
        & = \sigma_{min}\parens{\prodform^\dagger(\nicefrac{\beta}{r}) \prodform(\nicefrac{\beta}{r})} .
    \end{align}
    In order to understand how this term behaves with respect to $\nicefrac{\beta}{r}$ and $H$, we require more careful analysis. To do this it will be useful to introduce some more convenient notation. Just like an arbitrary product formula can be defined in the following way (following the notation of \cite{childs2021theory}):
    \begin{equation}
        \prodform(\nicefrac{\beta}{r}) = \prod_{\upsilon=1}^\Upsilon \prod_{l=1}^L e^{-\nicefrac{\beta}{r} s_{\upsilon , l} H_{\pi_{\upsilon(l)}}} . \\
    \end{equation}
    Where $s$ controls the pre-factor that modifies the time-step and $\pi$ controls the ordering of the $H$ summands in each stage of the formula. We utilize the following property of singular values: given two matrices $A, \; B \in \mathbb{C}$ the following holds $\sigma_n (A) \sigma_i (B) \leq \sigma_i (AB) \leq \sigma_1 (A) \sigma_i (B)$, given that singular values are ordered such that $\sigma_1 \geq \sigma_2 \geq ... \geq \sigma_n$. Also, note that since the individual evolution operators are no longer anti-hermitian, coupled with the fact that Trotter-Suzuki formulas are symmetric, $\prodform^\dagger = \prodform$. 

    \begin{align}
        \Tr \tschan{2k}{\rho, \nicefrac{\beta}{r}} &\geq \sigma_{min}\parens{\prodform^2(\nicefrac{\beta}{r})} \\ 
        & \geq \sigma^2_{min}\parens{\prod_{\upsilon=1}^\Upsilon \prod_{l=1}^L e^{-\nicefrac{\beta}{r} s_{\upsilon , l} H_{\pi_{\upsilon(l)}}}} \\
        &\geq \parens{\prod_{\upsilon=1}^\Upsilon \prod_{l=1}^L \sigma_{min}\parens{e^{-\nicefrac{\beta}{r} s_{\upsilon , l} H_{\pi_{\upsilon(l)}}}}}^2 \\
        &\geq \parens{\prod_{\upsilon=1}^\Upsilon \prod_{l=1}^L e^{-\nicefrac{\beta}{r} s_{\upsilon , l} \norm{H_{\upsilon(l)}}}}^2 \\
        &= \parens{e^{\sum_{\upsilon=1}^\Upsilon \sum_{l=1}^L-\nicefrac{\beta}{r} s_{\upsilon , l} \norm{H_{\upsilon(l)}}}}^2 \\
        &= e^{-2 \nicefrac{\beta}{r} \lambda}
    \end{align}
    where each inequality step is a repeated application of the aforementioned property. In the last step, we complete the using the property of Trotter-Suzuki formulas that $\sum_{\upsilon, l} s_{\upsilon, l} = 1$ and the definition $\sum_l \norm{H_l} = \lambda$. \\

    We also need to deal with the most important term which is the actual induced 1-norm of the difference between the channels. To do this, we will massage this term into a form where we can apply a result from \cite{childs2021theory}. To do this we follow the same procedure from Section \ref{sec:LR}:
    \begin{align}
        \indone{\capU(\beta) - \tschan{2k}{\beta}} :=& \indonedef{\capU(\beta) - \tschan{2k}{\beta} } \\
        =& \max_{\rho : \norm{\rho}_1 = 1} \norm{e^{-H\beta} \rho e^{- H \beta}  - \prodform(\beta) \rho \prodform  (\beta) }_1 \\
        \leq& \max_{\rho : \norm{\rho}_1 = 1} \norm{e^{-H\beta} \rho e^{- H \beta} -e^{- H \beta} \rho \prodform  (\beta) }_1 \\
        \text{ }& +\max_{\rho : \norm{\rho}_1 \leq 1} \norm{ e^{- H \beta}   \rho \prodform  (\beta)  - \prodform(\beta)  \rho \prodform  (\beta) }_1 \\
        =& \max_{\rho : \norm{\rho}_1 = 1} \norm{e^{-\beta H}\rho \parens{e^{- H \beta} - \prodform  (\beta)} }_1 + \max_{\rho : \norm{\rho}_1 = 1} \norm{\parens{e^{- H \beta} - \prodform(\beta)}  \rho \prodform(\beta)}_1 \\
        \leq & \parens{\norm{e^{-\beta H}}_\infty + \norm{\prodform(\beta)}_\infty} \norm{e^{- H \beta} - \prodform(\beta)}_\infty \max_{\rho : \norm{\rho}_1 \leq 1} \norm{\rho}_1 \\
        = & 2 e^{\beta \lambda} \norm{e^{- H \beta} - \prodform(\beta)}_\infty \\
        = & 2 e^{\beta \lambda} \norm{U(\beta) - \prodform(\beta)}_\infty .
    \end{align}
    Next we apply the following bound from Appendix E in \cite{childs2021theory}:
    \begin{equation}
        \norm{U(\nicefrac{\beta}{r}) - \prodform(\nicefrac{\beta}{r})}_\infty \leq 2\Upsilon^{2k+1} \frac{\alpha_{comm}(H,2k)}{(2k+1)!}  \frac{\beta^{2k+1}}{r^{2k+1}} e^{4\frac{\beta}{r} \Upsilon \lambda}
    \end{equation}
    Where $\Upsilon$ and $\lambda$ are again the number of stages of the product formula and the sum of the spectral norms of the Hamiltonian summands respectively. Now we are simply left with a single term to bound in the diamond norm of the Trotter-Suzuki channel. We can similarly define the Trotter-Suzuki channel as a composition of multiple stages of channels, each containing compositions of evolution channels corresponding to each Hamiltonian summand. We write this in the following way: 
    \begin{equation}
        \tschan{2k}{\nicefrac{\beta}{r}} \equiv \bigcirc_{\upsilon = 1}^\Upsilon \bigcirc_{l=1}^L e^{-\frac{\beta}{r} s_{\upsilon, l} \liouv_{\pi_{\upsilon(l)}} } ,
    \end{equation}
    where we once again have the Liouvillian generator for imaginary time $\liouv(\rho) = \{H, \rho \}$, and all other machinery is defined in the same was as above. In this way it is more natural to proceed in using the sub-multiplicative property of the norm:
    
    \begin{align}
        \indone{\tschan{2k}{\nicefrac{\beta}{r}}} &\leq \prod_{\upsilon=1}^\Upsilon \prod_{l=1}^L \indone{e^{-\frac{\beta}{r} s_{\upsilon, l} \liouv_{\pi_{\upsilon(l)} }}} \\
        &\leq \prod_{\upsilon=1}^\Upsilon \prod_{l=1}^L  e^{2\frac{\beta}{r} s_{\upsilon, l} \norm{H_{\pi_{\upsilon(l)}}}} \\
        &= e^{\sum_{\upsilon=1}^\Upsilon \sum_{l=1}^L 2\frac{\beta}{r} s_{\upsilon , l} \norm{H_{\upsilon(l)}}} \\
        &= e^{2 \frac{\beta}{r} \lambda}.
    \end{align}
    In the second line we use the Taylor expansion of the exponential of the Liouvillian to apply the diamond norm to the exponent as seen in the QDrift proof. We make use of the aforementioned fact that $\indone{\liouv_l} \leq 2h_l$ where $h_l$ is the spectral norm of the $H_l$ Hamiltonian term. Now, combining all of our results we are left with the following:
    \begin{equation} \label{eq:single_step_itime_trot}
        \indone{\frac{\evolchan{\rho, \beta}}{\Tr(\evolchan{\rho, \beta})} - \parens{\frac{\tschan{2k}{\rho, \nicefrac{\beta}{r}}}{\Tr(\tschan{2k}{\rho, \nicefrac{\beta}{r}})}}^{\circ r}} \leq 4\Upsilon^{2k+1} \frac{\alpha_{comm}(H,2k)}{(2k+1)!} \frac{\beta^{2k+1}}{r^{2k+1}} e^{4\frac{\beta}{r} \Upsilon \lambda} e^{\beta \lambda /r}\parens{e^{4\frac{\beta}{r} \lambda} e^{2\frac{\beta}{r} \norm{H}} + e^{2\frac{\beta}{r} \norm{H}}}
    \end{equation}
    and we wish to compare the exact evolution to the channel after $r$ iterations. In other words, we use the property $\indone{X^{\circ r} - Y^{\circ r}} \leq r \indone{X-Y}$, proven earlier (for the diamond norm, but it also holds here) in Equation \ref{eq:normsum}, 
    to obtain the following:
    \begin{equation}
        \indone{\frac{\evolchan{\rho, \beta}}{\Tr(\evolchan{\rho, \beta})} - \parens{\frac{\tschan{2k}{\rho, \nicefrac{\beta}{r}}}{\Tr(\tschan{2k}{\rho, \nicefrac{\beta}{r}})}}^{\circ r}} \leq 4\Upsilon^{2k+1} \frac{\alpha_{comm}(H,2k)}{(2k+1)!} \frac{\beta^{2k+1}}{r^{2k}} e^{4\frac{\beta}{r} \Upsilon \lambda} e^{\beta \lambda /r} \parens{e^{4\frac{\beta}{r} \lambda} e^{2\frac{\beta}{r} \norm{H}} + e^{2\frac{\beta}{r} \norm{H}}}
    \end{equation}
    Similar to the QDrift case, we wish to eliminate the exponential factors to give a more straightforward interpretation of scaling of the error bound. However, in this case the proof did not impose restrictions on the size of these exponentials. Therefore, we instead upper-bound each exponential term in a common way and then make on observation about the scaling of $r$, to show that the exponential factors are $\bigo{1}$. We begin by using the fact that $\norm{H} \leq \lambda$, and  $e^{\nicefrac{\beta \lambda}{r}} \leq e^{\Upsilon \nicefrac{\beta \lambda}{r}}$ to obtain the following:
    \begin{equation}
        \indone{\frac{\evolchan{\rho, \beta}}{\Tr(\evolchan{\rho, \beta})} - \parens{\frac{\tschan{2k}{\rho, \nicefrac{\beta}{r}}}{\Tr(\tschan{2k}{\rho, \nicefrac{\beta}{r}})}}^{\circ r}} \leq 4\Upsilon^{2k+1} \frac{\alpha_{comm}(H,2k)}{(2k+1)!} \frac{\beta^{2k+1}}{r^{2k}} \parens{e^{\frac{11 \Upsilon \beta \lambda}{r} } + e^{\frac{7 \Upsilon \beta \lambda} {r} }} 
    \end{equation}
    Now, let us posit that $r \in \Omega(\Upsilon^a \beta^b \lambda^c) \; | \; a,b,c \geq 1$. Focusing on the terms in parentheses above, the terms in the exponents are positive by definition, so we have $\parens{e^{\nicefrac{10 \Upsilon \beta \lambda}{r} } + e^{\nicefrac{6 \Upsilon \beta \lambda} {r} }} \geq 1$. If the posited scaling holds, then $\parens{e^{\nicefrac{11 \Upsilon \beta \lambda}{r} } + e^{\nicefrac{7 \Upsilon \beta \lambda} {r} }} \in \bigo{1}$ such that this term is asymptotically upper-bounded by some constant $C$. In fact, if $a, b$ or $c >1$, then C=1 in the infinite limit. From \cite{hagan2022composite}, we use the fact that $\alpha_{comm}(H, 2k) \leq 2^{2k} \lambda^{2k+1}$ to verify this intuition. 

    \begin{equation}
        \indone{\frac{\evolchan{\rho, \beta}}{\Tr(\evolchan{\rho, \beta})} - \parens{\frac{\tschan{2k}{\rho, \nicefrac{\beta}{r}}}{\Tr(\tschan{2k}{\rho, \nicefrac{\beta}{r}})}}^{\circ r}} \leq 4^{2k+1} \Upsilon^{2k+1} \frac{\lambda^{2k+1}}{(2k+1)!} \frac{\beta^{2k+1}}{r^{2k}} C < \epsilon, 
    \end{equation}
    where solving for $r$ in terms of $\epsilon$ gives the following:
    \begin{equation}
        \parens{4 \Upsilon \lambda \beta}^{1 + \frac{1}{2k}} \parens{\frac{C}{\epsilon(2k+1)!}}^{\frac{1}{2k}} < r .
    \end{equation}
    Asymptotically, this gives the following asymptotic bound: 
    \begin{equation}
        r \in \Omega(\frac{\Upsilon^{1+\frac{1}{2k}} \lambda^{1+\frac{1}{2k}} \beta^{1+\frac{1}{2k}} }{\epsilon^\frac{1}{2k}}).
    \end{equation}
    Here, we enforce $\epsilon \leq 1$, which then confirms the initial intuition about the asymptotic scaling of $r$, such that that $r \in \Omega(\Upsilon^a \beta^b \lambda^c)$, where $a=b=c=1+\nicefrac{1}{2k}$, and the exponential term vanishes in the infinite limit, yielding the result below:
    \begin{equation}\indone{\frac{\evolchan{\rho, \beta}}{\Tr(\evolchan{\rho, \beta})} - \parens{\frac{\tschan{2k}{\rho, \nicefrac{\beta}{r}}}{\Tr(\tschan{2k}{\rho, \nicefrac{\beta}{r}})}}^{\circ r}} \in \bigo{ \Upsilon^{2k+1}\frac{\alpha_{comm}(H,2k)}{(2k+1)!}\frac{\beta^{2k+1}}{r^{2k}} },
    \end{equation}
    thus completing the proof.
\end{proof}

\subsection{Imaginary Time Composite Channels} 

    \imagCOMP*

    \begin{proof}
        We begin by expanding the imaginary time composite channel $\mathcal{X}(\rho, \beta)$ in terms of familiar quantities that have been previously bounded. That being said, this proof relies almost entirely on Theorems \ref{thm:imagQD} and \ref{thm:imagTS}. Throughout, notation is used such that $\evolchan{\rho, \beta}_A$ represents an exact evolution channel under the Hamiltonian $A$ from the the theorem statement. It is implied throughout that terms in $A$ are simulated by the Trotter-Suzuki channel, while terms in $B$ are simulated by QDrift. Therefore, we write $\achan{\rho, \nicefrac{\beta}{r}}$ and $\bchan{\rho, \nicefrac{\beta}{r}}$ as Trotter and QDrift channels generated by the Hamiltonians $A$ and $B$ for clearer notation, and once again we initial consider a time-step $\nicefrac{\beta}{r}$. Expanding the composite channel we have:
        \begin{equation}
            \indone{\frac{\evolchan{\rho, \nicefrac{\beta}{r}}}{\Tr \evolchan{\rho, \nicefrac{\beta}{r}}} - \frac{\mathcal{X}^{2k}(\rho, \nicefrac{\beta}{r})}{\Tr \mathcal{X}^{2k}(\rho, \nicefrac{\beta}{r})}} = 
            \indone{\frac{\capU_B (\rho, \nicefrac{\beta}{r}) \circ \capU_A (\rho, \nicefrac{\beta}{r}) + E_{A,B}(\nicefrac{\beta}{r})}{\Tr\capU (\rho, \nicefrac{\beta}{r}) } - \frac{\bchan{\rho, \nicefrac{\beta}{r}} \circ \achan{\rho, \nicefrac{\beta}{r}}}{\Tr \bchan{\rho, \nicefrac{\beta}{r}} \circ \achan{\rho, \nicefrac{\beta}{r}}}} ,
        \end{equation}
        where $E_{A,B}(\nicefrac{\beta}{r})$ is an error term that comes from partitioning the exact evolution $\capU$ into two evolution channels over $A$ and $B$. We leave the denominator un-expanded for convenience. As before, we can add and subtract a convenient term to give us terms with common denominators that are more straightforward to bound, and then apply the triangle inequality.
        \bea
            &&= \indone{\frac{\capU_B (\rho, \nicefrac{\beta}{r}) \circ \capU_A (\rho, \nicefrac{\beta}{r}) - \bchan{\rho, \nicefrac{\beta}{r}} \circ \achan{\rho, \nicefrac{\beta}{r}}}{\Tr\capU (\rho, \nicefrac{\beta}{r}) }} 
            \nonumber\\
            &&+ \indone{\frac{E_{A,B}(\nicefrac{\beta}{r})}{\Tr\capU (\rho, \nicefrac{\beta}{r})}} 
            + \indone{\frac{\bchan{\rho, \nicefrac{\beta}{r}} \circ \achan{\rho, \nicefrac{\beta}{r}}}{\Tr\capU (\rho, \nicefrac{\beta}{r})} - \frac{\bchan{\rho, \nicefrac{\beta}{r}} \circ \achan{\rho, \nicefrac{\beta}{r}}}{\Tr \bchan{\rho, \nicefrac{\beta}{r}} \circ \achan{\rho, \nicefrac{\beta}{r}}}}.
        \eea
        Now we proceed by bounding each of these terms individually. Starting with the first term:
        \begin{align}
            &\indone{\frac{\capU_B (\rho, \nicefrac{\beta}{r}) \circ \capU_A (\rho, \nicefrac{\beta}{r}) - \bchan{\rho, \nicefrac{\beta}{r}} \circ \achan{\rho, \nicefrac{\beta}{r}}}{\Tr\capU (\rho, \nicefrac{\beta}{r}) }} \leq \frac{\indone{\capU_B (\rho, \nicefrac{\beta}{r}) - \bchan{\rho, \nicefrac{\beta}{r}}} + \indone{\capU_A (\rho, \nicefrac{\beta}{r}) - \achan{\rho, \nicefrac{\beta}{r}}} }{\Tr\capU (\rho, \nicefrac{\beta}{r})} \\
            &\leq \parens{\indone{\capU_B (\rho, \nicefrac{\beta}{r}) - \bchan{\rho, \nicefrac{\beta}{r}}} + \indone{\capU_A (\rho, \nicefrac{\beta}{r}) - \achan{\rho, \nicefrac{\beta}{r}}} } e^{2\nicefrac{\norm{H} \beta}{r}}
        \end{align}
        where in the second line we use the trace bound found in the proof of Theorem \ref{thm:imagTS}. Each of the diamond norms remaining above have been bounded in Theorems \ref{thm:imagQD} and \ref{thm:imagTS}, and we will plug in said results at the end of the proof. Proceeding to the second term we have:
        \begin{align}
            & \indone{\frac{\bchan{\rho, \nicefrac{\beta}{r}} \circ \achan{\rho, \nicefrac{\beta}{r}}}{\Tr\capU (\rho, \nicefrac{\beta}{r})} - \frac{\bchan{\rho, \nicefrac{\beta}{r}} \circ \achan{\rho, \nicefrac{\beta}{r}}}{\Tr \bchan{\rho, \nicefrac{\beta}{r}} \circ \achan{\rho, \nicefrac{\beta}{r}}}} \\
            &\leq \indone{\bchan{\rho, \nicefrac{\beta}{r}}} \indone{\achan{\rho, \nicefrac{\beta}{r}}} \abs{\frac{\Tr \parens{ \capU (\rho, \nicefrac{\beta}{r}) -  \bchan{\rho, \nicefrac{\beta}{r}} \circ \achan{\rho, \nicefrac{\beta}{r}} }}{\Tr \parens{\capU (\rho, \nicefrac{\beta}{r})} \Tr \parens{\bchan{\rho, \nicefrac{\beta}{r}} \circ \achan{\rho, \nicefrac{\beta}{r}}}}} \\
            &= \indone{\bchan{\rho, \nicefrac{\beta}{r}}} \indone{\achan{\rho, \nicefrac{\beta}{r}}} \frac{\abs{\Tr \parens{ \capU_B(\rho, \nicefrac{\beta}{r}) \circ \capU_A(\rho, \nicefrac{\beta}{r}) + E_{A,B}(\nicefrac{\beta}{r}) -  \bchan{\rho, \nicefrac{\beta}{r}} \circ \achan{\rho, \nicefrac{\beta}{r}} }}}{\Tr \parens{\capU (\rho, \nicefrac{\beta}{r})} \Tr \parens{\bchan{\rho, \nicefrac{\beta}{r}} \circ \achan{\rho, \nicefrac{\beta}{r}}}} \\
            &\leq e^{\frac{2\beta \lambda_B}{r}} e^{\frac{2\beta \lambda_A}{r}} e^{\frac{2\beta \norm{H}}{r}} \frac{\indone{ \capU_B(\rho, \nicefrac{\beta}{r}) \circ \capU_A(\rho, \nicefrac{\beta}{r}) + E_{A,B}(\nicefrac{\beta}{r}) -  \bchan{\rho, \nicefrac{\beta}{r}} \circ \achan{\rho, \nicefrac{\beta}{r}} }}{\Tr \parens{\bchan{\rho, \nicefrac{\beta}{r}} \circ \achan{\rho, \nicefrac{\beta}{r}}}} \\
            &\leq e^{\frac{2\beta \lambda}{r}} e^{\frac{2\beta \norm{H}}{r}} \parens{\frac{\indone{\capU_B(\rho, \nicefrac{\beta}{r})   -  \bchan{\rho, \nicefrac{\beta}{r}} } + \indone{\capU_A(\rho, \nicefrac{\beta}{r}) - \achan{\rho, \nicefrac{\beta}{r}} }}{\Tr \parens{\bchan{\rho, \nicefrac{\beta}{r}} \circ \achan{\rho, \nicefrac{\beta}{r}}}} + \frac{\indone{E_{A,B}(\nicefrac{\beta}{r})}}{\Tr \parens{\bchan{\rho, \nicefrac{\beta}{r}} \circ \achan{\rho, \nicefrac{\beta}{r}}}}} ,
        \end{align}
        where in the last line we once again apply the triangle inequality, and also observe that $\lambda_A + \lambda_B = \lambda$. Now we only have 2 unfamiliar terms left to bound, the partitioning error $E_{A,B}(\nicefrac{\beta}{r})$ and the trace of the composition of the QDrift and Trotter channels. To analyze the trace, we expand both of the channels into their operator form:
        \begin{align}
            \Tr \parens{\bchan{\rho, \nicefrac{\beta}{r}} \circ \achan{\rho, \nicefrac{\beta}{r}}} &= \Tr \parens{\sum_{\mathbf{j}} P_\mathbf{j} \mathbf{V_j} \prodform \rho \prodform \mathbf{V_j}^\dagger} \\
            &=  \sum_{\mathbf{j}} P_\mathbf{j} \Tr \parens{\mathbf{V_j} \prodform \rho \prodform \mathbf{V_j}^\dagger} \\
            & = \sum_{\mathbf{j}} P_\mathbf{j} \Tr \parens{\prodform \mathbf{V_j}^\dagger \mathbf{V_j} \prodform \rho } \\ 
            &\geq \sum_{\mathbf{j}} P_\mathbf{j} \sigma_{min} \parens{\prodform \mathbf{V_j}^\dagger \mathbf{V_j} \prodform } \Tr \rho \\
            &\geq \sigma_{min}^2 \parens{\prodform} \sum_{\mathbf{j}} P_\mathbf{j}  \sigma_{min}^2 \parens{\mathbf{V_j}} \\
            &\geq e^{-\frac{2 \beta \lambda_A }{r}} \sum_{\mathbf{j}} P_\mathbf{j}  \sigma_{min}^2 \parens{\mathbf{V_j}}
        \end{align}
        Now recall the definition of the product formula from the multi-sample QDrift channel defined in \ref{def:QD} 
        $\mathbf{V_j} = \prod^N_{k=1} e^{-iH_{j_k} \tau}$ with $\tau = \nicefrac{\beta \lambda}{N}$. To reiterate, the index $\mathbf{j}$ is a vector of length $N$, and the index set contains every possible length-$N$ permutation of the $L_B$ Hamiltonian terms in the QDrift channel. Since we already have exponential terms in our bound, we can use a naive lower bound here, which coincides with the most likely case of drawing the largest term $N$ times. This gives:
        \begin{align}
            &\geq e^{-\frac{2 \beta \lambda_A }{r}} \sum_{\mathbf{j}} P_\mathbf{j}  \prod^N_{k=1} \sigma_{min}^2 \parens{ e^{-B_{j_k} \frac{\beta \lambda_B}{Nr}}} \\
            &\geq e^{-\frac{2 \beta \lambda_A }{r}}  e^{-  \frac{2 \norm{B} \beta \lambda_B}{r}} \\
            &= e^{-\frac{2 \beta \lambda }{r}}   ,
        \end{align}
        where in the last line $\norm{B}=1$ based by definition of QDrift, and $\lambda_A + \lambda_B = \lambda$. 

        Now combining all of our results we have the following:
        \begin{equation}
            \parens{\indone{\capU_B(\rho, \nicefrac{\beta}{r})   -  \bchan{\rho, \nicefrac{\beta}{r}}} + \indone{\capU_A(\rho, \nicefrac{\beta}{r})   -  \achan{\rho, \nicefrac{\beta}{r}}} + \indone{E_{A,B}(\nicefrac{\beta}{r})}} e^{\frac{2 \beta \norm{H} }{r}} \parens{1+ e^{\frac{4 \beta \lambda }{r}}}
        \end{equation}
        This term is a sum of the errors of each respective algorithm, with the addition of the partitioning error, multiplied by some exponential terms that come from lower bounding traces, which have been common throughout. Therefore, this is exactly what we might expect a priori. Going forward we wish to represent this error for the entire simulation rather than just a single time step, and then show that the exponential terms become irrelevant in the limit of interest. For the full time simulation we have
        \begin{equation}
            \indone{\parens{\frac{\evolchan{\rho, \nicefrac{\beta}{r}} }{\Tr \evolchan{\rho, \nicefrac{\beta}{r}} }}^{\circ r}- \parens{\frac{\mathcal{X}^{2k}(\rho, \nicefrac{\beta}{r})}{\Tr \mathcal{X}^{2k}(\rho, \nicefrac{\beta}{r})}}^{\circ r}} \leq r \indone{\frac{\evolchan{\rho, \nicefrac{\beta}{r}} }{\Tr \evolchan{\rho, \nicefrac{\beta}{r}} }- \frac{\mathcal{X}^{2k}(\rho, \nicefrac{\beta}{r})}{\Tr \mathcal{X}^{2k}(\rho, \nicefrac{\beta}{r})}} ,
        \end{equation}
        where the quantity on the right is that which has been bounded. Inserting the imaginary time QDrift bound from Theorem \ref{thm:imagQD}, and the single step Trotter bound from Equation \ref{eq:single_step_itime_trot} we have: 
        \begin{equation}
            \parens{\frac{C\beta^2 \lambda_B^2}{N_B r} + 2\Upsilon^{2k+1} \frac{\alpha_{comm}(A,2k)}{(2k+1)!} \frac{\beta^{2k+1}}{r^{2k}} e^{4\nicefrac{\beta}{r} \Upsilon \lambda_A}\parens{e^{4\nicefrac{\beta}{r} \lambda_A} e^{2\nicefrac{\beta}{r} \norm{A}} + e^{2\nicefrac{\beta}{r} \norm{A}}} + r \indone{E_{A,B}(\nicefrac{\beta}{r})}} e^{\frac{2 \beta \norm{H} }{r}} \parens{1+ e^{\frac{4 \beta \lambda }{r}}}. 
        \end{equation}
        Next we deal with the partitioning error $\indone{E_{A,B}(\nicefrac{\beta}{r})} = \indone{\evolchan{\rho, \nicefrac{\beta}{r}} - \mathcal{U}_B(\rho, \nicefrac{\beta}{r}) \circ \mathcal{U}_A(\rho, \nicefrac{\beta}{r})}$. We know from Theorem \ref{thm:imagTS} that $\indone{\evolchan{\beta, \rho} - \mathcal{U}_B(\rho, \nicefrac{\beta}{r}) \circ \mathcal{U}_A(\rho, \nicefrac{\beta}{r})} \leq 2 e^{\frac{\beta \lambda}{r}}\norm{\evolchan{\beta, \rho} - \mathcal{U}_B(\rho, \nicefrac{\beta}{r}) \circ \mathcal{U}_A(\rho, \nicefrac{\beta}{r})}_\infty$, then once again using the bound from Appendix E of  \cite{childs2021theory}, we have:
        \begin{equation}
            \parens{\frac{C\beta^2 \lambda_B^2}{N_B r} + 2\Upsilon^{2k+1} \frac{\alpha_{comm}(A,2k)}{(2k+1)!} \frac{\beta^{2k+1}}{r^{2k}} e^{4\nicefrac{\beta}{r} \Upsilon \lambda_A}\parens{e^{4\nicefrac{\beta}{r} \lambda_A} e^{2\nicefrac{\beta}{r} \norm{A}} + e^{2\nicefrac{\beta}{r} \norm{A}}} + \norm{[A,B]} e^{\frac{\beta \lambda}{r}} \frac{2\beta^2}{r}e^{\frac{4\beta \lambda}{r}}}  e^{\frac{2 \beta \norm{H} }{r}} \parens{1+ e^{\frac{4 \beta \lambda }{r}}}
        \end{equation}
        We are now left with an equation non-linear in $r$, which makes it more difficult to make asymptotic arguments similar to those in made Theorem \ref{thm:imagTS}. Since, we wish to eliminate the exponentials from this bound, we first upper-bound all exponentials by $e^{\nicefrac{n \Upsilon \beta \lambda}{r}}$ where we keep the coefficients $n$ for each of them. This allows us to collect the exponentials and simplify the expression to the following:

        \begin{equation}
            \parens{\frac{C\beta^2 \lambda_B^2}{N_B r} + 2\Upsilon^{2k+1} \frac{\alpha_{comm}(A,2k)}{(2k+1)!} \frac{\beta^{2k+1}}{r^{2k}} \parens{e^{10\nicefrac{\Upsilon \beta \lambda}{r}}  + e^{6\nicefrac{\Upsilon \beta \lambda}{r}}}+ \norm{[A,B]} \frac{2\beta^2}{r}e^{\nicefrac{5 \Upsilon \beta \lambda}{r}}}  \parens{e^{\nicefrac{2 \Upsilon \beta \lambda }{r}}  + e^{\nicefrac{6 \Upsilon \beta \lambda }{r}}}
        \end{equation}
    Now, we introduce the constraint that $r\geq \Upsilon \beta \lambda$. Imposing this constraint over $r$ allows all of the exponentials to be upper-bounded by a constant such that in the limit that all other parameters approach infinity, this bound becomes the following:
    \begin{equation}
        \indone{\frac{\evolchan{\rho, \beta} }{\Tr \evolchan{\rho, \beta} }- \frac{\mathcal{X}^{2k}(\rho, \beta)}{\Tr \mathcal{X}^{2k}(\rho, \beta)}} \in \bigo{\frac{\beta^2 \lambda_B^2}{N_B r} + \Upsilon^{2k+1} \frac{\alpha_{comm}(A,2k)}{(2k+1)!} \frac{\beta^{2k+1}}{r^{2k}}   + \norm{[A,B]} \frac{\beta^2}{r}},
    \end{equation}
    thus completing the proof. 
    \end{proof} 

\section{Numerical Error Analysis} \label{sec:appendix_error}
\subsection{Types of Error Measures}
 Here, we propose two different error measures and discuss the advantages associated with each.

\subsubsection{Infidelity of Quantum States} \label{sec:infidelity}
The fidelity is a quantity that serves as a measure of the closeness between two quantum states, of which can be either pure or mixed states and is summarized in the following definition.

\begin{definition}[Fidelity] 
    Let $\rho, \sigma \in \mathbb{C}^{2^n \times 2^n}$ be density matrices that represent quantum states with equal Hilbert space dimension. The fidelity between the states is then defined as: 
    \begin{equation}
        \fidel(\rho \sigma) := \Tr\brackets{\sqrt{\sqrt{\rho}\sigma\sqrt{\rho}}}^2 \\
    \end{equation}
    Where $\fidel(\rho, \sigma) \in [0,1]$ and $\fidel(\rho, \sigma) = \fidel(\sigma, \rho)$.
\end{definition}
The infidelity is then simply defined as $1-\fidel$ and represents a measure of distance between two states with the latter properties in the definition still holding. The computational advantage of using infidelity over some other measures is that it is relatively cheap to compute, especially for large systems. This does not appear obvious on first glance given the need to compute 2 matrix square roots which often involves diagonalization. However, we can take advantage of a nice property: $\fidel = \Tr(\rho, \sigma)$ if one of $\rho$ or $\sigma$ is pure. A simple proof is provided in the following lemma.

\begin{lemma}[Pure-Non-Pure Fidelity Property] \label{lem:PureNonPure}
Given a two density matrices $\rho$ and $\sigma$ where at least one of which is pure, then the expression for the fidelity simplifies to $\fidel(\rho, \sigma) = \Tr(\rho \sigma)$.
\begin{proof}
\begin{align}
    \fidel(\rho, \sigma) :=& \Tr\brackets{\sqrt{\sqrt{\rho}\sigma\sqrt{\rho}}}^2 \\
    &= \Tr\brackets{\sqrt{\sqrt{\ketbra{\psi}{\psi}}\sigma\sqrt{\ketbra{\psi}{\psi}}}}^2 \\
    &= \Tr\brackets{\sqrt{\ketbra{\psi}{\psi} \sigma \ketbra{\psi}{\psi}}}^2 \\
    &= \Braket{\psi}{\sigma}{\psi}\Tr(\ketbra{\psi}{\psi}) \\
    &= \Tr(\rho \sigma)
\end{align}  
Here we choose $\rho$ to be a pure state $\ketbra{\psi}{\psi}$ which is a projector and thus $ \rho = \rho^2 = \sqrt{\rho}$. We use these properties along with the fact that the trace of a pure state is equal to one to complete the proof. Note, $\fidel$ is also known to be symmetric $\fidel(\rho, \sigma) = \fidel(\sigma, \rho)$ but we do not show this here. This implies the statement is true for either $\rho$ or $\sigma$ pure.
\end{proof}
\end{lemma}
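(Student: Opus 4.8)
The plan is to reduce everything to the rank-one structure that a pure state forces. Without loss of generality I would take $\rho = \ketbra{\psi}{\psi}$ to be the pure state, since the fidelity is symmetric in its arguments and so the case where $\sigma$ is the pure one follows by relabelling. The single fact I would exploit is that a pure state is a rank-one projector, hence idempotent: $\rho^2 = \rho$, which immediately gives $\sqrt{\rho} = \rho = \ketbra{\psi}{\psi}$.

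Next I would substitute this into the argument of the outer square root appearing in the definition $\fidel(\rho,\sigma) = \Tr\brackets{\sqrt{\sqrt{\rho}\,\sigma\,\sqrt{\rho}}}^2$. This collapses the inner operator to $\ketbra{\psi}{\psi}\,\sigma\,\ketbra{\psi}{\psi} = \Braket{\psi}{\sigma}{\psi}\,\ketbra{\psi}{\psi}$, i.e. a scalar multiple of the projector $\ketbra{\psi}{\psi}$. Here the scalar $\Braket{\psi}{\sigma}{\psi}$ is guaranteed to be non-negative because $\sigma$ is a positive semi-definite density matrix, which is the small point I would make explicit since it is what licenses taking a real square root.

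The key computational step is then the square root of this rank-one operator. Because $c\,\ketbra{\psi}{\psi}$ with $c \geq 0$ has $\sqrt{c}\,\ketbra{\psi}{\psi}$ as its unique positive semi-definite square root (the projector is idempotent, so squaring $\sqrt{c}\,\ketbra{\psi}{\psi}$ returns $c\,\ketbra{\psi}{\psi}$), I obtain $\sqrt{\sqrt{\rho}\,\sigma\,\sqrt{\rho}} = \sqrt{\Braket{\psi}{\sigma}{\psi}}\,\ketbra{\psi}{\psi}$. Taking the trace and using $\Tr\ketbra{\psi}{\psi} = 1$ gives $\sqrt{\Braket{\psi}{\sigma}{\psi}}$, and squaring yields $\Braket{\psi}{\sigma}{\psi}$. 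Finally I would rewrite this expectation value as a trace, $\Braket{\psi}{\sigma}{\psi} = \Tr\brackets{\sigma\,\ketbra{\psi}{\psi}} = \Tr\brackets{\rho\sigma}$, which completes the argument.

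There is no substantial obstacle here; the only place demanding any care is the square-root step, where one must justify that the inner operator is genuinely rank-one and that its non-negative prefactor permits the clean closed form $\sqrt{c}\,\ketbra{\psi}{\psi}$ rather than requiring a full spectral decomposition. Everything else is bookkeeping with the idempotence of the projector and the cyclicity of the trace.
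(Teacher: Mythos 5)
Your proposal is correct and follows essentially the same route as the paper's own proof: substitute the pure state, use idempotence of the projector to eliminate both square roots of $\rho$, reduce the inner operator to the rank-one form $\Braket{\psi}{\sigma}{\psi}\ketbra{\psi}{\psi}$, and invoke symmetry of $\fidel$ to cover the case where $\sigma$ is the pure state. Your version is slightly more careful in making explicit that $\Braket{\psi}{\sigma}{\psi}\geq 0$ and that the positive square root of $c\,\ketbra{\psi}{\psi}$ is $\sqrt{c}\,\ketbra{\psi}{\psi}$, steps the paper compresses into a single line, but the argument is the same.
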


The important part of this lemma is that we only require one of the density matrices to be a pure state. To evaluate the error of a composite channel, we will be taking the infidelity between channel outputs, and due to the nature of QDrift the channel will always output a mixed state. However, the exact evolution channel providing the means for comparison only produces pure states given pure state input. With these definitions and properties we can then present the following strategy: Reformulate the fidelity expression such that it can be written as an expectation value that only requires matrix vector multiplication to compute, and then use Monte-Carlo sampling to approximate this quantity. This is shown via the following derivation: 

\begin{align}
    \fidel(\capU(\rho, t), \mathcal{X}^{2k}(\rho, t)) &= \Tr(e^{-iHt}\ketbra{\psi}{\psi}e^{iHt} e^{-iAt}e^{-iBt}\ketbra{\psi}{\psi}e^{iBt}e^{iAt}) \\
    &= \Tr(e^{-iHt}\ketbra{\psi}{\psi}e^{iHt} \trotterchan{2k}{-iAt} \sum_j P_j e^{-iB_j\tau} \ketbra{\psi}{\psi} e^{iB_jt} \trotterchan{2k}{-iAt}^\dagger) \\
    &= \sum_j P_j \Tr(\bra{\psi} e^{iB_jt} \trotterchan{2k}{-iAt}^\dagger e^{-iHt}\ketbra{\psi}{\psi} e^{-iHt} \trotterchan{2k}{-iAt}e^{-iB_j\tau} \ket{\psi}) \\ 
    &= \sum_j P_j |\bra{\psi} e^{iB_j\tau} \trotterchan{2k}{-iAt}^\dagger e^{-iHt} \ket{\psi}|^2 \\
    &= \expect{|\bra{\psi} e^{iB_j\tau} \trotterchan{2k}{-iAt}^\dagger e^{-iHt} \ket{\psi}|^2} \\
    & \approx \frac{1}{N} \sum_j^N |\bra{\psi} e^{iB_j\tau} \trotterchan{2k}{-iAt}^\dagger e^{-iHt} \ket{\psi}|^2 \label{derivation}
\end{align}

Now when numerically evaluating the error, only matrix-vector multiplication is needed as opposed to matrix-matrix before. Here we no longer need to build exact density matrices outputted by the composite channel, and the step involving the matrix square root has been eliminated. However, the trade-off is that we have to use Monte-Carlo sampling to estimate this expectation value. Given that the naive methods for matrix-vector and matrix-matrix multiplication scale like $\bigo{n^2}$ and $\bigo{n^3}$ respectively, for small systems the overhead of Monte-Carlo sampling may lead to a loss of any computational advantage. However, for large systems where $n^3 \gg n^2$, this sampling method is quite advantageous. It is also important to note that sampling makes QDrift channel much cheaper to study as well. For a QDrift channel with $L$ terms and $N$ samples, building the exact channel yields additional matrix multiplication overhead $\bigo{NL}$ (not considering arithmetic overhead). While this is does not seem prohibitive, it does become expensive when we have a Hamiltonian with many terms that we wish to sample many times for a high degree of accuracy. It should be noted, however, that this is not as bad as it initially seems given that there are $L^N$ terms QDrift channel sum. The numerical routine that computes this sum was optimized to achieve this, by using the fact that channel samples are i.i.d. With intermediate summation to produce a new input $\rho$ after each sample vector, we do not need to compute each term in this sum individually:

\begin{equation}
    \sum_j p_j e^{-iH_j t} \rho_i e^{iH_j t} \rightarrow \rho_{i+1}
\end{equation}
repeating this procedure $N$ times until $\rho_i \rightarrow \rho_N$ which constructs the exact QDrift channel output with $\bigo{NL}$ matrix multiplications. We mention this to highlight the advantages of Monte-Carlo sampling where it can be done reliably, as well as to highlight how QDrift is numerically computed exactly in our work (which will be made necessary by results in the following section). 

\subsubsection{Trace Distance over Density Matrices} \label{sec:tracedist} 
The next error measure we wish to investigate is the trace distance, which is equivalent to the Schatten 1-norm. The trace distance is a metric over the space of density matrices that can be defined in the following way:

\begin{definition}[Trace Distance] \label{def:qdrift_channel}
    Let $\rho, \sigma \in \mathbb{C}^{2^n \times 2^n}$ be density matrices that represent quantum states with equal Hilbert space dimension. The trace distance between the states is then defined as: 
    \begin{equation}
        \tracedist(\rho, \sigma) := \norm{\rho - \sigma}_1 =  \Tr\brackets{\sqrt{(\rho-\sigma)(\rho-\sigma)^\dagger}} \\
    \end{equation}
    Where $\tracedist(\rho, \sigma) \in [0,1]$.
\end{definition}

 Given that $\rho$ and $\sigma$ are necessarily hermitian, the trace distance simplifies to $\tracedist(\rho, \sigma) = \sum_i \abs{\lambda_i}^2$ where $\lambda_i$ are eigenvalues of the matrix $(\rho - \sigma)$. There are no simplifications in our approach to computing this numerically. Here, we must compute the exact density matrix output of the composite channel (a  necessarily mixed state outputted by the QDrift channel component) and the output of the unitary channel. Exact diagonalization is performed to obtain the eigenvalues, and the sum above is then computed. Providing a rigorous argument for crossover points in the compute time between the fidelity and trace distance (as a function of $N, L$) is beyond the scope of this paper, however, given the arguments from the previous section our intuition is that for small systems the trace distance will be computable in a reasonable amount of time whereas Monte-Carlo sampling the infidelity will become more advantageous as the system size grows.

\subsection{Comparison of Error Measures}
In the following section we provide short proofs regarding how the error of each of the QDrift and Trotter-Suzuki algorithms scale with time, both in the infidelity and trace distance framework. It is important to note that we are specifically using the Schatten 1-norm over density matrices and not an induced 1-norm over channels. Therefore, we are computing this quantity using the outputs of the two channels, and the same goes for the infidelity which is only defined on the density matrices that are the outputs of the channels in question. We begin with the infidelity:

 \qdinfscaling*
\begin{proof}
\begin{align}
    1- \fidel(\qdchan{\rho, t}, \evolchan{\rho, \nicefrac{t}{N}}) :=& 1- \Tr(\qdchan{\rho, t} \capU(\rho, \nicefrac{t}{N})) \\
    =& 1- \Tr(\sum_j P_j e^{-i H_j \tau} \ketbra{\psi}{\psi} e^{i H_j \tau} e^{-i H t/N} \ketbra{\psi}{\psi} e^{i H t/N}) \\
    =& 1- \sum_j P_j \bra{\psi} e^{i H t/N}e^{-i H_j \tau} \ketbra{\psi}{\psi} e^{i H_j \tau} e^{-i H t/N} \ket{\psi} \\
    =& 1- \sum_j P_j |\bra{\psi}(\openone - i H t/N - \frac{H^2 t^2/N^2}{2!} + ...)(\openone + i H_j \tau - \frac{H_j^2 \tau^2}{2!} + ...)\ket{\psi}|^2 \label{eq:QDriftcomm} \\ 
    =& 1- \sum_j P_j |\bra{\psi}(\openone - i H t/N + i H_j \tau + H H_j t \tau/N - \frac{H^2 t^2/N^2}{2!} - \frac{H_j^2 \tau^2}{2!} + ...)\ket{\psi}|^2 \\
    =& 1- \sum_j P_j |\bra{\psi}(\openone - i H t/N + i H_j t \lambda/N + H H_j t^2 \lambda/N^2  \\
    \text{      } &- \frac{H^2 t^2/N^2}{2!} - \frac{H_j^2 t^2 \lambda^2 /N^2}{2!} + ...)\ket{\psi}|^2 \\
    =& 1- \sum_j P_j (1 - 2 \Braket{\psi}{H}{\psi}\Braket{\psi}{H_j}{\psi} t^2 \lambda/N^2 + \Braket{\psi}{H}{\psi}^2 t^2/N^2 + \Braket{\psi}{H_j}{\psi}^2 t^2 \lambda^2/N^2 + c t^2 ...) \\
    =& t^2 \sum_j P_j (2 \Braket{\psi}{H}{\psi}\Braket{\psi}{H_j}{\psi} \lambda/N^2 - \Braket{\psi}{H}{\psi}^2/N^2 - \Braket{\psi}{H_j}{\psi}^2 \lambda^2/N^2 - c ...) \\
    \leq & t^2  \sum_j P_j (|2 \Braket{\psi}{H}{\psi}\Braket{\psi}{H_j}{\psi}| \lambda/N^2 + |\Braket{\psi}{H}{\psi}^2|/N^2 + |\Braket{\psi}{H_j}{\psi}^2| \lambda^2/N^2 + |c| ...) \in \bigo{t^2}
\end{align}  
We use the cyclic property of the trace and the fact that $\rho$ is a pure state $\ketbra{\psi}{\psi}$ and then replace the operator exponentials with their respective Taylor series expansions, keeping terms only up to $t^2$ throughout. In the final step, the modulus is taken and the triangle inequality applied to upper bound the sum. This is done for the purpose of ensuring the signs of the terms are ``well behaved" such that $\bigo{}$ notation can be aptly applied. As well, $c$ is used for brevity where $c = \Braket{\psi}{(H H_j \lambda/N^2 - \frac{H^2 }{2!}/N^2 - \frac{H_j^2 \lambda^2 /N^2}{2!})}{\psi} + h.c $. Here the notation $\bigo{t^2}$ is understood in the limit as $t \rightarrow 0$.
\end{proof}

Given the diamond distance bound in \cite{campbell2019random}, the $\bigo{t^2}$ infidelity scaling is not surprising. Not only this, but recall from Equation \ref{eq:QDriftcomm}, where the largest order in commutator error from the product formula was quadratic, and given that QDrift applies no symmetrization strategy, we did not expect this to disappear. We would like to remind the reader that the shining feature of QDrift is that the cost is independent of the number of terms in the Hamiltonian. Next, we examine how a Trotter-Suzuki channel behaves in the infidelity measure:

\TSfidelity*
\begin{proof}
\begin{align}
    1- \fidel(\tschan{2k}{\rho, t},  \capU(\rho, t)) :=& 1- \Tr(\tschan{2k}{\rho, t} \evolchan{\rho, t}) \\
    =& 1- \Tr( \trotterchan{2k}{-i H t} \ketbra{\psi}{\psi} \trotterchan{2k}{-i H t}^\dagger e^{-i H t} \ketbra{\psi}{\psi} e^{i H t}) \\
    =& 1- \bra{\psi} e^{i H t} \trotterchan{2k}{-i H t} \ketbra{\psi}{\psi} \trotterchan{2k}{-i H t}^\dagger e^{-i H t} \ket{\psi} \\
    =& 1- |\bra{\psi} e^{i H t} \trotterchan{2k}{-i H t} \ket{\psi}|^2 \\
    =& 1-|\bra{\psi} (\trotterchan{2k}{-i H t}^\dagger \pm i t^{2k+1} R_{2k+1} \mp i t^{2k+3} R_{2k+3} + ...) \trotterchan{2k}{-i H t} \ket{\psi}|^2 \\
    =& 1-|\bra{\psi} \openone + \pm i t^{2k+1} R_{2k+1} \trotterchan{2k}{-i H t} \ket{\psi} \mp i t^{2k+3} R_{2k+3} \trotterchan{2k}{-i H t} \ket{\psi} + ...|^2  \\
    =& 1- |1 \pm i t^{2k+1}\Braket{\psi}{R_{2k+1} \trotterchan{2k}{-i H t}}{\psi} \mp i t^{2k+3} \Braket{\psi}{R_{2k+3} \trotterchan{2k}{-i H t}}{\psi} \pm ...|^2  \\
    \leq & 1- |1|^2 +  (t^{2k+1})^2|\Braket{\psi}{R_{2k+1} \trotterchan{2k}{-i H t}}{\psi}|^2 + ... \\
    =& (t^{2k+1})^2|\Braket{\psi}{R_{2k+1} \trotterchan{2k}{-i H t}}{\psi}|^2 + ... \in \bigo{(t^{2k+1})^2} 
\end{align}  
We again use the cyclic property of the trace and the fact that $\rho$ is a pure state $\ketbra{\psi}{\psi}$. The evolution operator $e^{-iHt}$ is rewritten in terms of the product formula and corrections at significant orders. $R_n$ are remainder operators of order $n$ of which we keep to order $2k+3$. In the second last step, we use the triangle inequality and ignore terms of higher order. Keeping higher order terms is unnecessary in that we cannot use them to generate terms of order $\leq (t^{2k+1})^2$. Once again, $\bigo{(t^{2k+1})^2}$ is understood in the limit as $t \rightarrow 0$.
\end{proof}

This theorem provides a peculiar result. The expected error in the output of a Trotter-Suzuki channel from section \ref{sec:imag_TS} is squared by the nature of the infidelity calculation. Given that $\epsilon^2 \leq \epsilon \leq 1$, error squaring can lead to a much lower gate cost, and given that Trotter experiences this phenomenon while QDrift does not, a numerical optimizer may heavily favour Trotter and the potential advantages from the composite approach will likely fade. This is the main issue with choosing infidelity as an error measure whilst hybridizing algorithms with different error scaling. The following two theorems show that the trace distance suffers no such side effect:

\QDtracedist*
\begin{proof}
\begin{align}
    \tracedist(\qdchan{\rho, t}, \capU(\rho, \nicefrac{t}{N})) :=& \Tr\brackets{\sqrt{(\qdchan{\rho, t}- \capU(\rho, \nicefrac{t}{N}))(\qdchan{\rho, t}- \capU(\rho, t/N))^\dagger}} \\
    =& \Tr\brackets{\sqrt{(\qdchan{\rho, t}- \capU(\rho, \nicefrac{t}{N}))^2}} \\
    =& \Tr\brackets{\sqrt{((\sum_j P_j e^{-i H_j \tau} \rho e^{i H_j \tau})-(e^{-i H t/N} \rho e^{i H t/N}))^2}}  \\
    =& \Tr\brackets{\sqrt{((\sum_j P_j (\openone - i H_j \tau - H_j^2 \tau^2 +...) \rho (\openone +i H_j\tau -H_j^2 \tau^2 + ... )-(e^{i H t/N} \rho e^{-i H t/N}))^2}} \\
    =& \Tr\brackets{\sqrt{(\sum_j P_j (\rho + i\lambda t/N [H_j, \rho] + \lambda^2 t^2/N^2(H_j \rho H_j - i \{H_j, \rho\}) + ...)-(e^{i H t/N} \rho e^{-i H t/N}))^2}} \\
    =& \Tr\brackets{\sqrt{((\rho + \sum_j \frac{h_j}{\lambda} (i\lambda t/N [H_j, \rho] + \lambda^2 t^2/N^2(H_j \rho H_j - i \{H_j, \rho\}) + ...)-(e^{i H t/N} \rho e^{-i H t/N}))^2}} \\
    =& \Tr\brackets{\sqrt{(\rho +  i t/N [H, \rho] + \sum_j \frac{h_j}{\lambda}(\lambda^2 t^2/N^2(H_j \rho H_j - i \{H_j^2, \rho\}) + ...)-(e^{i H t/N} \rho e^{-i H t/N}))^2}} \\
    =& \Tr\brackets{\sqrt{\sum_j \frac{h_j}{\lambda}(\lambda^2 t^2/N^2(H_j \rho H_j - i \{H_j^2, \rho\}  ...) + (t^2 (H\rho H + i\{H^2, \rho\}) + ...))^2}} \\
    =& t^2 \Tr\brackets{\sqrt{(R(t))^2}} \in \bigo{t^2}
\end{align}  
Here we expand the Qdrift channel with the Taylor series of the exponential operators and show that the terms of this series cancel with that of the Taylor series of $\capU_N$ up to order $t^2$. To achieve this we simply use the definitions of $P_j = h_j / \lambda$ and $\sum_j H_j = H$. In the last line $R(t)$ is considered a remainder operator which is only a function of polynomials $t^n, n\geq 0$. The notation $\bigo{t^2}$ is understood in the limit as $t \rightarrow 0$.
\end{proof}

\TStracedist*

\begin{proof}
\bea
    &&\tracedist(\tschan{2k}{\rho, t}, \capU(\rho, t)) :=\Tr\brackets{\sqrt{(\tschan{2k}{\rho, t}- \capU(\rho, t))(\tschan{2k}{\rho, t}- \capU(\rho, t))^\dagger}} \\
    &&= \Tr\brackets{\sqrt{(\tschan{2k}{\rho, t}- \capU(\rho, t))^2}} \\
    &&= \Tr\brackets{\sqrt{((\trotterchan{2k}{-iHt} \rho \trotterchan{2k}{-iHt}^\dagger)-(e^{-i H t} \rho e^{i H t}))^2}}  \\
    &&= \Tr\brackets{\sqrt{((\trotterchan{2k}{-iHt} \rho \trotterchan{2k}{-iHt}^\dagger)-((\trotterchan{2k}{-i H t}^\dagger \pm i t^{2k+1} R_{2k+1} + ...) \rho (\trotterchan{2k}{-i H t}^\dagger \mp i t^{2k+1} R_{2k+1}^\dagger + ...))^2}}\nonumber\\ \\
    &&= \Tr\brackets{\sqrt{((\mp it^{2k+1} \trotterchan{2k}{-iHt}\rho R^\dagger_{2k+1} \pm it^{2k+1} R_{2k+1} \trotterchan{2k}{-iHt}^\dagger + ...))^2}} \\
    &&= t^{2k+1}\Tr\brackets{\sqrt{(R(t))^2}} \in \bigo{t^{2k+1}}
\eea
Here expand the exponential operators in the evolutionary channel $\capU$ in terms of their error terms for some general $2k$th order Trotter-Suzuki channel, where $R_n$ are remainder operators of order $n$. As the algebra proceeds we only keep terms of order $\leq t^{2k+1}$ throughout. Keeping higher order terms is unnecessary in that we cannot use them to generate terms of order $\leq t^{2k+1}$. In the last line $R(t)$ is considered a general remainder operator which is only a function of polynomials $t^n, n\geq 0$. Once again, $\bigo{t^{2k+1}}$ is understood in the limit as $t \rightarrow 0$.
\end{proof}

\begin{figure}[h!]
    \centering
        \begin{subfigure}[b]{.32\textwidth}
            \includegraphics[width=1\textwidth]{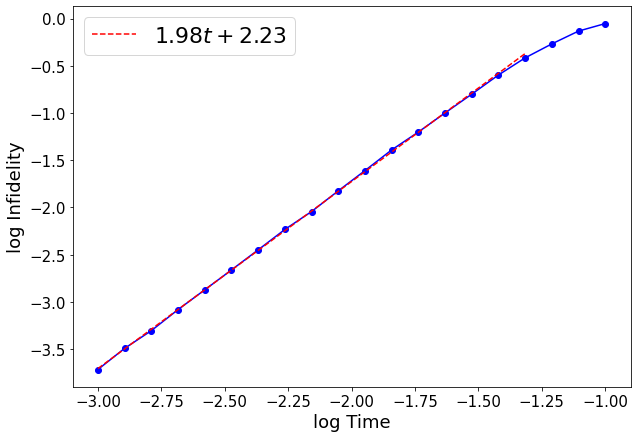}
            \caption{QDrift Infidelity}
        \end{subfigure}
        \begin{subfigure}[b]{.32\textwidth}
            \includegraphics[width=1\textwidth]{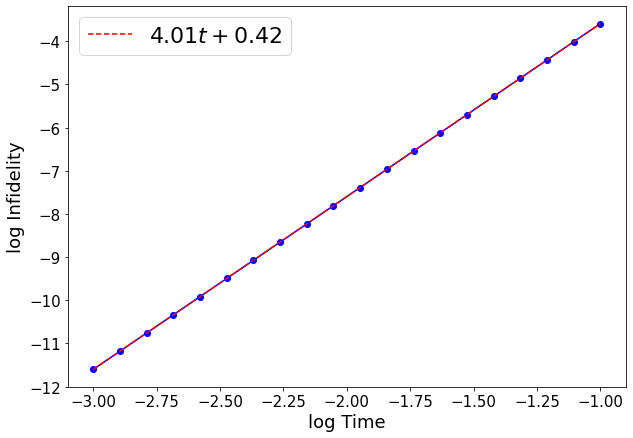}
            \caption{Trotter Infidelity}
        \end{subfigure}
        \begin{subfigure}[b]{.32\textwidth}
            \includegraphics[width=1\textwidth]{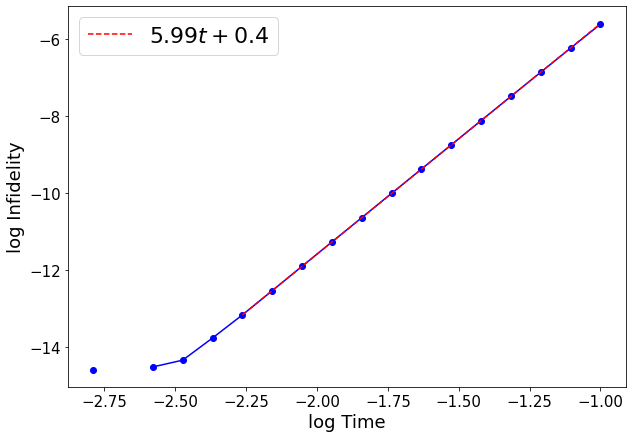}
            \caption{TS Order $2k=2$ Infidelity}
        \end{subfigure}
        \begin{subfigure}[b]{.32\textwidth}
            \includegraphics[width=1\textwidth]{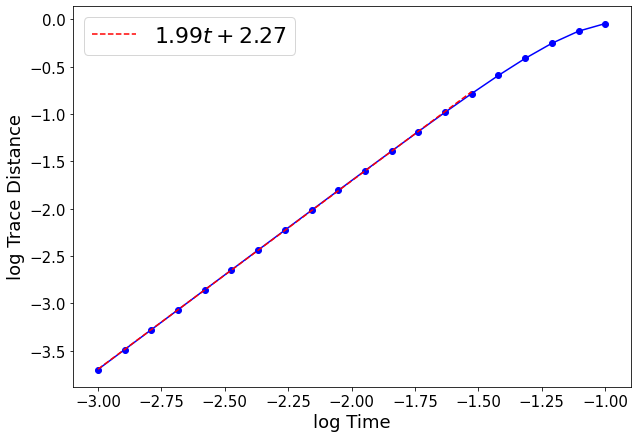}
            \caption{QDrift Trace Distance}
        \end{subfigure}\begin{subfigure}[b]{.32\textwidth}
            \includegraphics[width=1\textwidth]{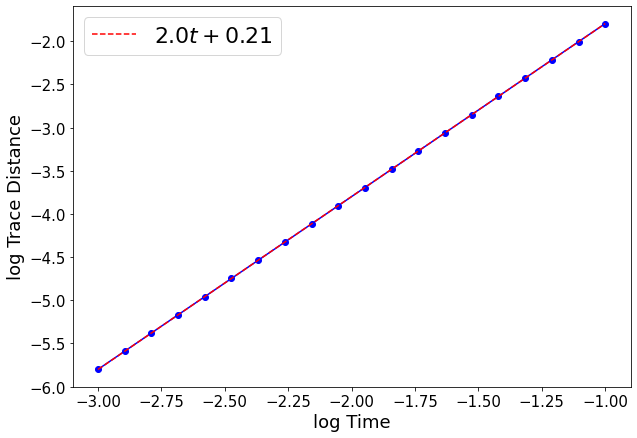}
            \caption{Trotter Trace Distance}
        \end{subfigure}\begin{subfigure}[b]{.32\textwidth}
            \includegraphics[width=1\textwidth]{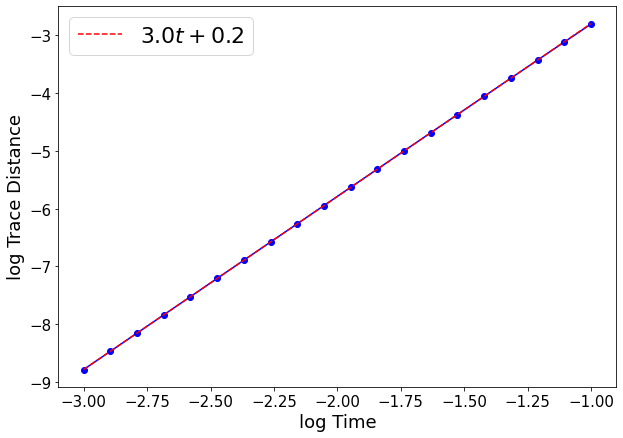}
            \caption{TS order $2k=2$ Trace Distance}
        \end{subfigure}
        \caption{Numerical plots of log infidelity and trace distance scaling with log time for both Trotter-Suzuki and QDrift channels. Plots are all constructed using 1 iteration or sample respectively. Note that the slope of the plots follow the analytical predictions exactly, with the infidelity squaring the expected Trotter error in (b) and (c). Note that in (c) we run into machine precision error at small times and for long times in (a) and (d) QDrift no longer converges. The plots were compute using 100 Monte-Carlo samples for QDrift Infidelity, and exact density matrices elsewhere for a Graph Hamiltonian with 5 spins.} \label{fig:error}
\end{figure}

There are no surprises with the trace distance error as it yields the expected time scaling from the results of section \ref{sec:tracedist}. Numerical investigations also support our proofs (see Figure \ref{fig:error}). Therefore, while it may be more expensive to compute for larger systems, this makes it favourable to work with as it provides a more ``fair" measure, treating the algorithms on more equal footing. Thus, it is expected that the composite channel will appear more performant using this metric. In addition, using this framework requires no sampling, which means that monotonicity of the cost will be guaranteed with respect to channel iterations. This will prove useful in the cases in which we choose to optimize over possible partitions with respect to the cost. Now, out of interest, we also wish to numerically investigate the trace distance of the composite channel. Analytically, this is likely a messy problem given the partition, but intuitively we expect the scaling to be some linear combination $\alpha t^2 + \beta t^3 +\gamma$ with the slope of the log plot being between 2 and 3 (see Figure \ref{fig:elbow}).
\begin{figure}[h!]
    \centering
        \includegraphics[width=0.65\textwidth]{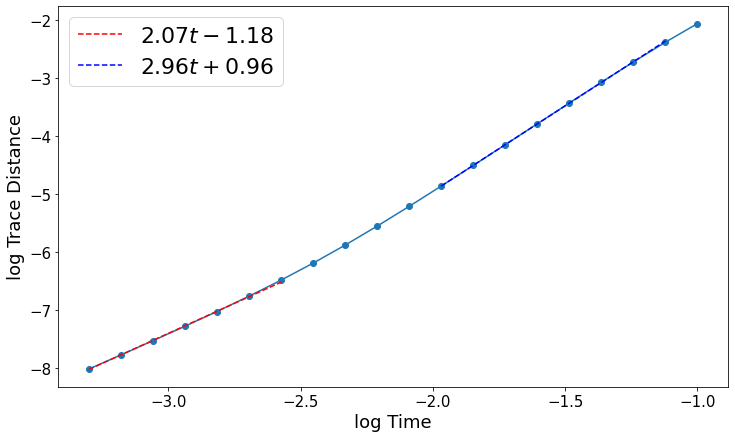}
        \caption{Trace distance time scaling for a composite channel in inner order $2k=2$. This plot emits an interesting structure in that the time dependence of the trace distance goes through a ``phase transition" between QDrift (red) and 2nd order Trotter-Suzuki (blue) dominated regions. Given that QDrift excels in short time simulations and whereas the higher order Trotter-Suzuki dominates at longer simulation times this is not entirely unexpected.} \label{fig:elbow}
\end{figure}
\end{widetext}
\end{document}